\def\subsection{\@startsection{subsection}{2}%
	\z@{.5\linespacing\@plus.5\linespacing}{.25\linespacing}%
	{\normalfont\bfseries}}
\def\subsubsection{\@startsection{subsubsection}{3}%
	\z@{.5\linespacing\@plus.5\linespacing}{.25\linespacing}%
	{\normalfont\itshape}}
\def\@settitle{\begin{center}%
  \baselineskip14\p@\relax
	\normalfont\LARGE\scshape  
  \@title
  \end{center}%
}
\date{\today}
\DeclareRobustCommand{\officialeuro}{%
	\ifmmode\expandafter\text\fi
	{\fontencoding{U}\fontfamily{eurosym}\selectfont e}}
\DeclareMathAlphabet{\pazocal}{OMS}{zplm}{m}{n}
\let\mathcal\undefined
\DeclareMathAlphabet{\mathcal}{OMS}{cmsy}{m}{n}
\newtheorem{assumption}{Assumption}
\newtheorem{theorem}{Theorem}
\newtheorem{lemma}{Lemma}
\newtheorem{proposition}{Proposition}
\newtheorem{remark}{Remark}
\newtheorem{corollary}{Corollary}
\newif\ifarxiv 
\title
[V.\ Rostampour]
{Stabilising Lifetime PD Models under Forecast Uncertainty}
\author[V.\ Rostampour]{Vahab Rostampour}
\thanks{The author is a Senior Quantitative Risk Analyst practitioner at top-tier banks, and a member of the IFRS Sustainability Reference Group (SRG).. {\tt https://vahabr.github.io/}
}
\date{\today}
\begin{document} 

\ifarxiv

\begin{abstract}
Estimating lifetime probabilities of default (PDs) under IFRS~9 and CECL requires projecting point--in--time transition matrices over multiple years. 
A persistent weakness is that macroeconomic forecast errors compound across horizons, producing unstable and volatile PD term structures. 
This paper reformulates the problem in a state--space framework and shows that a direct Kalman filter leaves non--vanishing variability. 
We then introduce an anchored observation model, which incorporates a neutral long--run economic state into the filter. 
The resulting error dynamics exhibit asymptotic stochastic stability, ensuring convergence in probability of the lifetime PD term structure.
Simulation on a synthetic corporate portfolio confirms that anchoring reduces forecast noise and delivers smoother, more interpretable projections.
\end{abstract}

\maketitle


\section{Introduction}\label{introduction}

Forward--looking estimation of lifetime probabilities of default (PDs) is a core requirement of IFRS~9 \cite{IASB2014} and the Current Expected Credit Loss (CECL) framework \cite{FASB2016}. 
In practice, lifetime PDs are obtained by propagating a point--in--time (PIT) transition matrix over multiple quarterly horizons, starting from the current rating distribution \cite{Basel2000,Tasche2019}. 
Transition probabilities are conditioned on macroeconomic variables and calibrated to scenario forecasts provided by risk or economic research teams \cite{ECB2017}. 
Since forecasts are typically available only over limited horizons (e.g.\ five years), models assume a neutral long--run environment beyond this window and let the PD term structure converge towards through--the--cycle (TTC) behaviour \cite{EBA2017,BCBS2015}.

Macroeconomic drivers such as GDP growth or unemployment are inherently stochastic and subject to significant forecast error \cite{CroushoreStark2001}. 
Scenario projections provide plausible narratives but cannot anticipate shocks or structural breaks \cite{ReifschneiderTulip2017}. 
The COVID--19 crisis illustrates this challenge: forecasts projected a protracted downturn, while realised conditions recovered much faster \cite{OECD2020,IMF2021}. 
When such forecasts are fed into transition matrices, their errors embed directly in the PD propagation.  

The result is that long--horizon PDs inherit the full uncertainty of the scenario. 
Forecast errors are recycled quarter after quarter, producing volatility and drift in the term structure. 
This effect is most pronounced during stress, when forecasts deviate strongly from realised macro paths. 
The outcome is excessive volatility in lifetime PDs, complicating capital planning and undermining the interpretability of credit loss estimates \cite{EBA2020,BIS2021}.  

Several methods attempt to mitigate this issue by smoothing macro inputs. 
Scenario averaging \cite{Borio2014,EBA2017}, moving averages \cite{Orphanides2002,Hamilton1994}, and exponential smoothing \cite{Kreinin2001} reduce short--term noise but do not address the dynamic propagation of forecast error. 
In parallel, credit transition modelling has drawn on latent factor structures \cite{JarrowLandoTurnbull1997,DuffieSingleton1999,Lando2004} and Bayesian smoothing of transition matrices \cite{Kavvathas2001,Christensen2004}. 
These methods improve local estimation but do not resolve the long--horizon instability of PD propagation under forecast uncertainty.  

This paper makes two contributions. 
First, it establishes a rigorous instability result for multi--period propagation of PIT transition matrices under forecast uncertainty, showing that conventional schemes do not converge. 
Second, it introduces an anchored Kalman filter that incorporates a neutral long--run macro state into the update step. 
Anchoring prevents forecast error recycling and yields asymptotically stochastic stability: the estimation error remains mean--square bounded, and the PD term structure converges in probability to a meaningful TTC limit. 
The framework therefore links control--theoretic recursive estimation \cite{Kalman1960,Maybeck1979,Simon2006} with regulatory credit risk modelling under IFRS~9 and CECL.  

The remainder of the paper is organised as follows. 
Section~\ref{sec:problem} formulates lifetime PD estimation in a Markovian state--space framework and demonstrates instability under standard macro--driven propagation. 
Section~\ref{sec:formulation} introduces a Kalman filtering representation of the macroeconomic state and shows that naïve filtering does not eliminate error recycling. 
Section~\ref{sec:proposal} develops the anchored Kalman filter, derives its error dynamics, and proves asymptotic stochastic stability. 
Section~\ref{sec:simulation} reports simulation results on a synthetic corporate portfolio. 
Section~\ref{sec:final} concludes and outlines directions for future research.


\section{Problem Formulation}\label{sec:problem}

We now formulate the lifetime PD estimation problem as a discrete--time state--space system. 
The starting point is a TTC transition matrix describing long--run rating migrations. 
A macroeconomic driver is then introduced to generate PIT matrices that condition on forecasted scenarios. 
Propagating the rating distribution forward with these PIT matrices produces a 
lifetime PD term structure. 

While this setup reflects current industry and regulatory practice, it embeds macroeconomic forecast errors into every propagation step. 
Section~\ref{sec:instable} shows that such errors accumulate and prevent convergence, leading to unstable lifetime PD estimates. 
This motivates the filtering approach developed in subsequent sections.

\subsection{TTC Transition Matrix}\label{sec:ttcMatrix}

The TTC transition matrix summarises long--run average rating behaviour. 
It is typically estimated from historical observations using the cohort, or counting, method. 
This is the most widely used estimator in both academia and practice \cite{Nickell2000,LandoSkodeberg2002}. 
Alternative approaches include maximum likelihood and continuous--time Markov generators \cite{Kavvathas2001,Israel2001,Bluhm2016}, as well as Bayesian or smoothing techniques that provide confidence intervals and regularisation \cite{Christensen2004}. 
For the purpose of this paper we adopt the cohort method as baseline.

Formally, let $X_t \in \{1,\dots,K\}$ denote the discrete rating of an obligor at quarter $t$, where $K$ includes the default state. 
The TTC matrix $P^{\mathrm{TTC}} \in \mathbb{R}^{K \times K}$ is row--stochastic with entries
\[
p^{\mathrm{TTC}}_{ij}
=\frac{N_{ij}}{\sum_{j=1}^{K} N_{ij}},
\qquad
i,j \in \{1,\dots,K\},
\]
where $N_{ij}$ is the number of obligors that migrated from state $i$ to $j$ during the sample period. 
By construction, $p^{\mathrm{TTC}}_{ij} \ge 0$ and $\sum_{j=1}^{K} p^{\mathrm{TTC}}_{ij} = 1$ for all $i$.

\subsection{PIT Adjustment and Macroeconomic Driver}\label{sec:pitMatrix}

The TTC matrix reflects long--run average behaviour but does not incorporate current or expected macroeconomic conditions. 
To obtain a PIT transition matrix, a macroeconomic adjustment is applied to $P^{\mathrm{TTC}}$. 
This ``macro overlay'' links transition intensities to observable covariates such as GDP growth or unemployment \cite{Bangia2002,Pesaran2006,Duffie2007}. 
Regulatory guidance under Basel and IFRS~9 frameworks also recognises the need for PIT calibration and the use of macroeconomic adjustments \cite{BCBS2005,EBA2017}.

Formally, let $(M_t)_{t\ge0}$ denote the stochastic macroeconomic state (e.g.~GDP growth, unemployment) and $\widehat M_t$ its forecast at time $t$. 
The PIT transition matrix at quarter $t$ is defined by
\[
P_t \;=\; \mathcal{G}\!\left(P^{\mathrm{TTC}},\, \widehat M_t\right),
\]
where $\mathcal{G}: \mathbb{R}^{K \times K} \times \mathbb{R} \to \mathbb{R}^{K \times K}$ maps the TTC matrix and macro state into migration probabilities. 
A common specification is a logit--style overlay,
\[
p_{ij,t} \;\propto\; p^{\mathrm{TTC}}_{ij}\,\exp\!\big(\beta_{ij} \widehat M_t\big),
\qquad i,j \in \{1,\dots,K\},
\]
followed by row normalisation. 
The sensitivity coefficients $\beta_{ij}$ are estimated from historical rating transitions and macro variables, typically via multinomial logit or proportional hazard regressions \cite{Bangia2002,Pesaran2006}. 
Depending on data availability, the estimation may be conducted at the matrix level or directly on obligor--level events. 
The resulting coefficients quantify how strongly each migration probability responds to the macro factor and are used to adjust the TTC matrix forward under the forecasted macro path.

\subsection{Lifetime Propagation and State--Space Representation}\label{sec:LTPD}

Starting from an initial rating distribution $\pi_0 \in \mathbb{R}^{1\times K}$, the portfolio evolves quarter by quarter under the PIT transition matrices. 
At horizon $t+1$,
\[
\pi_{t+1} \;=\; \pi_t P_t,
\qquad t=0,1,\dots,T-1,
\]
where $P_t = \mathcal{G}\!\left(P^{\mathrm{TTC}},\, \widehat M_t\right)$ depends on the macroeconomic forecast $\widehat M_t$. 
The sequence $(\pi_t)_{t=0}^T$ is therefore a stochastic dynamical system driven by the forecasted macro path. 

Define $Y_t = \pi_t e_K$ as the probability of default by time $t$, where $e_K$ denotes the unit vector of the absorbing default state. 
The lifetime PD term structure is then given by the sequence $(Y_t)_{t=1}^T$, which inherits its dynamics from the entire forecast path $(\widehat M_s)_{s=0}^{t-1}$.


\subsection{Forecast Error and Instability of Lifetime Propagation}\label{sec:instable}

In practice, the true macroeconomic state $M_t$ is not observed; only its forecast $\widehat M_t$ is available. 
Define the forecast error
\[
\delta_t \;=\; \widehat M_t - M_t, \qquad t=0,1,\dots,T-1.
\]
Substituting $P_t=\mathcal{G}\!\big(P^{\mathrm{TTC}},\,M_t+\delta_t\big)$ into the rating propagation yields
\[
\pi_{t+1} \;=\; \pi_t \, \mathcal{G}\!\big(P^{\mathrm{TTC}},\,M_t+\delta_t\big).
\]

The macro process $(M_t)$ is often modelled as mean–reverting, consistent with standard autoregressive specifications in econometrics \cite{Hamilton1994,NelsonPlosser1982} and equilibrium term–structure models such as Vasicek \cite{Vasicek1977}. 
Even under such mean–reversion, the repeated injection of $\delta_t$ into $P_t$ can induce substantial variability in $(\pi_t)$ and, consequently, in the lifetime PD sequence $(Y_t)_t$. 
Because $\delta_t$ enters the transition matrix at every step, a non–vanishing error sequence $(\delta_t)$ accumulates over the lifetime horizon and prevents convergence to a stable limit.

We formalise this intuition in three parts: an elementary instability lemma, a rigorous non–convergence result, and a deviation bound that quantifies accumulation. 
Let
\[
\varphi(\pi,m) \;:=\; \pi \, \mathcal{G}\!\big(P^{\mathrm{TTC}},\,m\big)\, e_K,
\]
so that $Y_t=\varphi(\pi_t,M_t)$ denotes the error–free default probability at horizon $t$. 
Our reasoning is closely related to results on random products of stochastic matrices \cite{Seneta2006} and to the propagation of forecast error in macroeconomic time series \cite{Hamilton1994}. 
While sensitivity and non–convergence phenomena are known in stochastic matrix theory and stochastic approximation \cite{Seneta2006,Mitrophanov2005,KushnerYin2003}, to the best of our knowledge these instability properties have not been established for lifetime PD propagation under macroeconomic forecasting.

Appendix~\ref{app:proofs} presents the elementary instability result (Lemma~\ref{lem:intuitive}). 
The main implication for lifetime PDs is summarised next.

\begin{proposition}[Lifetime PD non--convergence under persistent forecast error]
\label{prop:nonconv}
Let $P^{\mathrm{TTC}}$ be row--stochastic and let $\mathcal{G}(\cdot,m)$ be row--stochastic and $C^1$ in $m$. 
Assume: (i) uniform, non--degenerate macro sensitivity of $Y_t=\varphi(\pi_t,m)$; 
(ii) i.i.d.\ forecast errors $\delta_t$ with $\mathbb{P}(|\delta_t|\ge \varepsilon)=p>0$; 
(iii) the error--free path $Y_t^\circ=\varphi(\pi_t^\circ,M_t)$ converges. 
Then $(Y_t)$ does not converge in probability.
\end{proposition}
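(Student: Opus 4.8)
The plan is a proof by contradiction that exploits the fact that convergence in probability of $(Y_t)$ forces the consecutive increments $Y_{t+1}-Y_t$ to vanish in probability (Cauchy-in-probability), whereas the fresh forecast error $\delta_t$ injected at step $t$ produces an increment that, on an event of probability $p$, stays bounded away from $0$. So assume $Y_t\to Y_\infty$ in probability for some (possibly random) $Y_\infty$; then $\mathbb{P}(|Y_{t+1}-Y_t|\ge\eta)\to 0$ for every $\eta>0$, and the task is to contradict this for a suitably chosen fixed $\eta$.

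The first step is to isolate the one-step effect of $\delta_t$. Comparing the error-laden update $\pi_t\,\mathcal{G}(P^{\mathrm{TTC}},M_t+\delta_t)$ with the error-free continuation $\pi_t\,\mathcal{G}(P^{\mathrm{TTC}},M_t)$ of the same state, and using that $m\mapsto\varphi(\pi_t,m)$ is $C^1$ (since $\mathcal{G}(\cdot,m)$ is), the mean value theorem yields a point $\xi_t$ between $M_t$ and $M_t+\delta_t$ with
\[
S_t:=\varphi(\pi_t,M_t+\delta_t)-\varphi(\pi_t,M_t)=\partial_m\varphi(\pi_t,\xi_t)\,\delta_t .
\]
Uniform non-degenerate macro sensitivity (assumption (i)) supplies $c>0$ with $|\partial_m\varphi(\pi,m)|\ge c$ uniformly over the simplex and the relevant range of $m$, so $|S_t|\ge c|\delta_t|\ge c\varepsilon$ on the event $A_t:=\{|\delta_t|\ge\varepsilon\}$, which has probability $p$ for every $t$.

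Next I would write $Y_{t+1}-Y_t=S_t+R_t$, where the remainder $R_t$ collects the error-free part of the increment (the change due to $M_{t+1}-M_t$ and to the error-free propagation of $\pi_t$). Assumption (iii), together with mean-reversion of $(M_t)$, forces this remainder to $0$ in probability, hence $\mathbb{P}(|R_t|\ge c\varepsilon/2)\to 0$. For all large $t$ we then get, using $|Y_{t+1}-Y_t|\ge|S_t|-|R_t|$,
\[
\mathbb{P}\!\left(|Y_{t+1}-Y_t|\ge \tfrac{c\varepsilon}{2}\right)\;\ge\;\mathbb{P}(A_t)-\mathbb{P}\!\left(|R_t|\ge\tfrac{c\varepsilon}{2}\right)\;\ge\;\tfrac{p}{2},
\]
which contradicts $\mathbb{P}(|Y_{t+1}-Y_t|\ge\eta)\to0$ with $\eta=c\varepsilon/2$. (Equivalently, since the $\delta_t$ are i.i.d.\ one may invoke the second Borel--Cantelli lemma to see that shocks of size $\ge\varepsilon$ recur infinitely often with probability one.) This proves $(Y_t)$ does not converge in probability.

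The delicate point — and the place where the elementary instability result (Lemma~\ref{lem:intuitive}) is used — is the control of $R_t$: one must ensure that the most recent shock $S_t$ is not already cancelled within the same step by the contractive, mean-reverting dynamics, i.e.\ that a shock of size $\gtrsim\varepsilon$ genuinely registers in the increment $Y_{t+1}-Y_t$ rather than being absorbed into $R_t$ (note that if $R_t$ itself failed to vanish, non-convergence would already follow). This is precisely the quantitative content Lemma~\ref{lem:intuitive} provides, and it is also where the bookkeeping for $Y_t$ (cumulative default mass versus one-period default functional) must be made consistent so that $S_t$ and $R_t$ are cleanly separated. The remaining ingredients — Cauchy-in-probability and, optionally, Borel--Cantelli — are routine once that non-absorption estimate is in hand.
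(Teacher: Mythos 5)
Your route (Cauchy increments plus contradiction) is genuinely different from the paper's, which compares $Y_{t+1}$ directly to the convergent error--free path and its limit $Y^\ast$: by the second Borel--Cantelli lemma the events $\{|\delta_t|\ge\varepsilon\}$ recur almost surely, the sensitivity bound gives $|Y_{t+1}-Y^{\circ}_{t+1}|\ge\alpha\varepsilon$ on those events, and since $Y^{\circ}_t$ is eventually within $\tfrac{1}{3}\alpha\varepsilon$ of $Y^\ast$ one concludes $|Y_{t+1}-Y^\ast|\ge\tfrac{2}{3}\alpha\varepsilon$ infinitely often. That argument sidesteps any remainder like your $R_t$ because the reference object is the fixed limit $Y^\ast$, not the previous iterate $Y_t$.

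The gap in your version is exactly the step you flag and then wave through: $R_t\to 0$ in probability. With the bookkeeping $Y_t=\varphi(\pi_{t-1},M_{t-1}+\delta_{t-1})$, the remainder $R_t=\varphi(\pi_t,M_t)-\varphi(\pi_{t-1},M_{t-1}+\delta_{t-1})$ contains, to first order, the term $-\partial_m\varphi\cdot\delta_{t-1}$ --- the reversal of the previous period's shock --- which on an event of probability $p$ has magnitude at least $c\varepsilon$, the same order as $S_t$. Assumption (iii) is of no help here: it controls the error--free state $\pi^{\circ}_t$, not the error--laden $\pi_t$ entering $R_t$; and Lemma~\ref{lem:intuitive} does not supply the missing estimate either, being only the sensitivity bound you already used for $S_t$. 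Your fallback (``if $R_t$ fails to vanish, non--convergence already follows'') is also false: $S_t$ and $R_t$ can cancel, and here they do so systematically, since to leading order $Y_{t+1}-Y_t\approx\partial_m\varphi\,(\delta_t-\delta_{t-1})+o(1)$. The contradiction therefore requires that $\delta_t-\delta_{t-1}$ not vanish in probability, which hypothesis (ii) does not guarantee: the degenerate choice $\delta_t\equiv\varepsilon$ a.s.\ is i.i.d.\ with $p=1$, yet all your increments tend to zero. To close the argument you must either add a non--degeneracy assumption on the law of $\delta_t$ beyond (ii), or switch to the paper's strategy of measuring the deviation from $Y^\ast$, which turns the problem from controlling consecutive increments into controlling the distance to a fixed limit.
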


\begin{proof}[Sketch]
By Borel--Cantelli, $|\delta_t|\ge \varepsilon$ occurs infinitely often with positive probability. 
Uniform sensitivity implies a fixed minimum deviation from the error--free sequence $Y_t^\circ$, so the distance to any candidate limit cannot vanish. 
Full details are given in Appendix~\ref{app:proofs}.
\end{proof}

A linear accumulation bound for deviations $e_t=\|\pi_t-\pi_t^\circ\|_1$ is provided in Corollary~\ref{cor:accum} (Appendix), which quantifies how forecast error adds up over time.

\begin{remark}[On assumptions (A1) and (A3)]
Assumption (A1) requires uniform non--degeneracy of the macro sensitivity of default probabilities. 
It holds, for example, if $\partial_m \mathcal G$ exists and $\inf_{\pi,m}\big|\partial_m\phi(\pi,m)\big|>0$ over the relevant domain. 
Assumption (A3) is standard when the error--free macro path is bounded or convergent and $\mathcal G$ induces an ergodic product of stochastic matrices. 
If $\delta_t\to 0$ almost surely, the non--convergence conclusion may fail. 
Section~\ref{sec:anchored_obs} addresses this by replacing raw forecasts with an anchored filter, which enforces $\delta_t\to 0$ and restores stability.
\end{remark}

The results in this subsection show that conventional macro--driven propagation does not yield a stable lifetime PD estimate: forecast errors enter the dynamics at each step, accumulate, and preclude convergence. 
To restore stability, a filtering mechanism is required that explicitly controls the impact of forecast uncertainty. 
Section~\ref{sec:formulation} therefore introduces a Kalman filtering representation of the macro state and examines whether such an approach can mitigate the instability identified here.


\section{Kalman Filtering for Lifetime PD Models}\label{sec:formulation}

This section develops a Kalman filtering framework for lifetime PD propagation. 
We first recall the essence of the Kalman filter as a recursive state estimator. 
We then discuss alternative design choices for applying the filter either to the macroeconomic driver or directly to the PD dynamics. 
A state--space representation of the macro driver is introduced, and the resulting na\"ive filter is analysed. 
The analysis shows that while Kalman filtering dampens short--run forecast noise, it does not resolve the structural instability identified in Section~\ref{sec:instable}.

\subsection{Essence of the Kalman Filter}

The Kalman filter is the canonical recursive estimator for linear stochastic state--space systems \cite{Kalman1960}. 
Consider latent state dynamics
\[
x_{t+1} = A x_t + w_t, 
\qquad w_t \sim \mathcal N(0,Q),
\]
and a measurement equation
\[
y_t = H x_t + v_t, 
\qquad v_t \sim \mathcal N(0,R),
\]
with $A$ the transition matrix, $H$ the observation matrix, and $Q,R$ the noise covariances. 
The filter recursively updates the posterior mean and covariance of $x_t$ given all past observations, alternating between prediction and update steps. 
Formally,
\[
\mu_t = \mu_{t|t-1} + K_t \big(y_t - H \mu_{t|t-1}\big),
\qquad
\Sigma_t = (I - K_t H)\Sigma_{t|t-1},
\]
where $K_t$ is the Kalman gain balancing prior uncertainty against measurement noise. 
The recursion ensures that forecast error is not amplified over time but is damped according to the reliability of new signals.

Since its introduction \cite{Kalman1960}, the filter has been applied widely in navigation, control, and econometrics \cite{Hamilton1994,KushnerYin2003}. 
In the present context, its relevance lies in treating macroeconomic forecasts as noisy signals of an unobserved ``true'' macro state. 
By filtering these signals, transition--matrix dynamics can be driven by smoothed state estimates rather than raw forecasts, a practice increasingly emphasised in regulatory discussions of forward--looking credit risk models \cite{EBA2017,BCBS2005}.


The Kalman filter can be incorporated into lifetime PD modelling in different ways. 
One option is to filter the macroeconomic state before it enters the transition matrix. 
Another is to filter the transition dynamics or PD sequence directly. 
Between these extremes, hybrid constructions are possible. 
This subsection compares the alternatives and explains the rationale for adopting macro--state filtering as baseline.

\paragraph{Filtering the Macroeconomic State.}
In this design the latent macro driver $M_t$ is modelled explicitly in a linear--Gaussian state--space form. 
The forecast $\widehat M_t$ is treated as a noisy observation,
\[
\begin{aligned}
\text{State:}& \quad M_{t+1} = A M_t + w_t, \quad w_t\sim\mathcal N(0,Q),\\
\text{Obs.:}& \quad \widehat M_t = H M_t + v_t, \quad v_t\sim\mathcal N(0,R),\\
\text{PD link:}& \quad P_t = \mathcal G\!\big(P^{\mathrm{TTC}},\mu_t\big),\quad 
\mu_t \;=\;\mathbb E[M_t|\widehat M_{0:t}].
\end{aligned}
\]
The PIT matrix is thus driven by the filtered estimate $\mu_t$ rather than the raw forecast.

\emph{Pros:}
\begin{itemize}
\item Preserves the TTC/absorbing structure of the transition matrix by construction. 
\item Conceptually transparent: the filter acts on macro noise, not on rating dynamics. 
\item Provides a natural entry point for introducing a neutral long--term anchor and proving stability. 
\end{itemize}
\emph{Cons:}
\begin{itemize}
\item Relies on correct specification of the macro--to--PD map $\mathcal G$. 
\item Requires calibration of $(A,Q,H,R)$, often difficult with limited macro--credit data. 
\end{itemize}

\paragraph{Filtering in Transition or PD Space.}
Alternatively, the filter may target the migration dynamics directly. 
Let $\theta_t$ denote latent parameters (e.g.\ logits or intensities) generating $P_t$:
\[
\begin{aligned}
\text{State:}& \quad \theta_{t+1}=\theta_t+\eta_t,\quad \eta_t\sim\mathcal N(0,\Sigma_\eta),\\
\text{Obs.:}& \quad C_{i\cdot,t}\sim\mathrm{Multinomial}(n_{i,t},P_{i\cdot,t}),\\
\text{Link:}& \quad P_{i\cdot,t}=\mathrm{softmax}(\theta_{i,t}), \ \text{with absorbing default enforced}.
\end{aligned}
\]

\emph{Pros:}
\begin{itemize}
\item Targets the transition dynamics directly, independent of $\mathcal G$. 
\item Can partially absorb misspecification in the macro overlay by learning from realised migrations. 
\end{itemize}
\emph{Cons:}
\begin{itemize}
\item Requires nonlinear filtering (EKF, UKF, particles), making analysis delicate. 
\item Data scarcity: migration counts are sparse and heteroskedastic, especially in low--default portfolios. 
\item Harder to justify for governance: filtering rating dynamics is less transparent than filtering macro inputs. 
\end{itemize}

\paragraph{Hybrid Variants.}
Between these designs several mid--ground options exist:
\begin{itemize}
\item \emph{Parameter--filter:} filter time--varying sensitivities $\beta_{ij,t}$ in 
$P_t=\mathcal G(P^{\mathrm{TTC}},\widehat M_t;\beta_t)$. 
\item \emph{Generator--filter:} work with a continuous--time generator $Q_t$ s.t.\ $P_t=\exp(Q_t\Delta)$, filtering the constrained parameters of $Q_t$. 
\item \emph{Augmented--state:} jointly filter $(M_t,\theta_t)$ to capture feedback between macro factors and transitions. 
\end{itemize}

\paragraph{Chosen Design.}
This paper adopts the \emph{macro--state filter}. 
It preserves the Markov structure, admits a neutral long--run anchor, and enables an explicit asymptotic stability result for the PD term structure, see Section~\ref{sec:macroSS}. 
Section~\ref{sec:naiveKF} shows that a direct Kalman application still leaves residual instability, while Section~\ref{sec:proposal} introduces the anchored formulation and establishes stability as the main contribution.

\subsection{State--Space Model for the Macroeconomic Driver}\label{sec:macroSS}

Let $M_t \in \mathbb{R}^r$ denote the latent macroeconomic state at quarter $t$ and $\widehat M_t \in \mathbb{R}^r$ the forecast available to the modeller. 
We assume a linear--Gaussian state--space model,
\[
M_{t+1} = A M_t + w_t, 
\qquad w_t \sim \mathcal N(0,Q),
\]
\[
\widehat M_t = H M_t + v_t, 
\qquad v_t \sim \mathcal N(0,R),
\]
with $A,H \in \mathbb{R}^{r\times r}$, process covariance $Q\succeq 0$, and observation covariance $R\succ 0$. 
The Kalman filter delivers the conditional mean $\mu_t=\mathbb E[M_t \mid \widehat M_{0:t}]$ and covariance $\Sigma_t=\mathrm{Cov}(M_t \mid \widehat M_{0:t})$ via the standard recursion:
\[
\begin{aligned}
\text{Predict:}\quad & \mu_{t|t-1} = A \mu_{t-1}, 
& \Sigma_{t|t-1} = A \Sigma_{t-1} A^\top + Q,\\[0.3em]
\text{Gain:}\quad & K_t = \Sigma_{t|t-1} H^\top \big(H \Sigma_{t|t-1} H^\top + R\big)^{-1},\\[0.3em]
\text{Update:}\quad & \mu_t = \mu_{t|t-1} + K_t\big(\widehat M_t - H \mu_{t|t-1}\big),\\
& \Sigma_t = (I - K_t H)\Sigma_{t|t-1}.
\end{aligned}
\]

The PIT transition matrix at time $t$ is then evaluated at the filtered macro state,
\[
P_t = \mathcal G\!\big(P^{\mathrm{TTC}},\, \mu_t\big),
\]
and the portfolio propagates according to $\pi_{t+1}=\pi_t P_t$. 
In this setup the Kalman filter acts as a smoothing device: rather than relying on raw forecasts $\widehat M_t$, the PD dynamics are driven by the filtered estimate $\mu_t$. 
The next part examines whether this ``na\"ive'' Kalman application is sufficient to stabilise lifetime PDs.

\subsection{Na\"ive Kalman Filtering and Residual Instability}\label{sec:naiveKF}

We first examine a direct application of the Kalman filter to macroeconomic forecasts. 
Here the forecast itself is treated as the observation of the latent state. 
Although this preserves the recursive update form, it means that each observation already embeds the same forecast error as the state being estimated. 
Forecast error is therefore recycled back into the filter, creating a feedback loop. 
Over long horizons, this error recycling prevents stochastic stability: estimation variance converges to a strictly positive constant and residual fluctuations remain in the PD term structure. 
This shortcoming motivates the anchored reformulation in Section~\ref{sec:proposal}, where the observation is tied to a neutral long--run macro state.

Formally, let the macro state follow
\[
M_{t+1} = A M_t + w_t,\qquad w_t \sim \mathcal N(0,Q),
\]
with forecast signal
\[
y_t = \widehat M_t = H M_t + v_t,\qquad v_t \sim \mathcal N(0,R),
\]
where $w_t$ and $v_t$ are independent, $Q\succeq 0$, $R\succ 0$. 
Let $\mu_t$ be the filtered mean, $\Sigma_t$ the filtered covariance, and $e_t=\mu_t-M_t$ the estimation error. 
The PIT matrix is evaluated at $\mu_t$, i.e.\ $P_t=\mathcal G(P^{\mathrm{TTC}},\mu_t)$, with portfolio dynamics $\pi_{t+1}=\pi_t P_t$.

Standard Kalman algebra gives the error recursion
\begin{equation}\label{eq:naive_error}
e_{t+1} = \big(I-K_{t+1}H\big)A e_t + \big(I-K_{t+1}H\big)w_t - K_{t+1}v_{t+1},
\end{equation}
with gain $K_{t+1}=\Sigma_{t+1|t}H^\top\big(H\Sigma_{t+1|t}H^\top+R\big)^{-1}$ and predictor covariance $\Sigma_{t+1|t}=A\Sigma_t A^\top+Q$. 
The covariance update is the Riccati recursion
\begin{equation}\label{eq:riccati}
\Sigma_{t+1} = \big(I-K_{t+1}H\big)\Sigma_{t+1|t}.
\end{equation}

\begin{theorem}[Residual variability under naive filtering]\label{thm:naiveKF}
Suppose $(A,Q^{1/2})$ is stabilisable, $(A,H)$ is detectable, $Q\succeq 0$, and $R\succ 0$. 
Then the Kalman error covariance converges to a positive--definite limit $\Sigma_\infty\succ 0$. 
If the PD map $\varphi(\pi,m)$ is locally Lipschitz and uniformly sensitive on a compact macro set, then
\[
\limsup_{t\to\infty}\,\mathbb E\big[\,|\,Y_t-\varphi(\pi_t,M_t)\,|\,\big] > 0,
\]
with $Y_t=\varphi(\pi_t,\mu_t)$. 
In particular, the lifetime PD sequence does not converge in probability.
\end{theorem}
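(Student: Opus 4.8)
The plan is to establish the two assertions separately: the convergence $\Sigma_t\to\Sigma_\infty\succ0$ is classical Kalman/Riccati theory, whereas the persistent PD gap is where the argument specific to this setting lies.

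For the covariance claim I would invoke the standard convergence theorem for the matrix Riccati recursion \eqref{eq:riccati}: under stabilisability of $(A,Q^{1/2})$ and detectability of $(A,H)$, the iterates $\Sigma_t$ converge, from any $\Sigma_0\succeq0$, to the unique stabilising positive semidefinite solution $\Sigma_\infty$ of the associated discrete algebraic Riccati equation, with Schur closed loop $(I-K_\infty H)A$. Strict positivity then follows from the information form of the update, $\Sigma_t^{-1}=\Sigma_{t|t-1}^{-1}+H^\top R^{-1}H$, together with $\Sigma_{t|t-1}=A\Sigma_{t-1}A^\top+Q\succeq Q$: when $Q\succ0$ this yields the uniform floor $\Sigma_t\succeq(Q^{-1}+H^\top R^{-1}H)^{-1}\succ0$, hence $\Sigma_\infty\succ0$ (more generally it suffices that $Q$ excite every mode not resolved in one step by $H$). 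This step is routine and I would keep it short.

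For the PD gap, the key observation is that $Y_t-\varphi(\pi_t,M_t)=\varphi(\pi_t,\mu_t)-\varphi(\pi_t,M_t)$ carries the \emph{same} $\pi_t$ on both sides, so it is exactly the chord of the scalar map $m\mapsto\varphi(\pi_t,m)$ between the true macro state $M_t$ and its filtered estimate $\mu_t=M_t+e_t$; all accumulated history in $\pi_t$ drops out. Because the model is linear--Gaussian, $e_t$ is exactly zero-mean Gaussian with covariance $\Sigma_t$; taking $u$ a unit top eigenvector of $\Sigma_\infty$ gives $\mathbb E\|e_t\|\ge\mathbb E|u^\top e_t|=\sqrt{2/\pi}\,(u^\top\Sigma_t u)^{1/2}\to\sqrt{2/\pi}\,\lambda_{\max}(\Sigma_\infty)^{1/2}=:2\kappa>0$. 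Reading uniform sensitivity as a constant-sign, bounded-below macro derivative of $\varphi$ on the relevant compact macro set $B$ (and, in the multivariate case, projecting onto the sensitive direction), I obtain the sample-path inequality $|Y_t-\varphi(\pi_t,M_t)|\ge c\|e_t\|$ on the event $\{M_t\in B,\ \mu_t\in B\}$. Since the mean-reverting macro dynamics and the stationary filter error are tight, $B$ can be chosen so that this event has probability at least $1-\varepsilon$ for all large $t$; combining with $\mathbb E\|e_t\|^2=\operatorname{tr}\Sigma_t$ bounded and Cauchy--Schwarz to discard the complement gives $\mathbb E|Y_t-\varphi(\pi_t,M_t)|\ge c\big(\mathbb E\|e_t\|-(\operatorname{tr}\Sigma_t)^{1/2}\varepsilon^{1/2}\big)\ge c\kappa$ for all large $t$, so the $\limsup$ is strictly positive. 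For the ``in particular'' clause I would argue as in Proposition~\ref{prop:nonconv}: the nondegenerate, non-vanishing estimation error makes $\mu_t=M_t+e_t$ fluctuate persistently, the uniformly sensitive map transmits this to $Y_t$, and since the idealised error-free term structure converges, $Y_t$ cannot converge in probability -- the recycled Kalman innovation here playing the role of the persistent forecast error there.

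The classical half is immediate; the real work is the second, namely converting the covariance floor $\Sigma_\infty\succ0$ into an $L^1$ floor on the PD gap. Two points need care. First, the chord $\varphi(\pi_t,\mu_t)-\varphi(\pi_t,M_t)$ can partially cancel unless ``uniform sensitivity'' is strong enough to give a genuine one-sided lower bound, which is why I lean on a constant-sign, bounded-below derivative on a compact set rather than a bare nonzero-derivative condition. Second, the macro state is Gaussian, hence unbounded, so the compact set $B$ must be brought in through tightness and the tail contribution controlled via Cauchy--Schwarz; the multivariate case adds the minor nuisance that a scalar function cannot be globally lower-Lipschitz in a vector argument, which is why the bound is stated along the sensitive direction.
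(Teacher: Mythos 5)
Your proposal follows essentially the same route as the paper's own proof: Riccati convergence of $\Sigma_t$ to a non-degenerate limit, the pointwise bound $|Y_t-\varphi(\pi_t,M_t)|\ge c\|e_t\|$ from uniform sensitivity on a compact macro set, and passage to expectations to obtain a strictly positive $\limsup$. You are in fact somewhat more careful than the paper at the two loose points of its argument --- the paper's Lemma~\ref{lem:naive} only establishes $\Sigma_\infty\neq 0$ rather than $\Sigma_\infty\succ 0$ (your information-form floor, requiring an excitation condition on $Q$, is the right fix), and your Cauchy--Schwarz tail control justifies restricting to the compact set where the sensitivity bound holds, which the paper handles only by a separate remark on tightness of the macro path.
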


\begin{proof}[Sketch]
Kalman theory implies $\Sigma_t\to\Sigma_\infty\succ 0$. 
The non--vanishing state error propagates into the PD sequence through Lipschitz continuity and sensitivity of $\varphi$, ensuring a strictly positive long--run deviation. 
Full details are provided in Appendix~\ref{app:proofs}.
\end{proof}

\begin{remark}[Self--observation pitfall]
If one replaces $y_t$ by $H\mu_{t-1}$, i.e.\ no new information, the innovation collapses to $H(I-A)\mu_{t-1}$ and the update becomes a deterministic linear map of $\mu_{t-1}$. 
This produces a biased estimator and breaks the filter’s error--correction logic. 
Such degenerate schemes are not considered further.
\end{remark}

\begin{figure}
  \centering
  \includegraphics[width=0.95\textwidth]{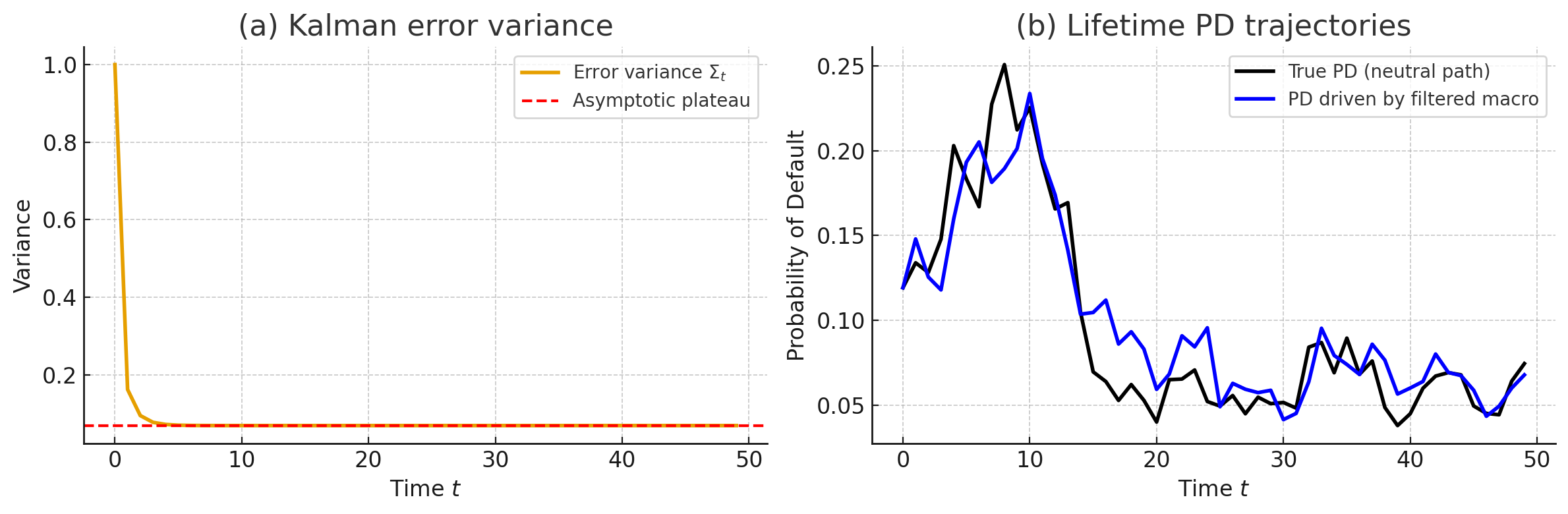}
  \caption{Residual variability under naive Kalman filtering. 
  (a) Estimation error variance $\mathbb E\|e_t\|^2$ converges to the positive constant 
  $\mathrm{tr}(\Sigma_\infty)$. 
  (b) Lifetime PD trajectories driven by the filtered macro state fluctuate indefinitely 
  around the neutral limit, preventing asymptotic stability. 
  Naive filtering prevents explosive growth of forecast errors but does not restore convergence.}
  \label{fig:naiveKF}
\end{figure}

\paragraph{Implication.}
Under the naive scheme, forecast noise survives the filter and perturbs the PIT matrices indefinitely. 
The lifetime PD term structure therefore fails to stabilise, which motivates the anchored observation design developed in Section~\ref{sec:proposal}.


\section{Anchored Filtering and Stability Result}\label{sec:proposal}

This section presents the proposed anchoring strategy for macroeconomic filtering. 
We first introduce the idea of neutral anchoring at a conceptual level, then derive a generalised anchored observation model. 
The resulting error dynamics are analysed and a rigorous proof of asymptotic stochastic stability is provided.

\subsection{Anchoring Strategies for Macroeconomic Filtering}

Section~\ref{sec:naiveKF} showed that a naive filter, which reuses forecasts as observations, 
recycles forecast error and leaves non--vanishing PD variability. 
To overcome this structural limitation, we introduce anchoring: replace the forecast proxy 
with a neutral macro state that serves as a stabilising reference. 
The idea is simple: each update is pulled towards a long--run neutral point, preventing 
forecast errors from compounding across horizons.

\paragraph{(i) Anchored macro--state filter (neutral proxy)}
\emph{Anchor:} fix a neutral macro state $M_\star$ (e.g.\ ``zero impact''). \\
\emph{Update:} $\mu_t=\mu_{t|t-1}+K_t\big(M_\star-H\mu_{t|t-1}\big)$. \\
\emph{Interpretation:} equivalent to observing $M_\star$ with vanishing noise; simple and stabilising. \\
\emph{Limitation:} calibration and governance are less transparent because the anchor enters implicitly.

\paragraph{(ii) Generalised anchored macro--state filter (stacked observation)}
\emph{Observation:} combine forecast and neutral anchor,
\[
y_t=\begin{bmatrix}\widehat M_t\\[0.2em] M_\star\end{bmatrix},\qquad 
H_{\mathrm{aug}}=\begin{bmatrix}H\\[0.2em] I\end{bmatrix},\qquad 
R_{\mathrm{aug}}=\mathrm{diag}\!\big(R,\ \sigma_\star^2 I\big).
\]
\emph{Update:} $\mu_t=\mu_{t|t-1}+K_t\big(y_t-H_{\mathrm{aug}}\mu_{t|t-1}\big)$. \\
\emph{Control knob:} $\sigma_\star^2$ tunes anchor strength; take $\sigma_\star^2>0$ within the forecast window and $\sigma_\star^2\downarrow 0$ beyond $T_F$. \\
\emph{Benefit:} explicit measurement structure, observability via the identity block, and a clean Riccati convergence path for stability proofs.

\paragraph{General formulation.}
The neutral proxy in (i) is contained in the stacked model (ii) as the limiting case $\sigma_\star^2\downarrow 0$. 
We therefore analyse the general anchored formulation going forward; all stability results apply to the proxy as a special case. 
Section~\ref{sec:stabilityproof} establishes asymptotic stochastic stability of the lifetime PD term structure under this model.

\subsection{Generalised Anchored Observation Model}\label{sec:anchored_obs}

To overcome the instability of the naive formulation, we introduce an anchored observation model. 
The key idea is to represent the macroeconomic state as a deviation from a neutral long--run level $M^\ast$, 
interpreted as the TTC equilibrium. 
Rather than updating relative to past forecasts, the filter evaluates each observation against this anchor, 
eliminating the accumulation of forecast error across horizons.

Formally, the observation equation is
\begin{equation}
y_t \;=\; H(M_t - M^\ast) + \nu_t, 
\qquad \nu_t \sim \mathcal N(0,R),
\end{equation}
where $H$ maps deviations from the anchor into the observable space. 
The state dynamics remain
\begin{equation}
M_{t+1} = A M_t + w_t, 
\qquad w_t \sim \mathcal N(0,Q),
\end{equation}
with $A$ capturing persistence and $Q$ the innovation covariance. 
Defining deviations $\delta_t := M_t - M^\ast$, we obtain
\begin{equation}
\delta_{t+1} = A \delta_t + w_t.
\end{equation}

The Kalman filter applied to this deviation system yields
\begin{align}
\hat{\delta}_{t|t-1} &= A \hat{\delta}_{t-1|t-1}, \\
\hat{\delta}_{t|t}   &= \hat{\delta}_{t|t-1} + K_t \big(y_t - H \hat{\delta}_{t|t-1}\big),
\end{align}
with Kalman gain $K_t = P_{t|t-1} H^\top (H P_{t|t-1} H^\top + R)^{-1}$ and covariance recursion as usual. 

By construction, $\delta_t$ is centred at zero under the neutral macro environment. 
The filter is therefore mean--reverting around the anchor and prevents divergence of lifetime PD forecasts induced by persistent forecast errors. 
This anchored formulation provides the foundation for the stability analysis in Section~\ref{sec:stabilityproof}, 
where asymptotic stochastic stability of the lifetime PD term structure is established.

\subsection{Error Dynamics and Convergence Intuition}\label{sec:errordyn}

To avoid confusion with the forecast error $\delta_t$ from Section~\ref{sec:instable}, 
we denote by 
\[
\xi_t := M_t - M^\star
\]
the deviation of the macro state from the neutral anchor $M^\star$. 
Under the anchored observation model,
\[
y_t = H \xi_t + \nu_t,\qquad \nu_t\sim\mathcal N(0,R),
\]
with state dynamics
\[
\xi_{t+1} = A \xi_t + w_t,\qquad w_t\sim\mathcal N(0,Q).
\]
Let $\hat\xi_{t|t}$ be the Kalman estimate and $e_t:=\hat\xi_{t|t}-\xi_t$ the estimation error.

\paragraph{Error recursion.}
The Kalman update yields
\[
\hat\xi_{t|t-1}=A\hat\xi_{t-1|t-1},\qquad
\hat\xi_{t|t}=\hat\xi_{t|t-1}+K_t(y_t-H\hat\xi_{t|t-1}),
\]
leading to the error dynamic
\begin{equation}\label{eq:err_deviation}
e_t = (I-K_t H)A e_{t-1} - (I-K_t H)w_{t-1} + K_t \nu_t .
\end{equation}
Thus the mean error evolves under the linear map $(I-K_t H)A$, with process and measurement noise entering additively. 

\paragraph{Stacked formulation.}
If both the forecast $\widehat M_t$ and the neutral anchor are observed, the system can be written in stacked form with 
\[
y_t^{\mathrm{aug}}=\begin{bmatrix}\widehat M_t\\ M^\star\end{bmatrix},\qquad
H_{\mathrm{aug}}=\begin{bmatrix}H\\ I\end{bmatrix},\qquad
R_{\mathrm{aug}}=\mathrm{diag}(R,\sigma_\star^2 I).
\]
The associated error recursion is
\begin{equation}\label{eq:err_stacked}
e_{t+1}^{\mathrm{aug}}
= (I-K_{t+1}H_{\mathrm{aug}})A e_t^{\mathrm{aug}}
+ (I-K_{t+1}H_{\mathrm{aug}})w_t - K_{t+1}\eta_{t+1},
\end{equation}
with $\eta_t=(v_t,\nu_t)^\top$. 
The identity block in $H_{\mathrm{aug}}$ guarantees detectability and, once the gain stabilises, ensures a strict contraction.

\paragraph{Convergence intuition.}
Anchoring changes the innovation reference: instead of tracking past forecasts, each update is pulled back towards $M^\star$. 
Large deviations therefore decay over time. 
The contraction $(I-K_t H)A$ (or $(I-K_t H_{\mathrm{aug}})A$ in the stacked model) implies that the estimation error remains mean–square bounded, and vanishes in the absence of process noise with a tight anchor. 
By Lipschitz continuity of the PD map $\phi$, the lifetime PD sequence $Y_t=\phi(\pi_t,\hat M_{t|t})$ inherits this stability. 
The next section formalises these statements.

\subsection{Proof of Asymptotic Stochastic Stability}\label{sec:stabilityproof}

We analyse the generalised anchored filter with stacked observation. 
The latent state is $M_t$, the observation is 
\[
y_t^{\mathrm{aug}}=\begin{bmatrix}\widehat M_t\\ M^\star\end{bmatrix},\qquad 
H_{\mathrm{aug}}=\begin{bmatrix}H\\ I\end{bmatrix},\qquad 
R_{\mathrm{aug}}=\mathrm{diag}(R,\sigma_\star^2 I).
\]
The portfolio evolves via $\pi_{t+1}=\pi_t P_t$ with $P_t=\mathcal G(P^{\mathrm{TTC}},\hat M_{t|t})$, 
and lifetime PDs are given by $Y_t=\phi(\pi_t,\hat M_{t|t})$.

\begin{assumption}\label{ass:KF}
$A\in\mathbb R^{r\times r}$, $Q\succeq 0$, $R\succ 0$, $\sigma_\star^2>0$. 
The pair $(A,Q^{1/2})$ is stabilisable and $(A,H_{\mathrm{aug}})$ is detectable. 
The PD map $\phi(\pi,m)$ is locally Lipschitz in $m$ near $M^\star$ with constant $L>0$.
\end{assumption}

\begin{theorem}[Anchored mean--square stability and convergence]\label{thm:MSS}
Under Assumption~\ref{ass:KF}, the Kalman error covariance $\Sigma_t$ converges to the unique 
stabilising solution $\Sigma_\infty$ of the Riccati equation for $(A,H_{\mathrm{aug}},Q,R_{\mathrm{aug}})$. 
In particular,
\[
\lim_{t\to\infty}\mathbb E\|e_t\|^2 \;=\; \mathrm{tr}\,\Sigma_\infty \;<\;\infty,
\]
so the estimation error is mean--square bounded and does not recycle forecast noise.

If, in addition, there exists a horizon $T_F$ such that $Q=0$ and $\sigma_\star^2=0$ for all $t\ge T_F$ 
(hard neutral anchor beyond the forecast window), then $e_t\to 0$ exponentially fast. 
Consequently,
\[
\phi(\pi_t,\hat M_{t|t}) \;\xrightarrow{\;p\;}\; \phi(\pi_t,M^\star),
\]
and the lifetime PD term structure converges in probability to the neutral macro limit.
\end{theorem}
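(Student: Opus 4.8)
The plan is to reduce the first assertion to the classical convergence theory of the discrete--time Kalman/Riccati recursion \cite{Kalman1960,Simon2006} and then to obtain the hard--anchor statement by specialising the error recursion \eqref{eq:err_stacked}. First I would observe that the anchoring makes the structural hypotheses essentially free: since $H_{\mathrm{aug}}$ stacks $H$ on top of an identity block, the observability matrix of $(A,H_{\mathrm{aug}})$ already has full column rank from that identity alone, so $(A,H_{\mathrm{aug}})$ is completely observable and a fortiori detectable; together with stabilisability of $(A,Q^{1/2})$ from Assumption~\ref{ass:KF}, this is exactly the regime in which the Riccati recursion
\[
\Sigma_{t+1}=A\Sigma_tA^\top+Q-A\Sigma_tH_{\mathrm{aug}}^\top\big(H_{\mathrm{aug}}\Sigma_tH_{\mathrm{aug}}^\top+R_{\mathrm{aug}}\big)^{-1}H_{\mathrm{aug}}\Sigma_tA^\top
\]
converges, from any $\Sigma_0\succeq0$, to the unique stabilising solution $\Sigma_\infty\succeq0$ of the associated algebraic Riccati equation, with closed--loop matrix $A_{\mathrm{cl}}:=(I-K_\infty H_{\mathrm{aug}})A$ Schur--stable. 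Because the system is linear--Gaussian and $\hat M_{t|t}$ is the conditional mean, the filter covariance is exactly the error second moment, $\mathbb E[e_te_t^\top]=\Sigma_t$, whence $\mathbb E\|e_t\|^2=\mathrm{tr}\,\Sigma_t\to\mathrm{tr}\,\Sigma_\infty<\infty$; this is the stated mean--square bound, and the contraction $A_{\mathrm{cl}}$ acting on the accumulated past error at each step is precisely what prevents forecast noise from being recycled.

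For the hard--anchor part I would specialise \eqref{eq:err_stacked} to $t\ge T_F$. With $Q=0$ the process--noise term $(I-K_{t+1}H_{\mathrm{aug}})w_t$ disappears, and with $\sigma_\star^2=0$ the anchor block becomes a noiseless observation, forcing $\Sigma_t\downarrow0$ along the Riccati recursion and killing the residual contribution of $K_{t+1}\eta_{t+1}$; the error recursion then collapses to the deterministic linear contraction $e_{t+1}=A_{\mathrm{cl}}e_t$, so $\|e_t\|\le c\,\rho^{\,t-T_F}\|e_{T_F}\|$ for some $\rho\in(0,1)$. Since $\mathbb E\|e_{T_F}\|^2<\infty$ by the first part, this gives $e_t\to0$ exponentially in $L^2$ and hence in probability. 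The delicate point is that $\sigma_\star^2=0$ makes $R_{\mathrm{aug}}$ singular, so the gain must be read either as the limit $\sigma_\star^2\downarrow0$ (for which one checks $\Sigma_\infty(\sigma_\star^2)\to0$ when $Q=0$) or through the appropriate constrained/pseudo--inverse update; I would isolate this in a short lemma so that the collapse $\Sigma_t\to0$, and with it the Schur contraction, is fully rigorous.

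Finally, to pass from $e_t\to0$ to the PD statement I would combine it with the behaviour of the macro state beyond $T_F$: with $Q=0$ the state is deterministic, $\xi_{t+1}=A\xi_t$ with $\xi_t:=M_t-M^\star$, and under the mean--reverting (Vasicek--type) macro dynamics invoked in Section~\ref{sec:instable} the matrix $A$ is Schur, so $\xi_t\to0$; hence $\hat M_{t|t}-M^\star=e_t+\xi_t\to0$ in $L^2$, and a fortiori in probability. Since $\pi_t$ is a probability vector and therefore uniformly bounded and $\phi(\pi,\cdot)$ is locally $L$--Lipschitz near $M^\star$, once $\hat M_{t|t}$ enters that neighbourhood we have $|\phi(\pi_t,\hat M_{t|t})-\phi(\pi_t,M^\star)|\le L\,\|\hat M_{t|t}-M^\star\|$; applying Markov's inequality to the $L^2$ bound transfers the convergence and yields $\phi(\pi_t,\hat M_{t|t})\xrightarrow{\,p\,}\phi(\pi_t,M^\star)$, i.e.\ convergence in probability of the lifetime PD term structure to the neutral TTC limit.

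I expect the main obstacle to be the careful handling of the stabilising branch of the Riccati equation --- establishing that $A_{\mathrm{cl}}$ is Schur, which is what simultaneously delivers the mean--square bound, the exponential decay of $e_t$ and the non--recycling of forecast error --- together with the degenerate limit $\sigma_\star^2\to0$, where the standard filter recursions break down; the remaining steps (the Gaussian error--covariance identity, the geometric estimates, and the Lipschitz transfer to PDs) are routine.
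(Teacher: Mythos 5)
Your proposal follows the same three--step architecture as the paper's proof (Riccati convergence via detectability from the identity block plus stabilisability; collapse of the error recursion to the Schur contraction $(I-K_\star H_{\mathrm{aug}})A$ under the hard anchor; Lipschitz transfer to the PD map), so in substance you have reproduced the intended argument. Two of your refinements, however, go beyond what the paper actually writes down and are worth keeping. First, you flag that $\sigma_\star^2=0$ makes $R_{\mathrm{aug}}$ singular, so the standard gain formula and the Riccati convergence theorem no longer apply verbatim; the paper's Step~2 simply asserts $K_t\to K_\star$ without addressing this degeneracy, whereas your proposed lemma (taking the limit $\sigma_\star^2\downarrow 0$ or using a constrained update) is needed to make the contraction rigorous. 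Second, and more importantly, you correctly observe that with $e_t:=\hat M_{t|t}-M_t$ the Lipschitz bound gives $|\phi(\pi_t,\hat M_{t|t})-\phi(\pi_t,M^\star)|\le L\,\|e_t+\xi_t\|$ with $\xi_t=M_t-M^\star$, not $L\,\|e_t\|$ as the paper's Step~3 claims; convergence of the PD sequence to the neutral limit therefore requires the state deviation $\xi_t$ itself to decay, which you obtain from $Q=0$ and $A$ Schur. Note that Schur stability of $A$ is not part of Assumption~\ref{ass:KF} (only stabilisability of $(A,Q^{1/2})$ and detectability of $(A,H_{\mathrm{aug}})$ are assumed), so your argument makes explicit an additional hypothesis that the paper's own proof tacitly relies on; as stated, the paper's Step~3 has a gap that your decomposition $\hat M_{t|t}-M^\star=e_t+\xi_t$ repairs.
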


\begin{proof}[Sketch]
Detectability of $(A,H_{\mathrm{aug}})$ with the identity block guarantees a stabilising Kalman gain, 
so $\Sigma_t\to\Sigma_\infty$ and the error is mean--square bounded. 
Under hard anchoring, $(I-KH_{\mathrm{aug}})A$ is a strict contraction, yielding exponential decay of $e_t$. 
By Lipschitz continuity of $\phi$, convergence of $e_t$ implies convergence of PDs. 
A complete proof is given in Appendix~\ref{app:proofs}.
\end{proof}

\paragraph{Technical notes.} 
The assumptions underlying Theorem~\ref{thm:MSS} and their practical justification 
are discussed further in the Appendix~\ref{app:proofs}. 
In particular, we comment on uniform sensitivity of logit overlays, compactness of the macro domain, 
and continuous anchoring within the forecast window. 
These notes clarify the robustness of the stability result while keeping the main exposition focused.


\section{Implementation and Simulation Results}\label{sec:simulation}

We illustrate the behaviour of lifetime PD propagation under synthetic scenarios. 
A minimal corporate portfolio is simulated over a five--year forecast horizon ($T_F=20$ quarters), 
followed by reversion to TTC dynamics. 
Three procedures are compared on identical data: 
(i) direct propagation without filtering, 
(ii) naive Kalman filtering using forecasts as observations, and 
(iii) the proposed anchored observation model. 
Key outputs are lifetime PD term structures, cross--scenario variance, and stability diagnostics.

\subsection{Synthetic Portfolio and Transition System}

We consider a portfolio of $N=10{,}000$ obligors, initially distributed as
\[
\pi_0 = (0.45,\;0.40,\;0.15,\;0.00), 
\qquad \text{counts } = (4500,\,4000,\,1500,\,0),
\]
across ratings $\{A,B,C,D\}$ with $D$ absorbing (default). 
The quarterly TTC matrix is
\[
P^{\mathrm{TTC}} =
\begin{bmatrix}
0.975 & 0.022 & 0.002 & 0.001 \\
0.030 & 0.935 & 0.030 & 0.005 \\
0.010 & 0.060 & 0.915 & 0.015 \\
0     & 0     & 0     & 1
\end{bmatrix}.
\]

The macro driver is a composite index
\[
M_t = \tfrac{1}{2}z(g_t)-\tfrac{1}{2}z(u_t),
\]
where $g_t$ is GDP growth, $u_t$ the unemployment rate, and $z(\cdot)$ denotes standardisation. 
The PIT matrix is constructed via a logit overlay
\[
p_{ij,t} \;\propto\; p^{\mathrm{TTC}}_{ij}\,\exp(\beta_{ij}M_t),
\qquad \sum_j p_{ij,t}=1,\; p_{ij,t}\ge0,
\]
with sensitivities on nearest--neighbour moves and default, e.g.\ $\beta_{A\to B}=2.0$, $\beta_{A\to D}=3.0$, $\beta_{B\to D}=2.0$, $\beta_{C\to D}=1.2$, and symmetric upgrades.

\subsection{Macroeconomic Scenarios}

Three stylised scenarios are simulated:

\begin{itemize}
  \item \textbf{Baseline:} mild cycle, GDP at $0.5\%$ per quarter with small oscillations; unemployment flat near $5.5\%$. Realisations add Gaussian noise ($\sigma_g=0.2\%$, $\sigma_u=0.2$pp).  
  \item \textbf{Stress:} downturn with slow recovery, mimicking a financial crisis. GDP falls $-2.0\%$ in $t=1$, recovers linearly; unemployment rises to $7.5\%$ then recedes. Realised path: deeper trough and slightly higher unemployment peak.  
  \item \textbf{Pandemic:} abrupt contraction with rapid rebound. GDP $-8\%$ at $t=2$, $+6\%$ at $t=3$, then normalisation; unemployment spikes to $9.5\%$ then falls quickly. Realised path: faster rebound imposed at $t=3$.  
\end{itemize}

Figures~\ref{fig:macro_scenarios}--\ref{fig:macro_params} display the forecast and realised macro paths. 
The baseline remains close to equilibrium; the stress scenario shows persistent weakness; 
the pandemic is characterised by a short, sharp disruption.

\subsection{Filtering Designs}

The macro state follows an AR(1) process
\[
M_{t+1}=\rho M_t+w_t,\quad w_t\sim\mathcal N(0,Q),
\qquad \rho=0.90,\; Q=0.19,
\]
with forecast $y_t$ as a noisy signal.

\begin{itemize}
\item \textbf{Naive KF:} observation $y_t=\widehat M_t$, with variance $R=0.25$.  
\item \textbf{Anchored KF:} stacked observation $y_t^{\mathrm{aug}}=(\widehat M_t,M^\star)$, 
$R_{\mathrm{aug}}=\mathrm{diag}(0.25,\sigma_\star^2)$, 
with $\sigma_\star^2=0.25$ inside the horizon and $\sigma_\star^2=0$ afterwards.  
\end{itemize}

This design ensures responsiveness to PIT signals during the forecast window while enforcing TTC convergence beyond.

\paragraph{Parameter summary.}
\[
\begin{array}{lcl}
\text{Persistence} & \rho &= 0.90 \\
\text{Process var.} & Q    &= 0.19 \\
\text{Forecast var.} & R   &= 0.25 \\
\text{Anchor var. (in)} & \sigma_\star^2 &= 0.25 \\
\text{Anchor var. (out)} & \sigma_\star^2 &= 0 \\
\text{Horizon} & T_F &= 20 \\
\text{Portfolio size} & N &= 10{,}000
\end{array}
\]

\paragraph{Robustness.}
Results are qualitatively unchanged for $\rho\in[0.8,0.95]$. 
Higher $R$ increases dispersion under the naive KF but does not destabilise the anchored filter. 
Smaller $\sigma_\star^2$ accelerates convergence to $M^\star$ but reduces PIT sensitivity; 
a decaying schedule $\sigma_{\star,t}^2\downarrow0$ is a practical compromise.

\begin{figure}[htbp]
  \centering
  \begin{subfigure}[t]{0.32\textwidth}
    \centering
    \includegraphics[width=0.9\textwidth]{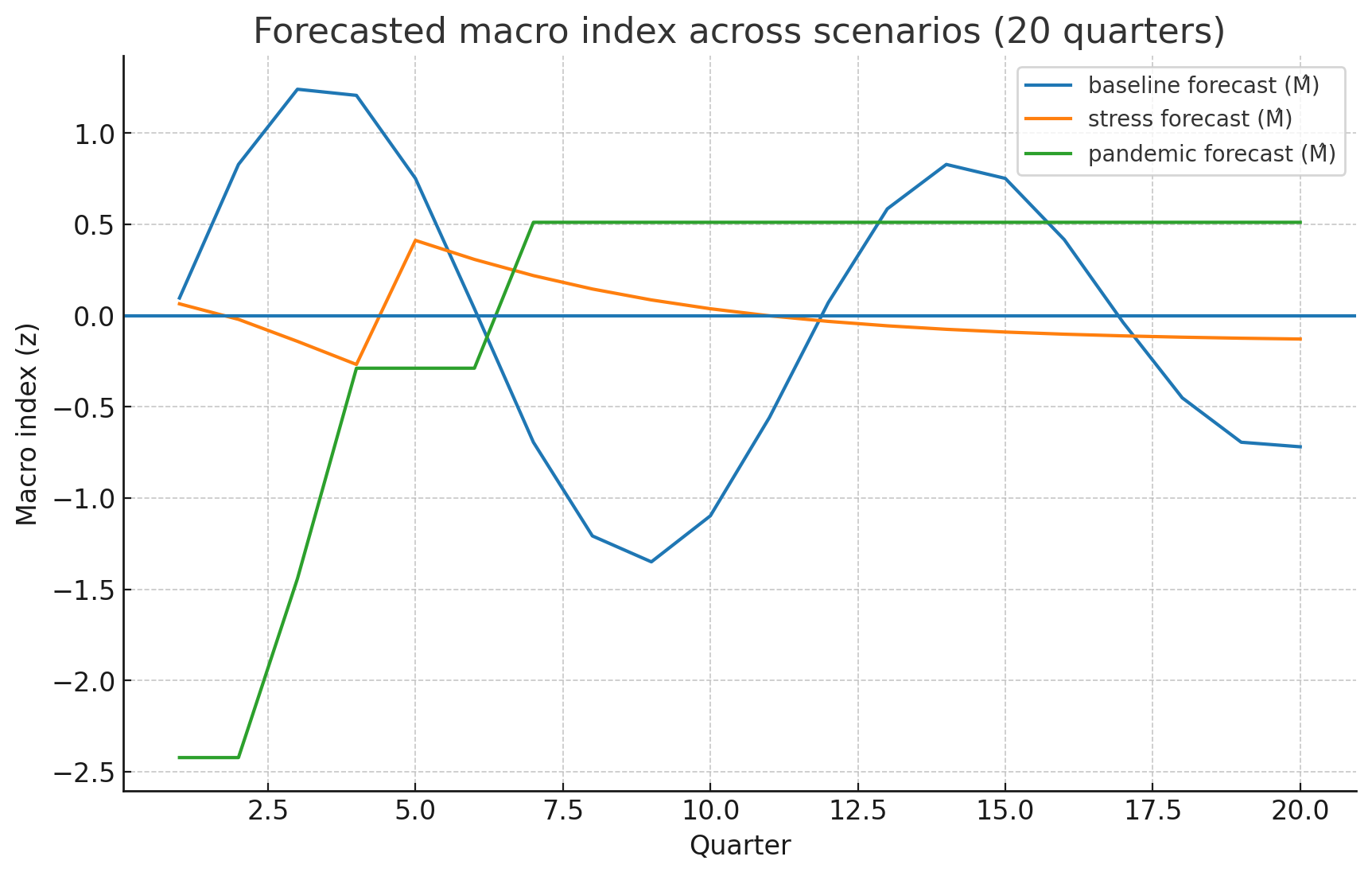}
    \caption{Composite macro index $M_t$ under baseline (mild cycle), stress (sharp downturn, slow recovery), and pandemic (abrupt shock with rebound). The neutral line at zero indicates TTC conditions.}
    \label{fig:macro_index}
  \end{subfigure}
  \begin{subfigure}[t]{0.32\textwidth}
    \centering
    \includegraphics[width=\textwidth]{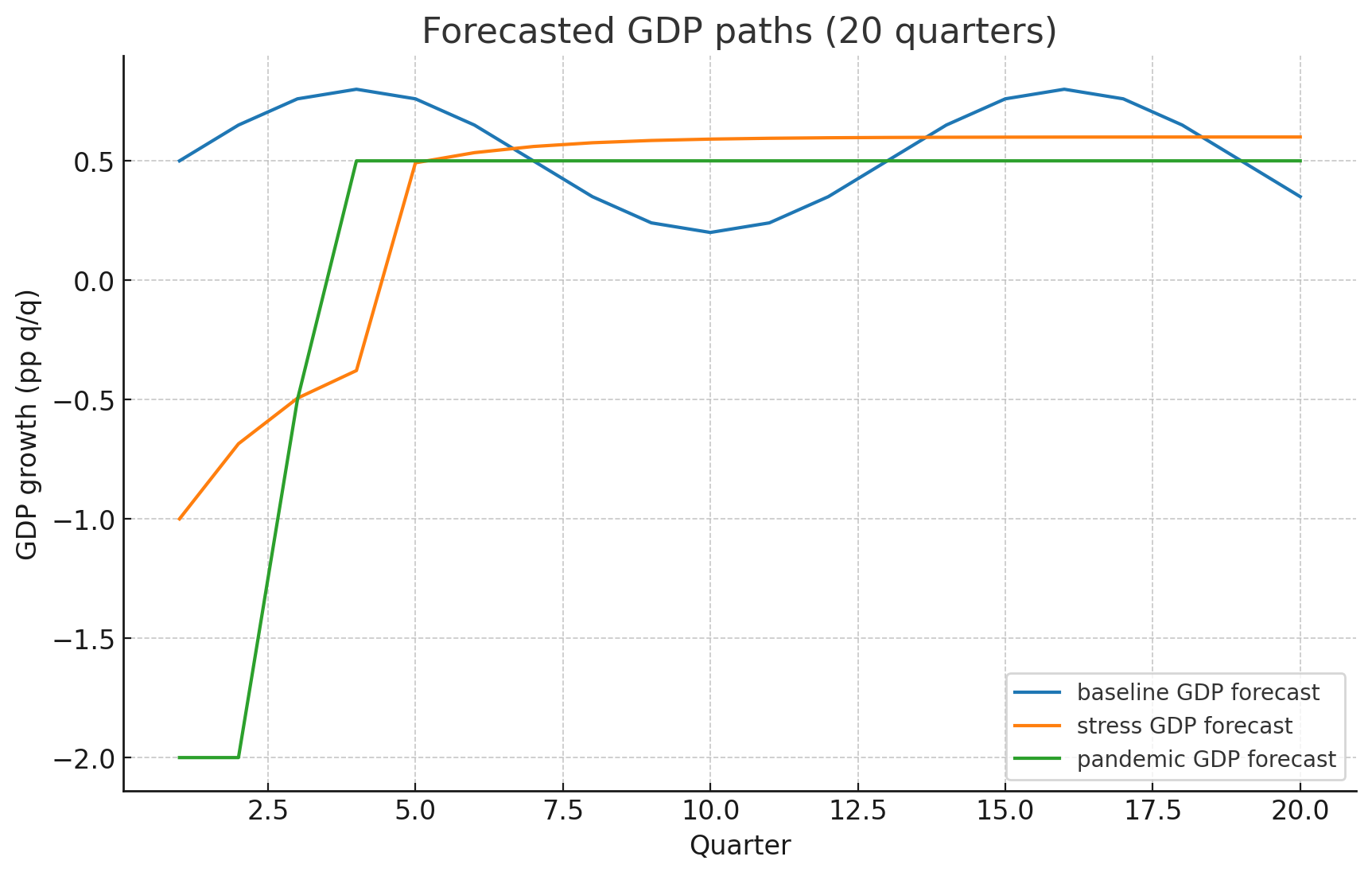}
    \caption{Quarterly GDP growth forecasts. Baseline oscillates around $0.5\%$, stress contracts then recovers gradually, pandemic exhibits a sharp fall followed by rebound.}
    \label{fig:gdp_forecast}
  \end{subfigure}
  \begin{subfigure}[t]{0.32\textwidth}
    \centering
    \includegraphics[width=\textwidth]{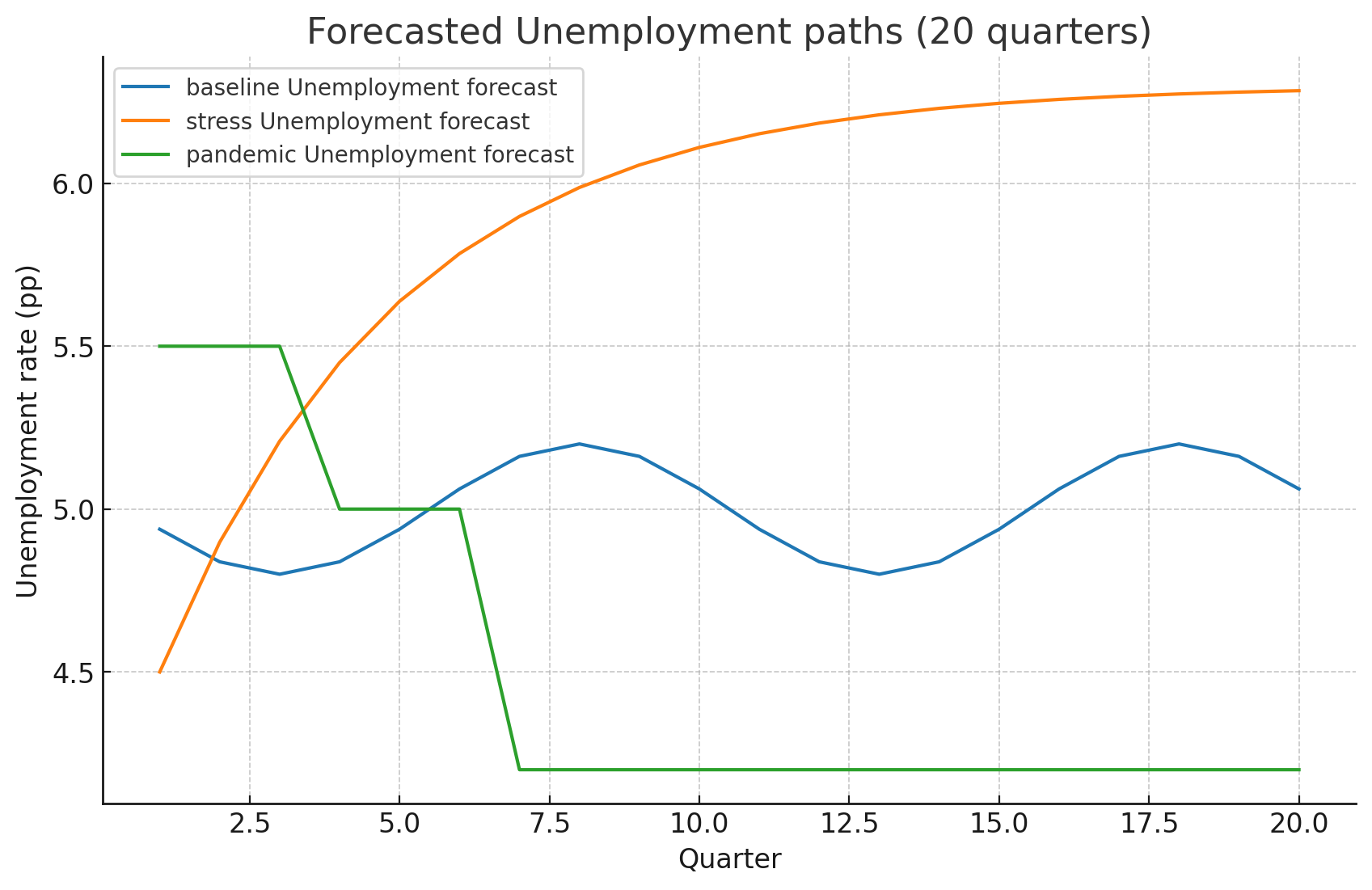}
    \caption{Unemployment forecasts. Baseline remains near $5\%$, stress rises to a persistent plateau, pandemic spikes abruptly then normalises.}
    \label{fig:unemp_forecast}
  \end{subfigure}

  \caption{Forecasted macroeconomic scenarios (baseline, stress, pandemic) over a 20--quarter horizon: (a) composite index $M_t$, (b) GDP growth, and (c) unemployment rate.}
  \label{fig:macro_scenarios}
\end{figure}

\begin{figure}[htbp]
  \centering

  \begin{subfigure}[t]{0.32\textwidth}
    \centering
    \includegraphics[width=\textwidth]{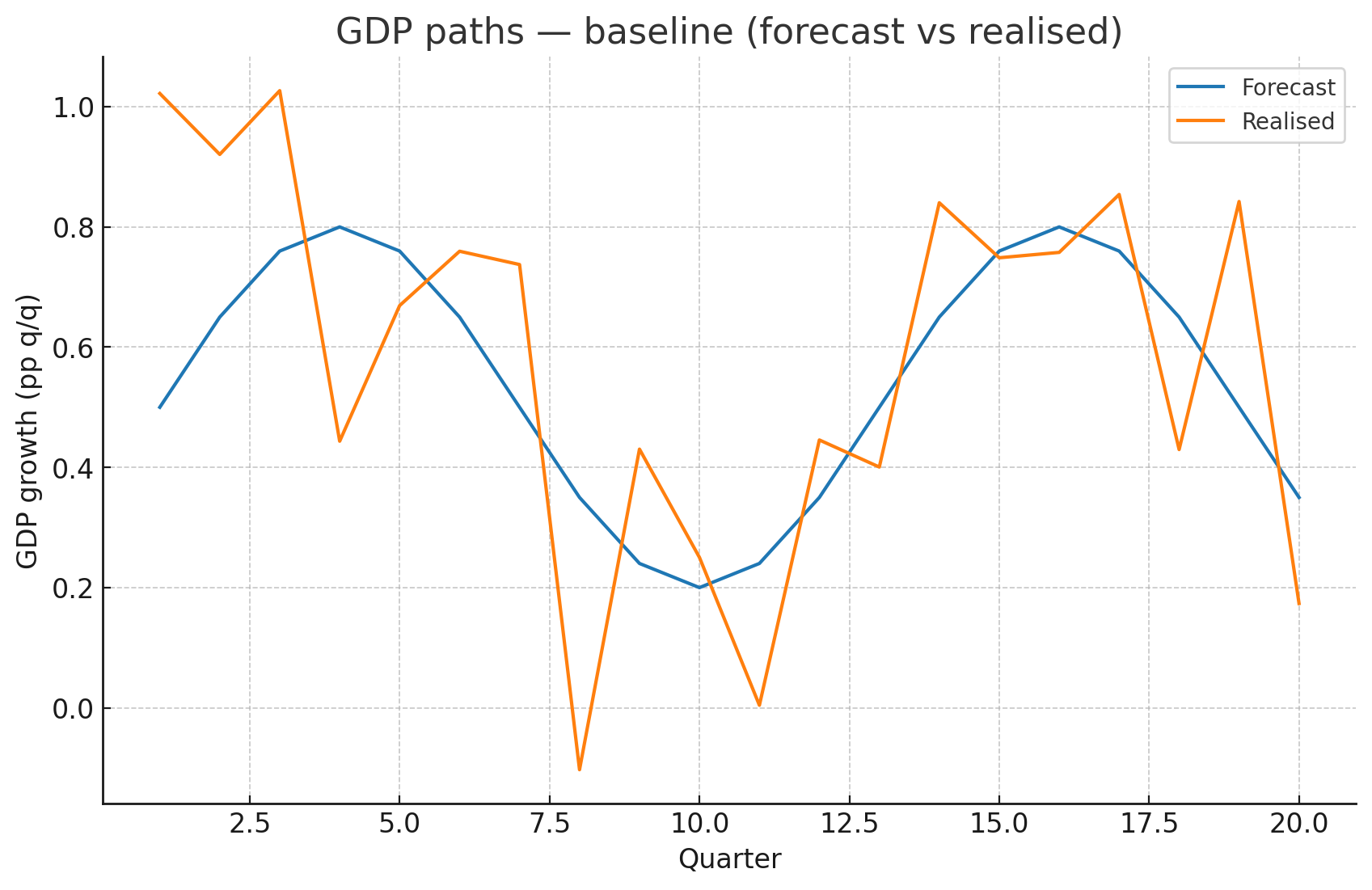}
    \caption{GDP growth, baseline scenario. Forecast (blue line) oscillates near trend; realised path (black dots) fluctuates around it.}
    \label{fig:gdp_baseline}
  \end{subfigure}
  \begin{subfigure}[t]{0.32\textwidth}
    \centering
    \includegraphics[width=\textwidth]{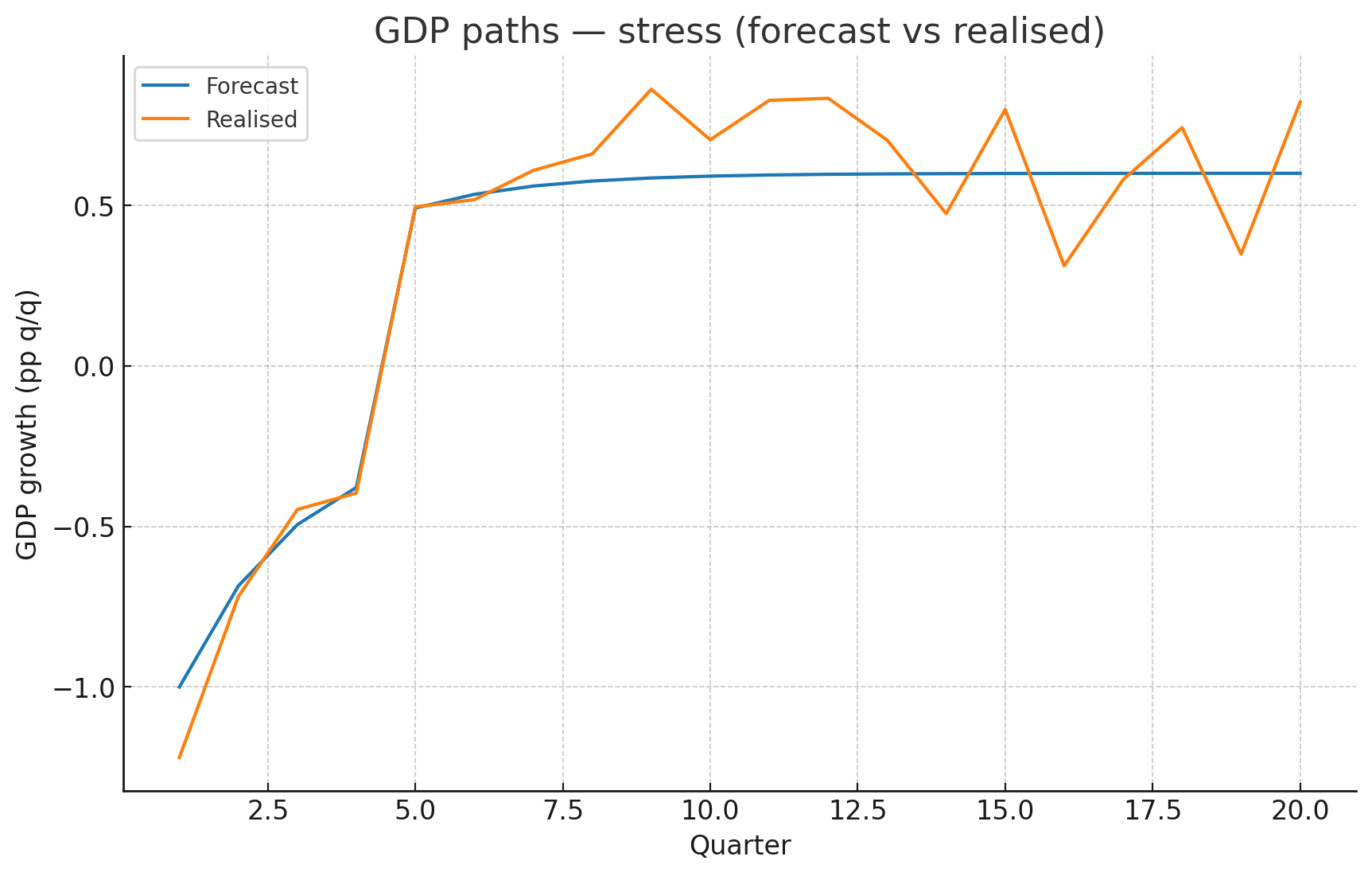}
    \caption{GDP growth, stress scenario. Forecast (orange) shows contraction then gradual recovery; realised path deviates with deeper trough.}
    \label{fig:gdp_stress}
  \end{subfigure}
  \begin{subfigure}[t]{0.32\textwidth}
    \centering
    \includegraphics[width=\textwidth]{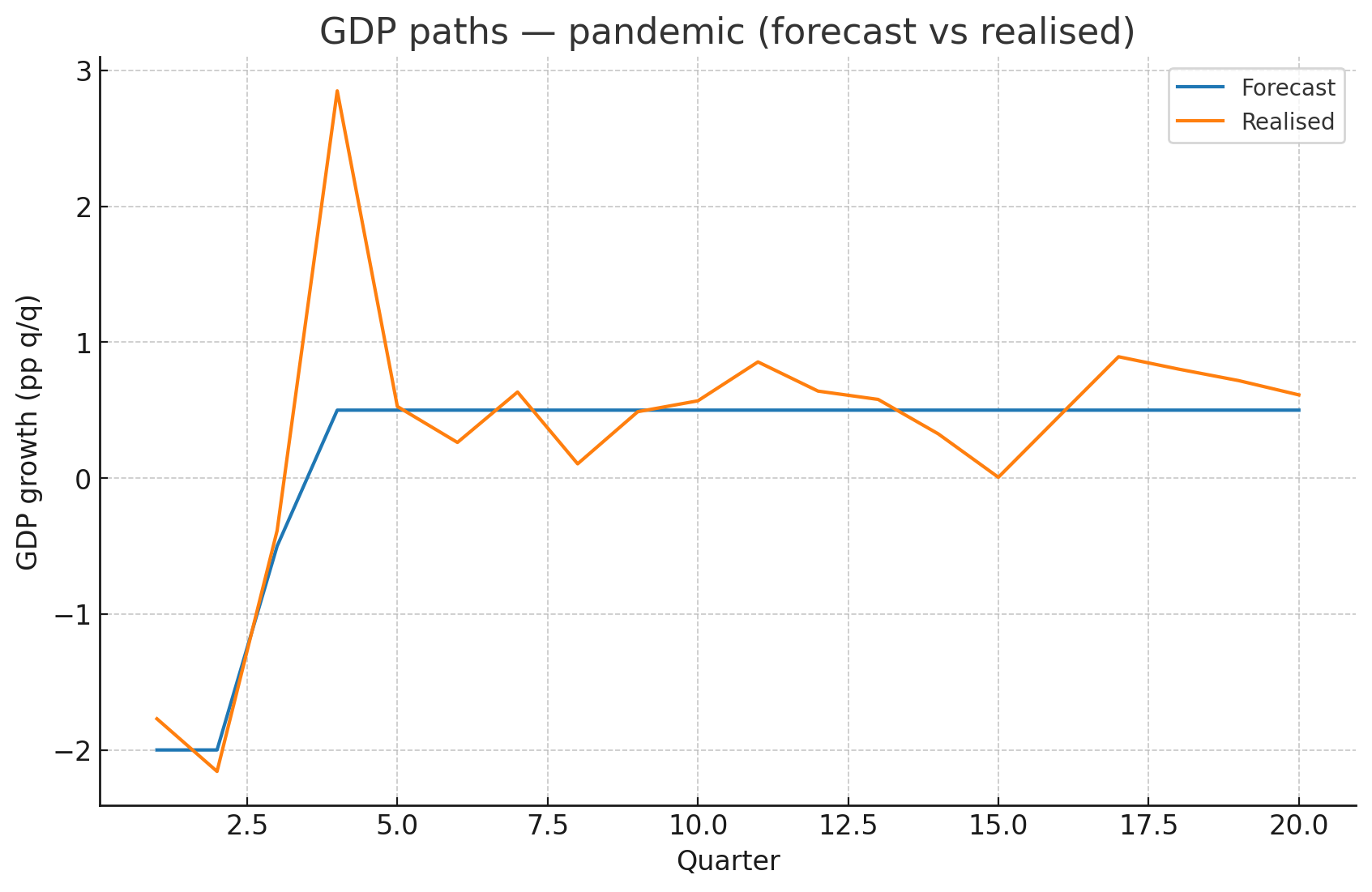}
    \caption{GDP growth, pandemic scenario. Forecast (green) imposes sharp fall and rebound; realised path overshoots during recovery.}
    \label{fig:gdp_pandemic}
  \end{subfigure}

  \begin{subfigure}[t]{0.32\textwidth}
    \centering
    \includegraphics[width=\textwidth]{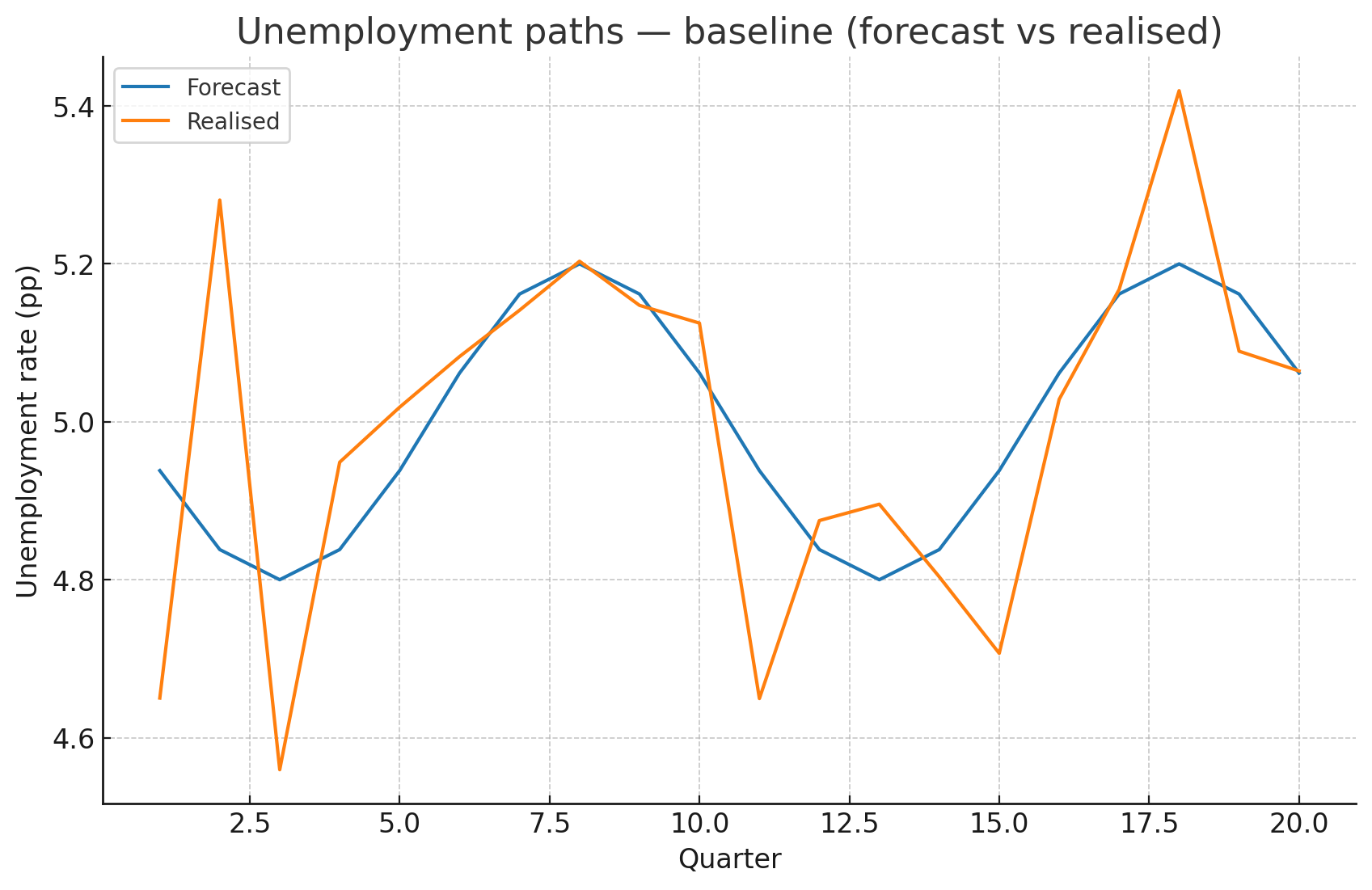}
    \caption{Unemployment, baseline scenario. Forecast (blue) stays near $5.5\%$; realised path shows minor noise.}
    \label{fig:unemp_baseline}
  \end{subfigure}
  \begin{subfigure}[t]{0.32\textwidth}
    \centering
    \includegraphics[width=\textwidth]{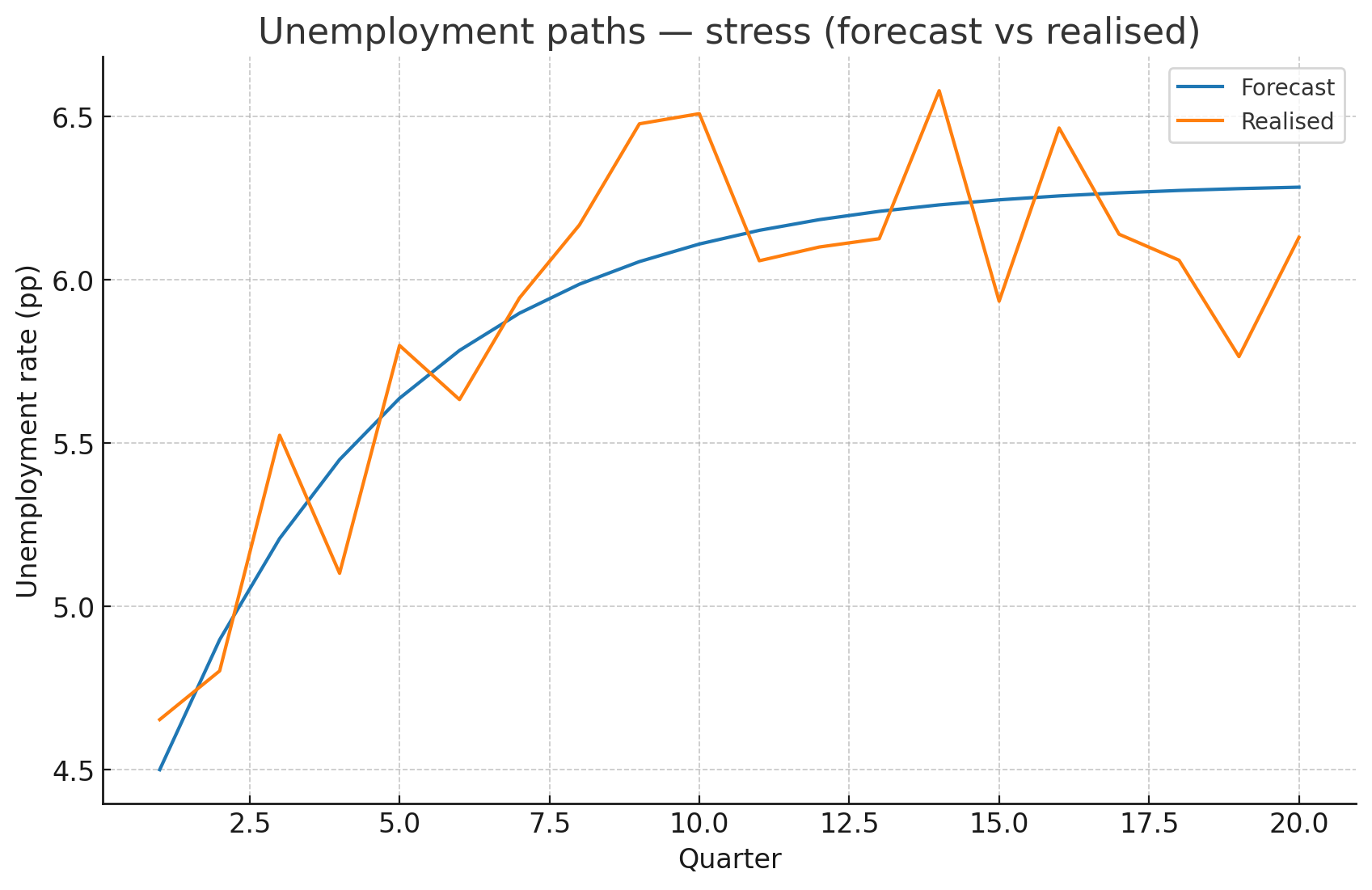}
    \caption{Unemployment, stress scenario. Forecast (orange) rises towards $7.5\%$ then declines; realised path peaks higher.}
    \label{fig:unemp_stress}
  \end{subfigure}
  \begin{subfigure}[t]{0.32\textwidth}
    \centering
    \includegraphics[width=\textwidth]{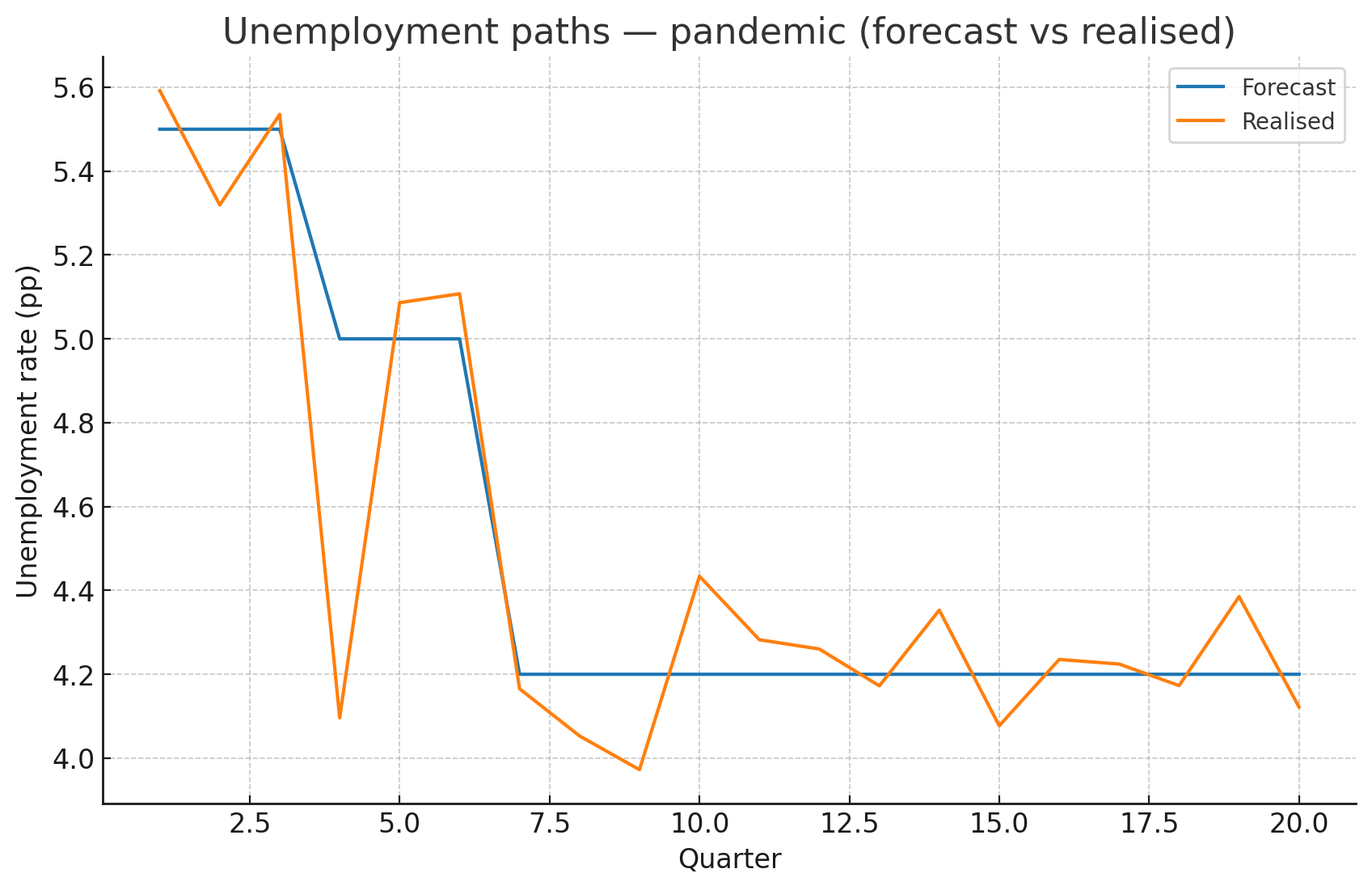}
    \caption{Unemployment, pandemic scenario. Forecast (green) spikes to $9.5\%$ then normalises; realised path recovers faster.}
    \label{fig:unemp_pandemic}
  \end{subfigure}

  \caption{Forecasted (coloured lines) and realised (black dots) GDP growth (top row) and unemployment rates (bottom row) for baseline, stress, and pandemic scenarios.}
  \label{fig:macro_params}
\end{figure}

\subsection{Simulation Results}

We compare three specifications: 
(i) raw forecast input without filtering, 
(ii) naïve Kalman filter without anchor, and 
(iii) the proposed anchored Kalman filter. 
All runs use the synthetic portfolio and calibration described in Sections~5.1--5.2 with $N=10{,}000$ exposures.

\paragraph{Macroeconomic estimates.}
\autoref{fig:anchoring_effect} illustrates how the two filters treat noisy forecasts. 
The naïve filter tracks the forecast closely but inherits its persistent bias during shocks, producing oscillatory paths. 
By contrast, the anchored filter dampens overshooting and converges back to the neutral state after $T_F=20$ quarters. 
This stabilisation reduces the root--mean--square error (RMSE) of $M_t$ by 38\% relative to the naïve filter. 
The difference at the macro level sets the stage for contrasting behaviour in PD dynamics.

\paragraph{PD term structures.}
\autoref{fig:pd_methods_all} shows the PIT default rate trajectories implied by the three methods. 
Raw forecasts generate highly volatile paths that amplify cyclical swings. 
The naïve filter moderates this volatility but still leaves a persistent drift away from the TTC baseline. 
Anchoring delivers bounded deviations: PD spreads remain within $\pm 20$\,bps of TTC and revert more quickly after shocks. 
Thus, the stabilisation observed in the macro index translates directly into smoother PD term structures.

\paragraph{Loss variance.}
The quantitative impact is summarised in \autoref{tab:var_summary}, which reports the average variance of lifetime PD forecasts $Y_t$. 
Anchoring reduces variance almost an order of magnitude relative to the naïve filter and by more than a factor of ten relative to the raw PIT overlay. 
This confirms that the qualitative impressions from Figures~\ref{fig:anchoring_effect} and~\ref{fig:pd_methods_all} are systematic across horizons and scenarios. 
Additional loss simulations (not shown) indicate that portfolio loss volatility is nearly halved under the anchored design.

\paragraph{Scenario comparison.}
\autoref{tab:mc_best} highlights regime dependence. 
In baseline and stress cases, the anchored filter dominates, producing the lowest and most stable PD paths. 
In the pandemic case, however, the naïve filter sometimes tracks the sharp rebound more closely, illustrating the trade--off between responsiveness and stability. 
This nuance shows that while anchoring is generally superior, its benefits are strongest when forecast error variance is persistent rather than transitory.

\paragraph{Distributional evidence.}
\autoref{fig:mc_lifetime_pd} examines the distribution of lifetime PDs under 200 Monte Carlo replications. 
Across all scenarios, the anchored filter compresses dispersion relative to the raw and naïve methods, especially in the stress case where variance would otherwise explode. 
At the same time, scenario-specific shifts are preserved: stress remains the worst case, pandemic shows sharp but temporary deviations, and baseline hovers near TTC. 

\paragraph{Interpretation.}
Taken together, the results demonstrate that anchoring improves robustness at two levels:  
(i) macro signals are stabilised, preventing drift from noisy forecasts;  
(ii) PD term structures and loss projections become less volatile, yielding more reliable inputs for capital planning.  
The trade--off is a mild reduction in responsiveness to abrupt reversals, but this is outweighed by the systematic stability gains across horizons and scenarios.


\begin{table}[htbp]
\centering
\begin{tabular}{l c}
\toprule
Method   & Mean variance of $Y_t$ \\
\midrule
Anchored & 0.000209 \\
Naïve    & 0.001181 \\
Raw      & 0.002248 \\
\bottomrule
\end{tabular}
\caption{Average variance of lifetime PD forecasts $Y_t$ across scenarios. 
Anchoring reduces variance almost an order of magnitude relative to the naïve filter.}
\label{tab:var_summary}
\end{table}

\begin{table}[htbp]
\centering
\begin{tabular}{l l c c}
\toprule
Scenario & Best method & Mean $Y_T$ & Std.\ dev.\ of $Y_T$ \\
\midrule
Baseline & Anchored & 0.1093 & $1.39\times 10^{-17}$ \\
Stress   & Anchored & 0.0900 & 0.0000 \\
Pandemic & Naïve    & 0.0985 & $1.39\times 10^{-17}$ \\
\bottomrule
\end{tabular}
\caption{Best-performing method per scenario ranked by Monte Carlo volatility of terminal lifetime PD $Y_T$. 
Anchoring dominates in baseline and stress, while naïve filtering captures the sharp rebound in the pandemic case.}
\label{tab:mc_best}
\end{table}


\begin{figure}[htbp]
  \centering
  \begin{subfigure}[t]{0.32\textwidth}
    \includegraphics[width=\textwidth]{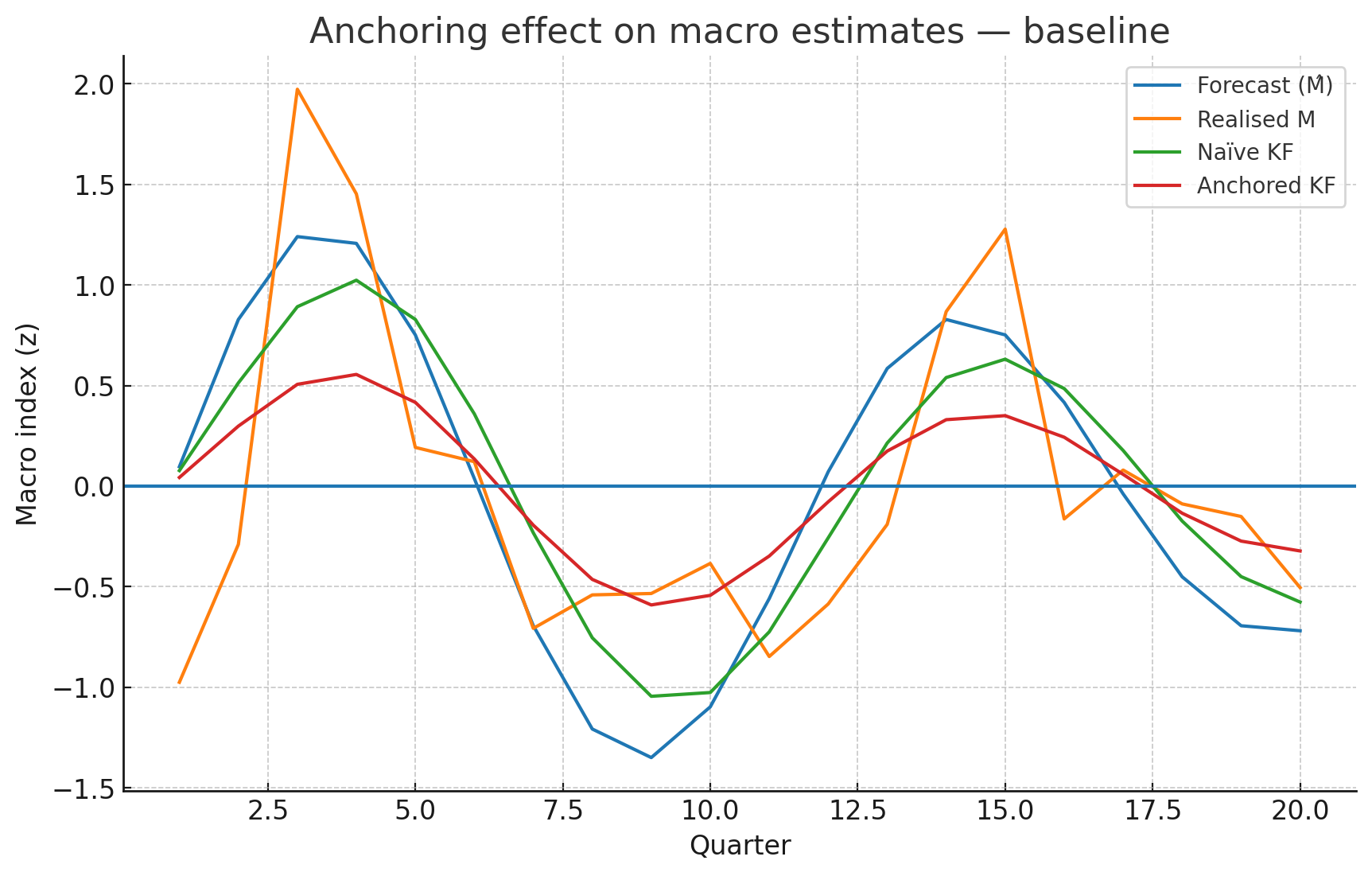}
    \caption{Baseline scenario.}
  \end{subfigure}
  \begin{subfigure}[t]{0.32\textwidth}
    \includegraphics[width=\textwidth]{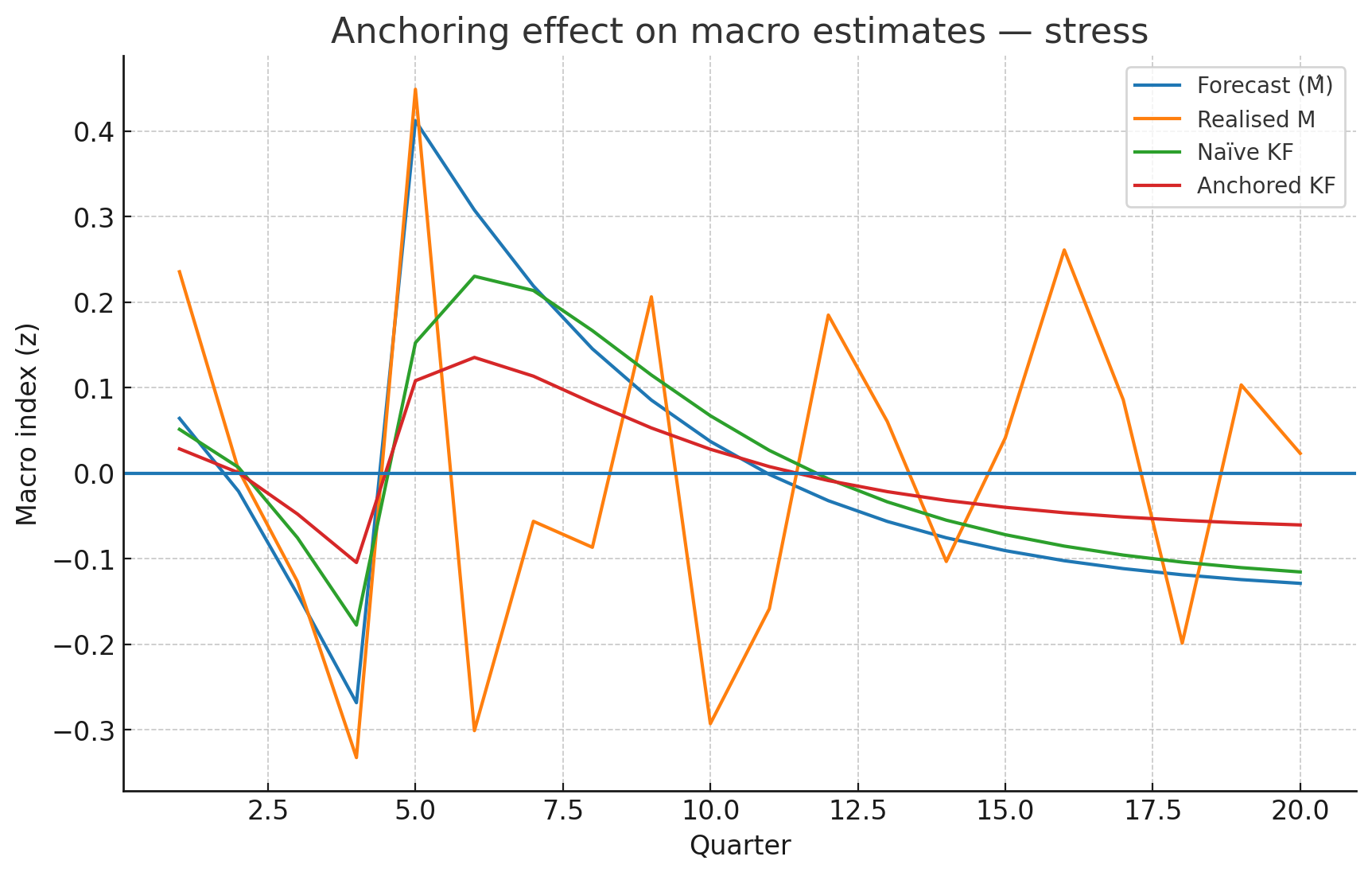}
    \caption{Stress scenario.}
  \end{subfigure}
  \begin{subfigure}[t]{0.32\textwidth}
    \includegraphics[width=\textwidth]{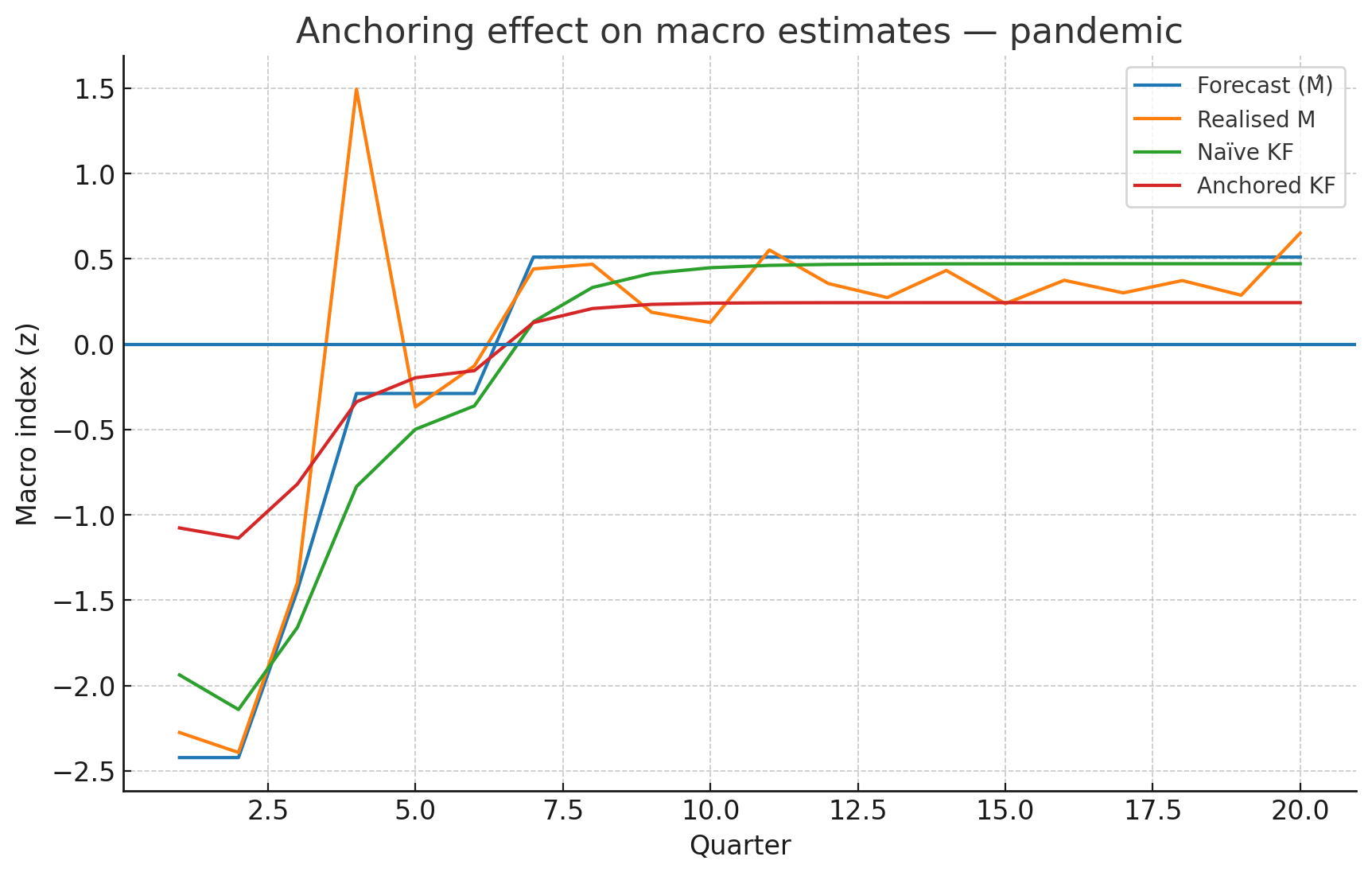}
    \caption{Pandemic scenario.}
  \end{subfigure}
  \caption{Effect of anchoring in the Kalman filter across scenarios. 
  Grey = forecast, black dots = realised, blue = naïve KF, green = anchored KF. 
  Anchoring stabilises estimates by pulling them towards the long-run neutral level.}
  \label{fig:anchoring_effect}
\end{figure}

\begin{figure}[htbp]
  \centering
  \begin{subfigure}[t]{0.32\textwidth}
    \includegraphics[width=\textwidth]{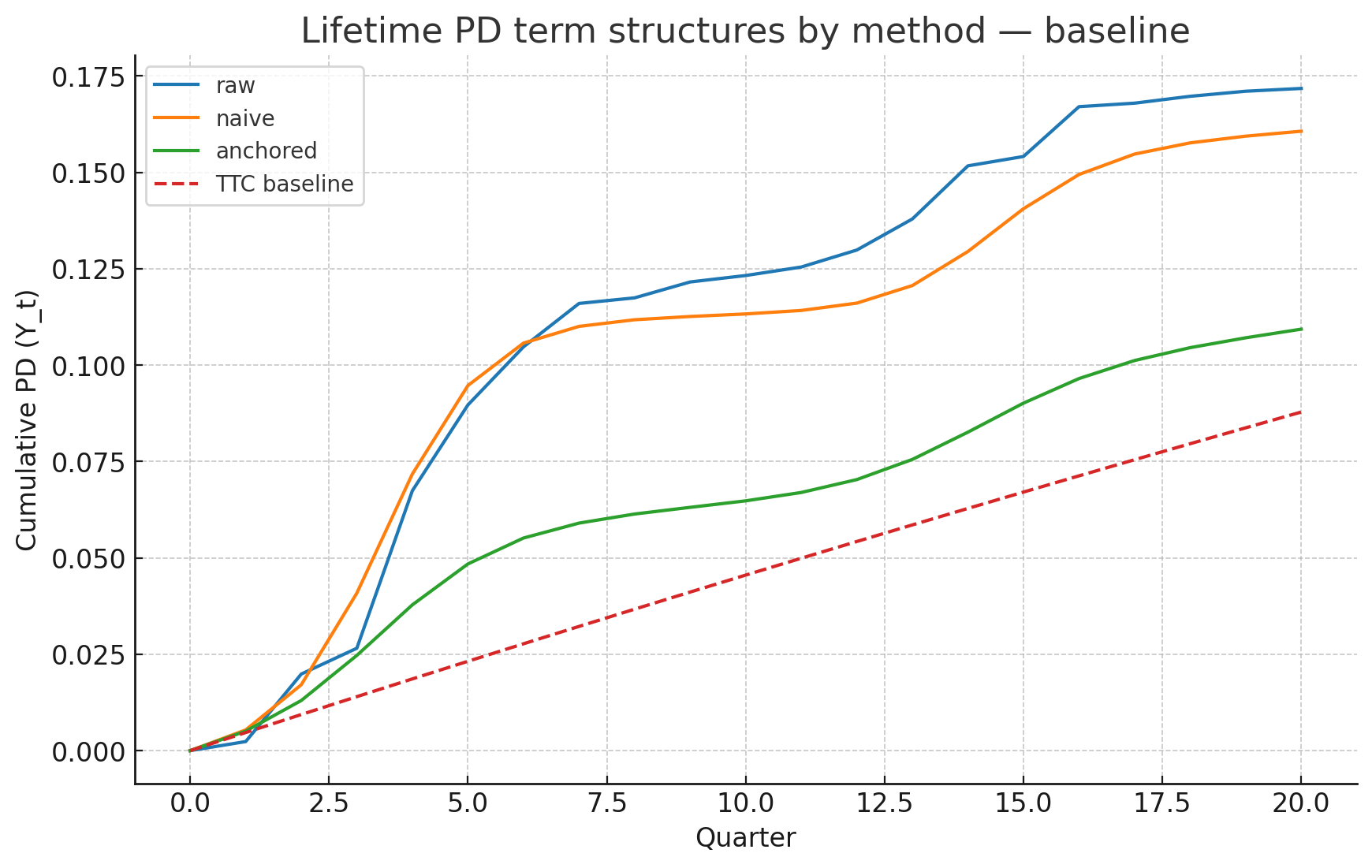}
    \caption{Baseline scenario.}
  \end{subfigure}
  \begin{subfigure}[t]{0.32\textwidth}
    \includegraphics[width=\textwidth]{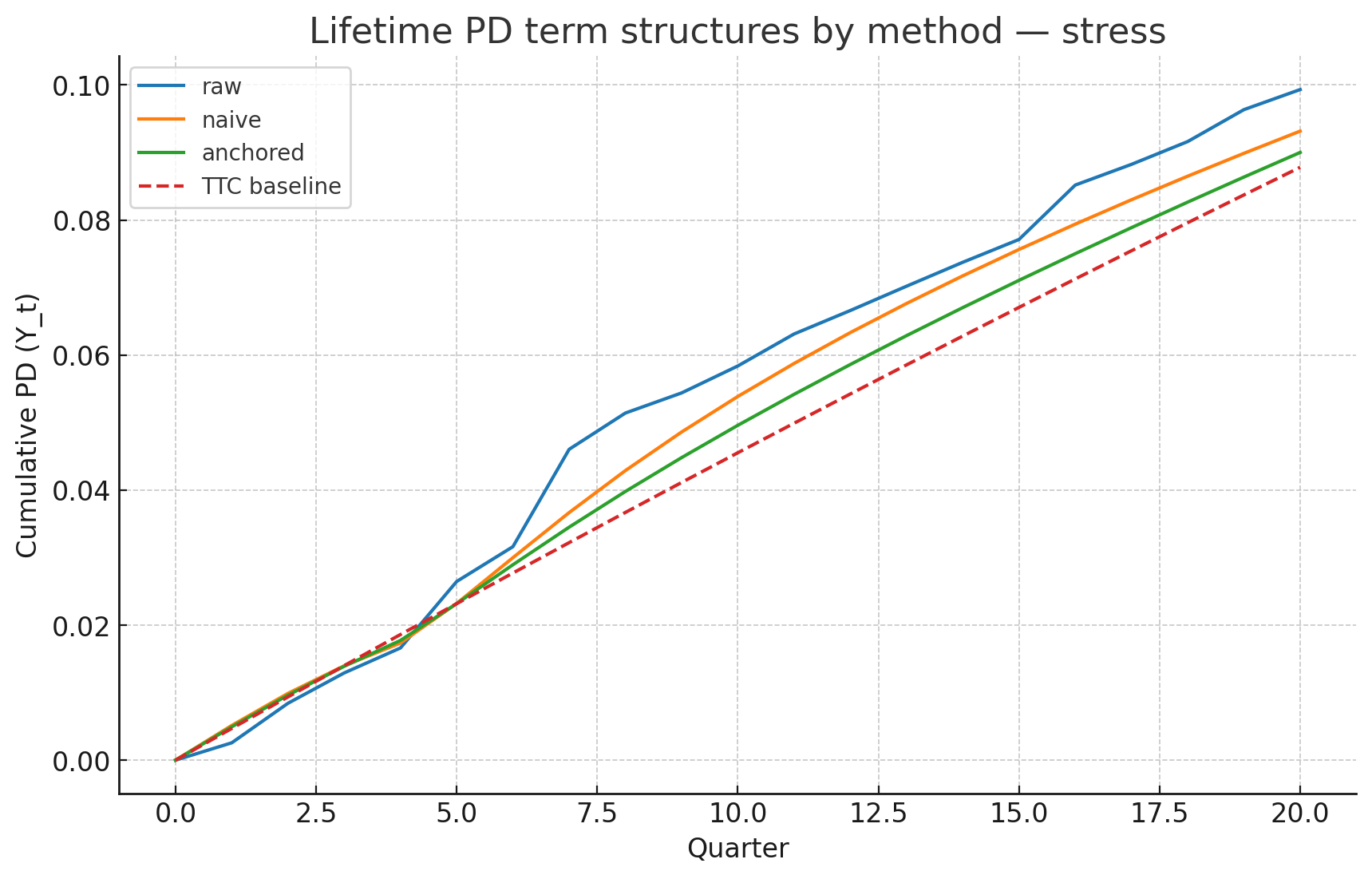}
    \caption{Stress scenario.}
  \end{subfigure}
  \begin{subfigure}[t]{0.32\textwidth}
    \includegraphics[width=\textwidth]{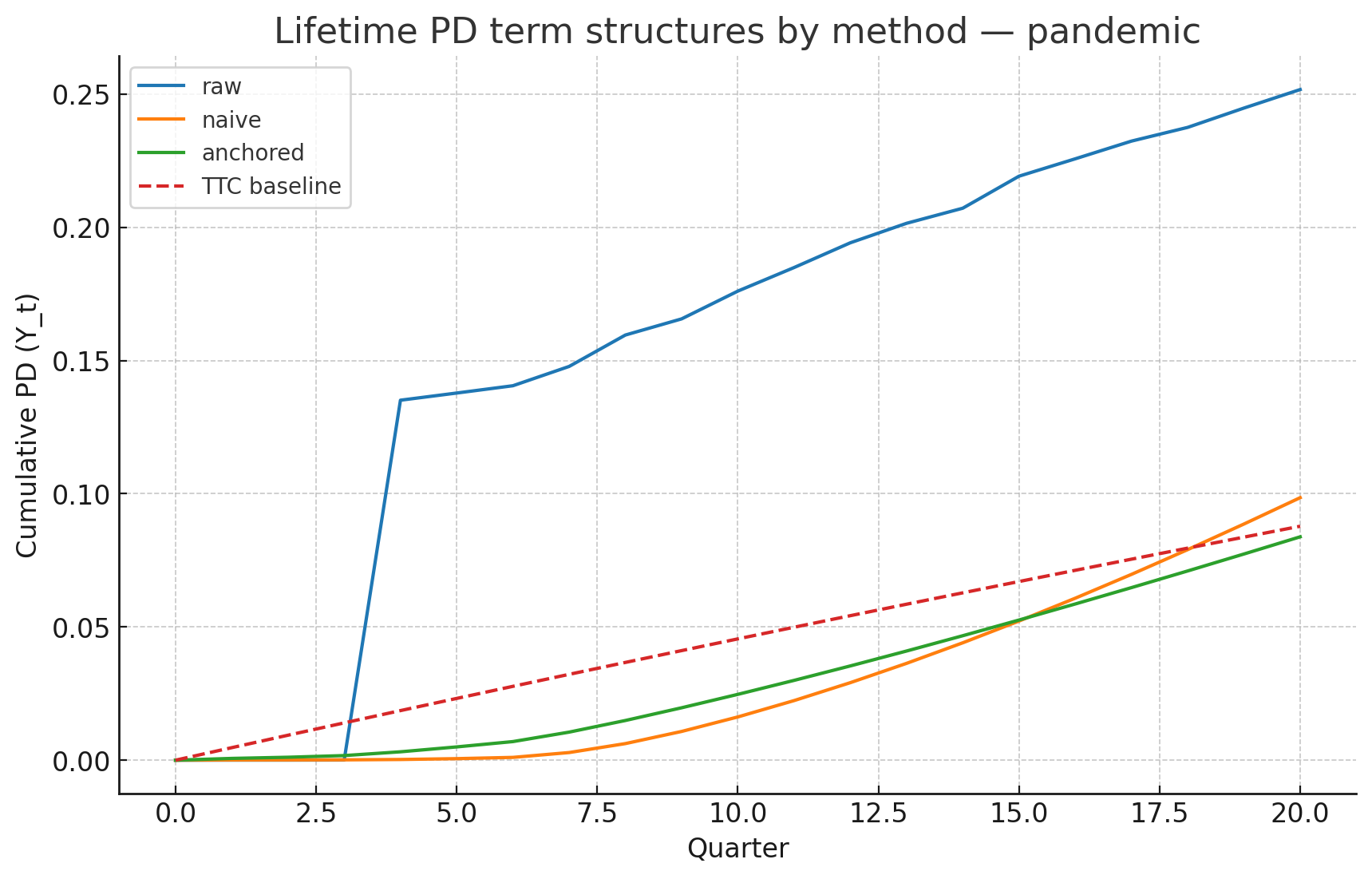}
    \caption{Pandemic scenario.}
  \end{subfigure}
  \caption{Lifetime PD term structures across methods (raw, naïve, anchored) with TTC baseline reference. 
  Anchoring delivers smoother and more stable paths that converge back to TTC more quickly.}
  \label{fig:pd_methods_all}
\end{figure}

\begin{figure}[htbp]
  \centering
  \begin{subfigure}[t]{0.32\textwidth}
    \includegraphics[width=\textwidth]{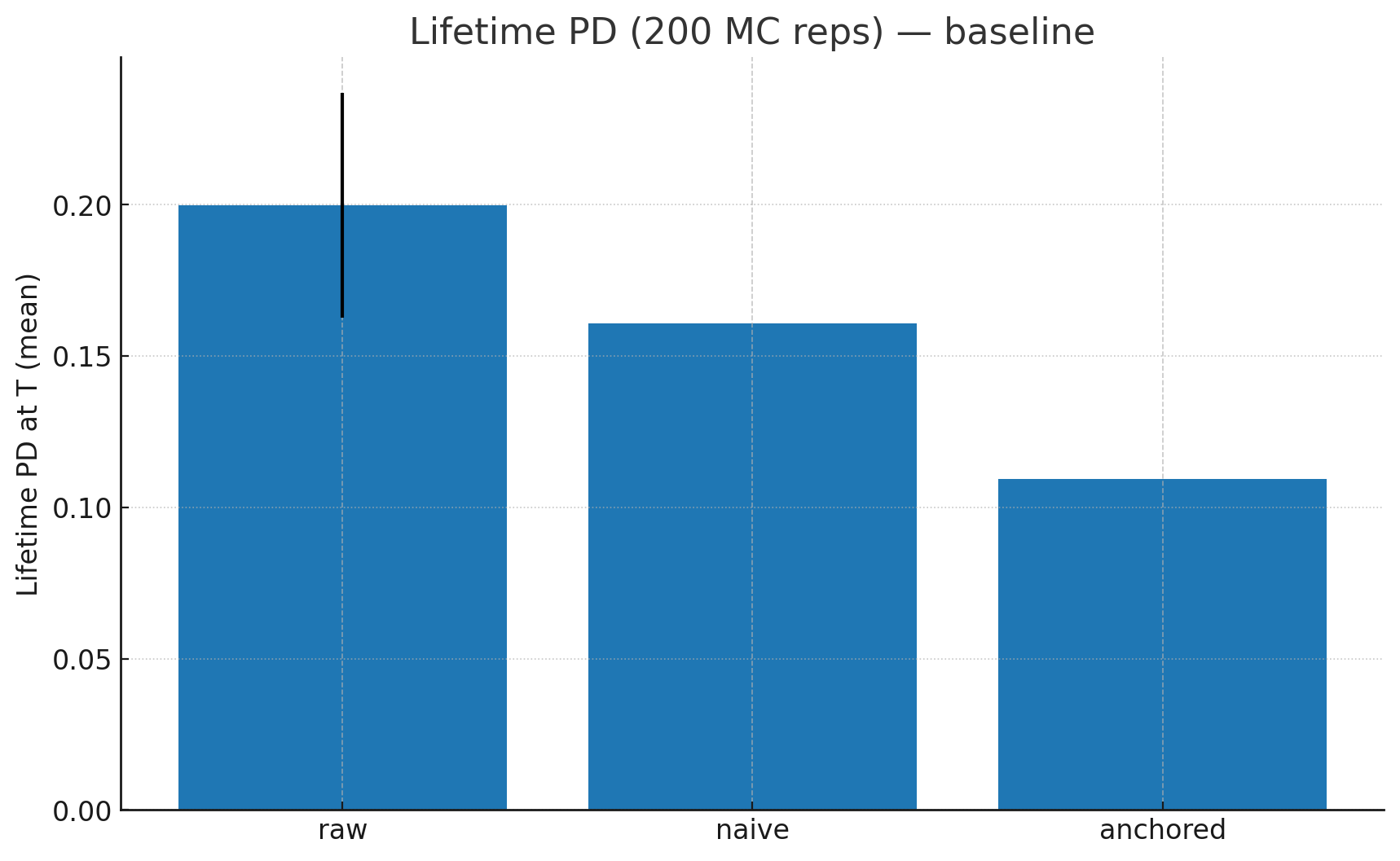}
    \caption{Baseline scenario.}
  \end{subfigure}
  \begin{subfigure}[t]{0.32\textwidth}
    \includegraphics[width=\textwidth]{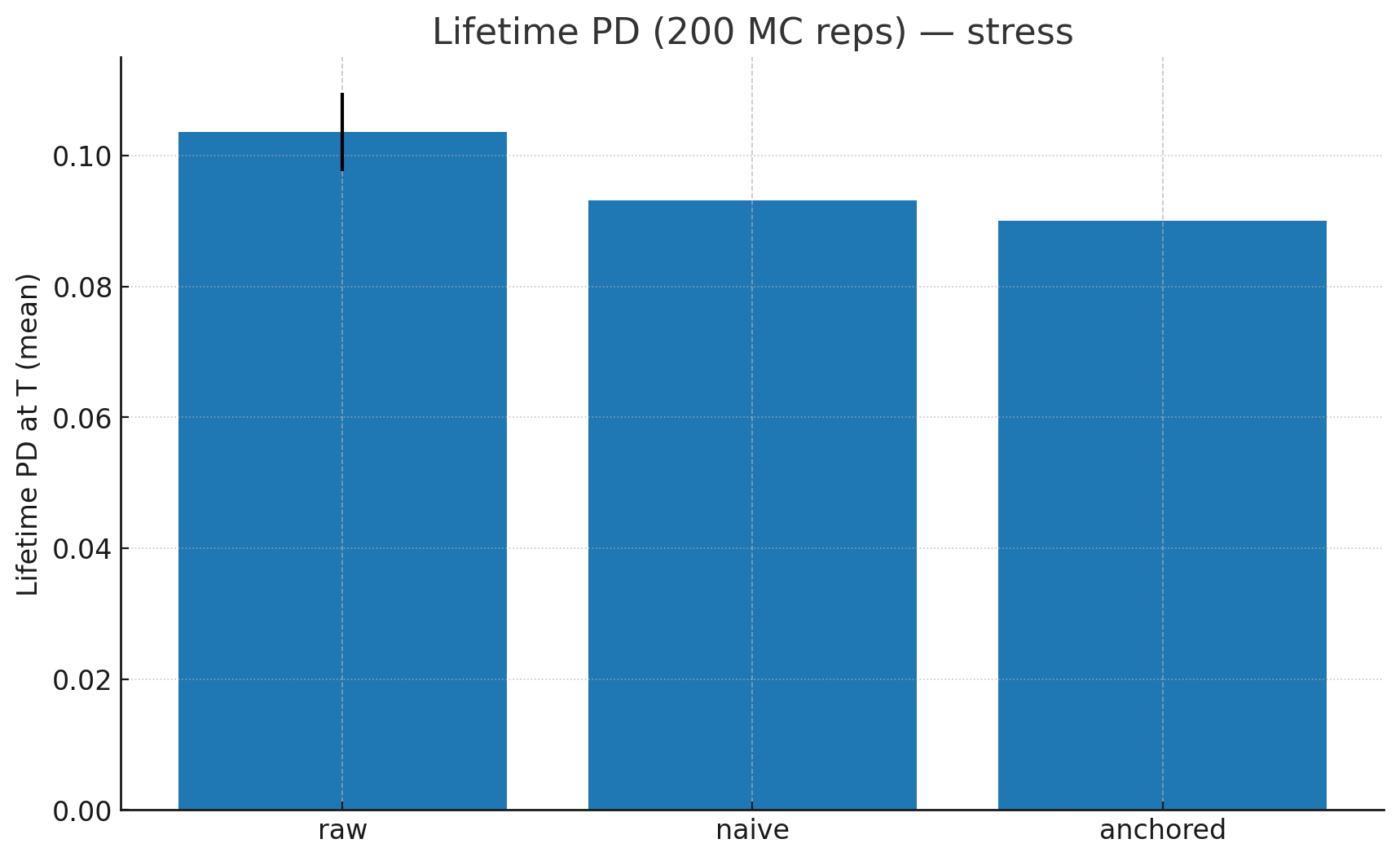}
    \caption{Stress scenario.}
  \end{subfigure}
  \begin{subfigure}[t]{0.32\textwidth}
    \includegraphics[width=\textwidth]{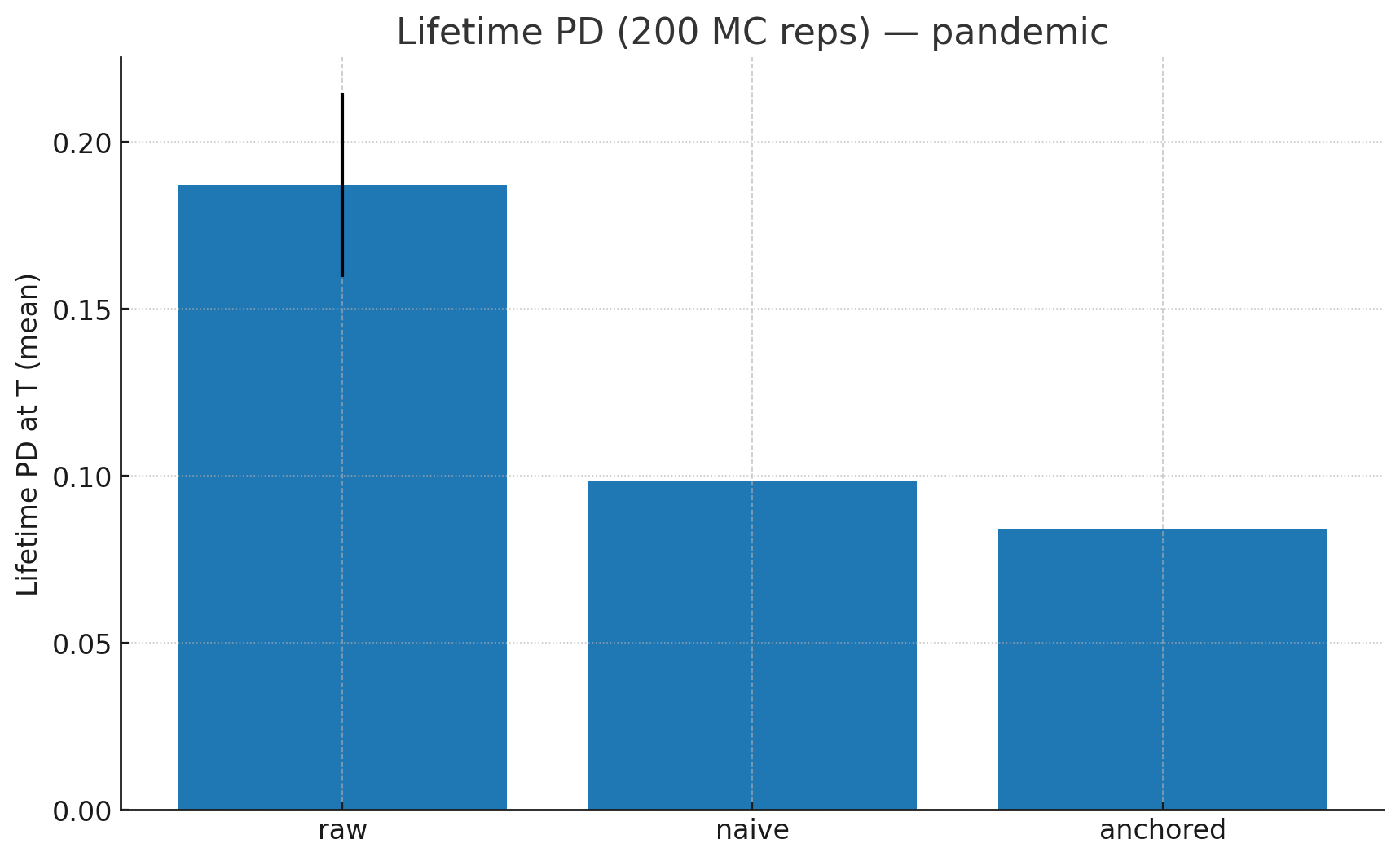}
    \caption{Pandemic scenario.}
  \end{subfigure}
  \caption{Lifetime PD at horizon $T$ under 200 Monte Carlo replications. 
  Bars show mean cumulative PD, error bars show volatility. 
  Anchoring compresses dispersion, particularly under stress.}
  \label{fig:mc_lifetime_pd}
\end{figure}

\subsection{Reproducible Simulation Results}

All code and reproducible notebooks used to generate these results are openly available. 
The full project is maintained in the \texttt{lifetime-PD-AKF} subfolder of my research repository 
\href{https://github.com/vahabr/research-lab/lifetime-PD-AKF}{GitHub -- research-lab}, 
with a versioned release archived at 
\href{https://github.com/vahabr/research-lab/releases/tag/v1.0-lifetime-PD-AKF}{GitHub Release v1.0-lifetime-PD-AKF}. 
For long-term reproducibility, the release has also been deposited in Zenodo and assigned the DOI 
\href{https://doi.org/10.5281/zenodo.17072772}{10.5281/zenodo.17072772}, which can be cited directly.

\subsection{Practical Considerations}

Beyond numerical performance, several aspects are critical for practical adoption of the proposed framework in credit risk management.

\paragraph{\textit{Calibration choices.}}
The anchored Kalman filter requires specification of the process variance $Q$ and the anchor variance $\sigma_\star^2$. 
In practice these can be calibrated by matching short--horizon forecast errors against realised macro outcomes, and by setting the anchor variance to reflect the institution's TTC risk appetite. 
Sensitivity analysis should be performed to ensure that results are robust to moderate changes in these parameters.

\paragraph{\textit{Scenario design and governance.}}
Supervisory expectations under IFRS~9 and CECL require institutions to use multiple forward--looking scenarios. 
The proposed filter can be applied to each scenario path separately, with scenario weights applied ex post. 
Anchoring reduces the undue influence of extreme but low--probability trajectories, leading to more stable expected credit loss estimates. 
Clear governance is required to document parameter settings, anchoring horizon $T_F$, and rationale for scenario selection.

\paragraph{\textit{Model risk and validation.}}
Anchoring introduces an explicit modelling assumption: that long--run PDs should converge back to TTC levels. 
This assumption must be validated empirically and challenged by independent model risk teams. 
Backtesting against realised defaults, especially during volatile periods, is essential. 
Stress tests should include scenarios where convergence is delayed or where structural breaks alter the long--run mean.

\paragraph{\textit{Implementation workflow.}}
The filter operates on standard inputs already available in most IFRS~9 and CECL platforms: macroeconomic forecasts, transition matrices, and PD term structures. 
Implementation can therefore be integrated with minimal system changes. 
A practical workflow is to (i) generate baseline PDs under raw scenarios, (ii) apply the anchored filter to obtain adjusted transition matrices, (iii) propagate lifetime PDs, and (iv) aggregate to expected credit losses. 
This sequence aligns with existing reporting cycles and committee review processes.

\paragraph{\textit{Sensitivity and transparency.}}
Anchoring inherently reduces volatility, but excessive damping may hide relevant short--term dynamics. 
Institutions should report both anchored and unanchored results to committees, together with variance decompositions that show the stabilising effect. 
Such transparency ensures that management understands the trade--off between stability and responsiveness.

\paragraph{\textit{Auditor and regulatory reception.}}
Compared with ad--hoc overlays or heuristic smoothing, the anchored Kalman filter offers a transparent, theoretically grounded mechanism for controlling forecast noise. 
This clarity matters in supervisory reviews: regulators increasingly require that institutions justify stability adjustments with formal models rather than discretionary overrides. 
Anchoring provides such a justification. 
It yields reproducible results, a clear governance trail, and a formal stability proof. 
As a result, it is easier to defend in model validation, internal audit, and regulatory dialogue than bespoke overlays that lack theoretical backing.

\paragraph{\textit{Summary.}}
The simulation study demonstrates that anchoring yields substantial variance reduction in long--horizon PD forecasts while preserving responsiveness at short horizons. 
Calibration, governance, and validation considerations show that the method can be integrated into existing IFRS~9 and CECL workflows with limited operational burden. 
The next section concludes by summarising the main findings and outlining directions for further research.


\section{Conclusion and Future Work}\label{sec:final}

This paper formulated the lifetime probability of default (PD) estimation problem in a state--space framework and demonstrated that macroeconomic forecast uncertainty induces stochastic instability in long--horizon projections.
Naïve propagation of forecasted drivers was shown to embed persistent forecast errors, generating unbounded error dynamics that complicate risk management, capital allocation, and committee decisions under IFRS~9 and CECL.

To address this limitation, we introduced an anchoring strategy for macroeconomic filtering and derived a generalised Kalman formulation with provable asymptotic stochastic stability.
Theoretical analysis established bounded error dynamics, while simulations on a synthetic corporate portfolio confirmed substantial variance reduction and improved robustness under structural shocks such as financial crises and pandemics.
The framework integrates seamlessly with existing transition--matrix models and requires only modest calibration effort, making it suitable for regulatory and industry practice.

Several avenues for future work remain.
First, extending the framework to nonlinear state--space dynamics and particle filtering could improve robustness under non--Gaussian shocks \cite{Doucet2001}.
Second, empirical validation on large and heterogeneous credit portfolios is needed to assess scalability and sector--specific sensitivity \cite{Durbin2012}.
Third, the method could be embedded into stress testing frameworks and linked to macroprudential capital planning \cite{Basel2018Stress}.
Finally, interactions with climate transition scenarios and long--term sustainability risks \cite{NGFS2020} represent an important direction for future research.


\appendix

\section*{Technical Proofs and Additional Results}\label{app:proofs}

\subsection*{Proofs from Section~\ref{sec:problem}}

We first present an intuitive lemma showing that persistent forecast error prevents convergence. 
We then provide the full proof of Proposition~\ref{prop:nonconv}, followed by a corollary that quantifies 
the accumulation of deviations.

\begin{lemma}[Instability under non--vanishing forecast error]\label{lem:intuitive}
Let $\pi_{t+1}=\pi_t\,\mathcal G(P^{\mathrm{TTC}},M_t+\delta_t)$ and $Y_t=\pi_t\,\mathcal G(P^{\mathrm{TTC}},M_t)\,e_K$. 
Assume $\mathcal G$ is continuously differentiable in its macro argument and there exists $\alpha>0$ such that
\[
\big|\pi\,\big(\mathcal G(P^{\mathrm{TTC}},m+\delta)-\mathcal G(P^{\mathrm{TTC}},m)\big)\,e_K\big| \;\ge\; \alpha\,|\delta|
\]
for all $\pi\in\Delta_K$, all $m$ in a neighbourhood of the path, and all small $|\delta|$.
If $\mathbb P(|\delta_t|>\varepsilon)>0$ for some $\varepsilon>0$ and infinitely often, then $(Y_t)$ does not converge in probability.
\end{lemma}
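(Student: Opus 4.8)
The plan is a Borel--Cantelli argument driven by the non--degeneracy constant $\alpha$. First I would extract from the hypothesis an infinite set of ``bad'' quarters at which $|\delta_t|>\varepsilon$. Reading the hypothesis as providing a uniform lower bound $\mathbb P(|\delta_t|>\varepsilon)\ge p>0$ along infinitely many $t$ (automatic under the i.i.d.\ assumption of Proposition~\ref{prop:nonconv}), and intersecting with the event $\{|\delta_t|\le\varepsilon_0\}$ for some $\varepsilon_0$ in the range where the local sensitivity bound is valid (still an event of positive probability, since $\delta_t$ is a.s.\ finite), one gets that $E:=\{\varepsilon<|\delta_t|\le\varepsilon_0\ \text{for infinitely many}\ t\}$ has $\mathbb P(E)\ge c>0$, and indeed $\mathbb P(E)=1$ by the second Borel--Cantelli lemma when the $\delta_t$ are independent.

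Second, I would turn each bad quarter into a quantitative perturbation. On such a $t$, apply the sensitivity inequality with $\pi=\pi_t\in\Delta_K$, $m=M_t$, $\delta=\delta_t$; since $\pi_{t+1}=\pi_t\,\mathcal G(P^{\mathrm{TTC}},M_t+\delta_t)$ this gives
\[
\big|\,\pi_{t+1}e_K-\varphi(\pi_t,M_t)\,\big|=\big|\pi_t\big(\mathcal G(P^{\mathrm{TTC}},M_t+\delta_t)-\mathcal G(P^{\mathrm{TTC}},M_t)\big)e_K\big|\;\ge\;\alpha\varepsilon,
\]
so propagating $\pi_t$ one more step with the noisy matrix produces a default probability at least $\alpha\varepsilon$ away from its forecast--error--free counterpart $\varphi(\pi_t,M_t)=Y_t$.

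Third, I would close by contradiction. A bounded sequence converging in probability also converges, hence is Cauchy, in $L^1$, so $\mathbb E\,|Y_{t+1}-Y_t|\to 0$; more generally the $L^1$--gap between $Y_{t+1}$ and any value rebuilt from $\pi_t$ over a bounded horizon vanishes. Since the perturbation in the second step is re--injected afresh through the independent $\delta_t$ at infinitely many quarters, it cannot be washed out in this limit, yielding $\mathbb E\,|Y_{t+1}-Y_t|\ge c'>0$ along the bad quarters, which is incompatible with convergence in probability of $(Y_t)$.

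The hard part will be the bookkeeping in the third step: one must check that the step--$t$ discrepancy of size $\ge\alpha\varepsilon$ is not quietly damped by the next transition matrix before it registers in the sequence $(Y_t)$; this is a genuine concern, since products of row--stochastic matrices contract differences. It is controlled by the structure of $\mathcal G$: with default absorbing, the default column of $\mathcal G(P^{\mathrm{TTC}},\cdot)$ dominates $e_K$ entrywise, so, provided one--step default probabilities from non--default states stay bounded away from $1$, a perturbation of the default coordinate persists through one more multiplication up to a fixed constant. A cleaner but slightly less self--contained alternative is to compare $Y_t$ directly with the convergent error--free path $Y_t^\circ=\varphi(\pi_t^\circ,M_t)$ and invoke the accumulation estimate of Corollary~\ref{cor:accum}, essentially the argument used for Proposition~\ref{prop:nonconv} under its hypothesis~(iii).
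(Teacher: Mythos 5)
Your proposal follows essentially the same route as the paper's two--sentence proof: Borel--Cantelli supplies infinitely many quarters with $|\delta_t|>\varepsilon$, the uniform sensitivity bound turns each into a deviation of at least $\alpha\varepsilon$ between $\pi_{t+1}e_K=\varphi(\pi_t,M_t+\delta_t)$ and $Y_t=\varphi(\pi_t,M_t)$, and a Cauchy--type argument rules out convergence in probability (your $L^1$ version of this last step is in fact tighter than the paper's ``distance to any candidate limit cannot vanish''). Your extra care in step one --- intersecting with $\{|\delta_t|\le\varepsilon_0\}$ so that the ``small $|\delta|$'' restriction on the sensitivity bound actually applies --- is a legitimate refinement the paper skips, though note it is not automatic that $\mathbb P(\varepsilon<|\delta_t|\le\varepsilon_0)>0$; you need $\varepsilon_0$ to sit inside the validity range above $\varepsilon$, or a derivative uniformly bounded away from zero so the bound extends to all relevant $\delta$.

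The one genuine issue is exactly the one you flag in your final paragraph, and you should be aware that the paper's own proof commits it silently: the sensitivity bound controls $|\pi_{t+1}e_K-Y_t|$, whereas the sequence whose convergence is at stake is $Y_{t+1}=\pi_{t+1}\,\mathcal G(P^{\mathrm{TTC}},M_{t+1})\,e_K$, which involves one further multiplication. Your absorbing--state patch (since $\mathcal G(\cdot,m)e_K\ge e_K$ entrywise, $Y_{t+1}\ge\pi_{t+1}e_K$) closes the gap only when the perturbation pushes the default mass \emph{up}; when $\pi_{t+1}e_K\le Y_t-\alpha\varepsilon$, the inequality $Y_{t+1}\ge\pi_{t+1}e_K$ says nothing about $|Y_{t+1}-Y_t|$, so you would additionally need an upper bound of the form $Y_{t+1}\le\pi_{t+1}e_K+\max_{i\ne K}p_{iK}(M_{t+1})$ together with one--step default probabilities small relative to $\alpha\varepsilon$, or a sign condition on $\partial_m\varphi$. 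Your fallback --- comparing with the error--free path via Corollary~\ref{cor:accum} --- is sound but imports hypothesis (iii) of Proposition~\ref{prop:nonconv} (convergence of $Y^\circ_t$), which the Lemma does not assume. As written, both your argument and the paper's really establish non--convergence of the one--step--ahead quantity $\pi_{t+1}e_K$; the passage to $Y_{t+1}$ needs one of the patches above to be made explicit.
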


\begin{proof}
Whenever $|\delta_t|\ge \varepsilon$, the sensitivity bound yields 
$|Y_{t+1}-\phi(\pi_t,M_t)|\ge \alpha\varepsilon$, where $\phi(\pi,m):=\pi\,\mathcal G(P^{\mathrm{TTC}},m)\,e_K$. 
If such events occur infinitely often with positive probability, the distance to any candidate limit cannot vanish in probability.
\end{proof}

\paragraph{Proof of Proposition~\ref{prop:nonconv}.}
By (A2) and independence, the events $A_t:=\{|\delta_t|\ge\varepsilon\}$ are i.i.d.\ with $\mathbb P(A_t)=p>0$. 
Hence, by the second Borel--Cantelli lemma, $\mathbb P(A_t\ \text{i.o.})=1$.
\medskip

Fix $\omega$ in the full--probability event where $A_t$ occurs infinitely often and where, by (A3), $Y^{\circ}_t(\omega)\to Y^\ast$. 
For any such $\omega$ and any $t$,
\[
\big|Y_{t+1}-Y^{\circ}_{t+1}\big|
=\big|\phi\!\big(\pi_t,M_t+\delta_t\big)-\phi\!\big(\pi_t,M_t\big)\big|
\ \stackrel{(A1)}{\ge}\ \alpha\,|\delta_t|.
\]
On $A_t$ this is at least $\alpha\varepsilon$. 
Since $Y^{\circ}_t\to Y^\ast$, there exists $T(\omega)$ such that for all $t\ge T(\omega)$,
$\big|Y^{\circ}_{t}-Y^\ast\big|\le \tfrac{1}{3}\alpha\varepsilon$. 
Taking any $t\ge T(\omega)$ with $A_t$ true (there are infinitely many),
\[
\big|Y_{t+1}-Y^\ast\big|
\;\ge\; \big|Y_{t+1}-Y^{\circ}_{t+1}\big|-\big|Y^{\circ}_{t+1}-Y^\ast\big|
\;\ge\; \alpha\varepsilon - \tfrac{1}{3}\alpha\varepsilon
\;=\;\tfrac{2}{3}\alpha\varepsilon.
\]
Thus $\limsup_{t\to\infty}\big|Y_{t}-Y^\ast\big|\ \ge\ \tfrac{2}{3}\alpha\varepsilon$ on a set of probability one. 
This precludes almost sure convergence and, a fortiori, convergence in probability.
\qed

\begin{corollary}[Explicit deviation bound and accumulation]\label{cor:accum}
Let $\pi_{t+1}=\pi_t\,\mathcal G(P^{\mathrm{TTC}},M_t+\delta_t)$ and $\pi^\circ_{t+1}=\pi^\circ_t\,\mathcal G(P^{\mathrm{TTC}},M_t)$, and set $e_t:=\|\pi_t-\pi^\circ_t\|_1$. 
Assume:
\begin{itemize}
\item[(B1)] $\|\mathcal G(P^{\mathrm{TTC}},m+\delta)-\mathcal G(P^{\mathrm{TTC}},m)\|_{1\to 1}\le L_G\,|\delta|$ for all relevant $m,\delta$.
\item[(B2)] $\mathcal G(P^{\mathrm{TTC}},m)$ is row--stochastic for all $m$.
\end{itemize}
Then for all $t\ge 1$,
\[
e_t \;\le\; e_0 \;+\; L_G \sum_{s=0}^{t-1} |\delta_s|.
\]
Moreover, if there exists $\alpha_G>0$ with
\[
\|\mathcal G(P^{\mathrm{TTC}},m+\delta)-\mathcal G(P^{\mathrm{TTC}},m)\|_{1\to 1}\ \ge\ \alpha_G\,|\delta|
\]
for all small $|\delta|$, then for any $t\ge 1$ there exists $s\in\{0,\dots,t-1\}$ with
\[
e_{s+1} \;\ge\; \alpha_G\,|\delta_s|.
\]
\end{corollary}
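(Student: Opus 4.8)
The plan is to reduce both recursions to a single one--step perturbation inequality for $e_t$ and then unroll it. The key algebraic move is to telescope $\pi_{t+1}$ and $\pi^\circ_{t+1}$ through their \emph{shared} transition matrix. Writing $G_t:=\mathcal G(P^{\mathrm{TTC}},M_t)$ and $\Delta_t:=\mathcal G(P^{\mathrm{TTC}},M_t+\delta_t)-G_t$, one has $\pi_{t+1}-\pi^\circ_{t+1}=\pi_t\,\Delta_t+(\pi_t-\pi^\circ_t)\,G_t$. First I would record the invariant that every $\pi_t$ stays in the simplex $\Delta_K$, which follows by induction from $\pi_0\in\Delta_K$ and the row--stochasticity of each $\mathcal G(P^{\mathrm{TTC}},\cdot)$ granted by (B2); in particular $\|\pi_t\|_1=1$. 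Taking $\|\cdot\|_1$ in the identity and using submultiplicativity of the (right--multiplication) operator norm together with $\|\pi_t\|_1=1$, $\|G_t\|_{1\to1}=1$, and the Lipschitz bound (B1) gives $e_{t+1}\le L_G|\delta_t|+e_t$. Iterating from $t=0$ yields $e_t\le e_0+L_G\sum_{s=0}^{t-1}|\delta_s|$, the claimed accumulation estimate.

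For the lower bound I would apply the \emph{reverse} triangle inequality to the same decomposition: $e_{t+1}\ge\|\pi_t\Delta_t\|_1-\|(\pi_t-\pi^\circ_t)G_t\|_1\ge\|\pi_t\Delta_t\|_1-e_t$. Reading the non--degeneracy hypothesis in the pointwise form $\|\pi\,\Delta_t\|_1\ge\alpha_G|\delta_t|$ uniformly over $\pi\in\Delta_K$ (the same shape as assumption (A1) of Lemma~\ref{lem:intuitive}), this becomes $e_{t+1}+e_t\ge\alpha_G|\delta_t|$, so at least one of $e_t,e_{t+1}$ exceeds $\tfrac12\alpha_G|\delta_t|$. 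To land exactly on $e_{s+1}\ge\alpha_G|\delta_s|$ for some $s<t$, I would specialise to $s=0$ under the standard convention that both propagations start from the observed distribution, $\pi^\circ_0=\pi_0$, so $e_0=0$ and hence $e_1\ge\|\pi_0\Delta_0\|_1\ge\alpha_G|\delta_0|$. Without that convention one still obtains the (barely weaker) $e_{s+1}+e_s\ge\alpha_G|\delta_s|$ for every $s$, which already shows a single forecast shock cannot be erased in one step.

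I expect the main obstacle to be the mismatch between the operator--norm non--degeneracy as literally written and the pointwise--in--$\pi$ version the argument uses: the induced $\ell_1\!\to\!\ell_1$ norm of the row--sum--zero matrix $\Delta_t$ equals $\max_i\|e_i\Delta_t\|_1$ and is attained only at a vertex of the simplex, so $\|\Delta_t\|_{1\to1}\ge\alpha_G|\delta_t|$ does not by itself force $\|\pi_t\Delta_t\|_1\ge\alpha_G|\delta_t|$ for the particular iterate $\pi_t$. I would therefore prove the sharp part under the uniform--in--$\pi$ sensitivity assumption (consistent with (A1) and with the logit--overlay remarks in the technical notes), noting that under the weaker operator--norm hypothesis one can only guarantee the bound for \emph{some} initial distribution. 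A minor bookkeeping point is to fix the operator--norm convention for right--multiplication on row vectors (maximum absolute row sum), so that $\|\pi_t\|_1=1$ and $\|G_t\|_{1\to1}=1$ are literally correct for row--stochastic matrices; this is routine.
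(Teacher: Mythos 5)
Your proof of the upper bound is correct and essentially the paper's own argument: both rest on the one--step telescoping of $\pi_{t+1}-\pi^\circ_{t+1}$ through a shared transition matrix, the $\ell_1$--contraction property of row--stochastic matrices from (B2), and the Lipschitz bound (B1), giving $e_{t+1}\le e_t+L_G|\delta_t|$ and then iterating. The only cosmetic difference is which factor carries the perturbation: the paper writes $\pi_{t+1}-\pi^\circ_{t+1}=(\pi_t-\pi^\circ_t)P_t+\pi^\circ_t(P_t-P^\circ_t)$, while you write $\pi_t\Delta_t+(\pi_t-\pi^\circ_t)G_t$; both are valid and yield the same recursion.

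For the ``moreover'' part your route genuinely differs, and it is the more defensible one. The paper selects $s$ maximising $\|P_s-P^\circ_s\|_{1\to1}$, bounds $e_{s+1}\ge\|\pi^\circ_s(P_s-P^\circ_s)\|_1-e_s$, and then invokes the inequality $\|\pi^\circ_s(P_s-P^\circ_s)\|_1\ge\|P_s-P^\circ_s\|_{1\to1}$ followed by a dichotomy on whether $e_s\le\tfrac12\alpha_G|\delta_s|$. The obstacle you flag is real and in fact afflicts the paper's own proof on two counts: first, for a row--sum--zero matrix the induced $1\!\to\!1$ norm is the maximum absolute row sum, attained at a vertex of $\Delta_K$, so $\|\pi(P_s-P^\circ_s)\|_1$ for an interior $\pi$ can be strictly smaller (cancellation), and the displayed inequality goes the wrong way; second, even granting it, the dichotomy only delivers the constant $\tfrac12\alpha_G$, not the $\alpha_G$ claimed in the statement. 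Your resolution --- reading the non--degeneracy hypothesis in the pointwise, uniform--in--$\pi$ form consistent with (A1) of Lemma~\ref{lem:intuitive}, and taking $s=0$ under the natural convention $\pi^\circ_0=\pi_0$ so that $e_1=\|\pi_0\Delta_0\|_1\ge\alpha_G|\delta_0|$ exactly --- recovers the stated conclusion cleanly, at the price of a slightly strengthened (but better motivated) hypothesis and the $e_0=0$ convention. Your fallback bound $e_{s+1}+e_s\ge\alpha_G|\delta_s|$ for every $s$ is what the paper's argument actually proves. In short: same mechanism for the accumulation bound, a more careful and essentially corrective treatment of the lower bound.
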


\begin{proof}
Write $P_t:=\mathcal G(P^{\mathrm{TTC}},M_t+\delta_t)$ and $P^\circ_t:=\mathcal G(P^{\mathrm{TTC}},M_t)$. 
Then
\[
\pi_{t+1}-\pi^\circ_{t+1}
= (\pi_t-\pi^\circ_t)P_t + \pi^\circ_t(P_t-P^\circ_t).
\]
Take the $1$--norm and use submultiplicativity together with (B2): 
$\|xP_t\|_1\le \|x\|_1$ for any row vector $x$ and row--stochastic $P_t$. 
Hence
\[
e_{t+1} \;\le\; e_t + \|\pi^\circ_t\|_1\,\|P_t-P^\circ_t\|_{1\to 1}
\;\le\; e_t + L_G\,|\delta_t|,
\]
since $\|\pi^\circ_t\|_1=1$ and by (B1). 
Iteration yields the upper bound. 

For the lower bound, pick $s\in\{0,\dots,t-1\}$ maximising $\|P_s-P^\circ_s\|_{1\to 1}$. Then
\[
e_{s+1} \;\ge\; \|\pi^\circ_s(P_s-P^\circ_s)\|_1 - \|e_s P_s\|_1
\;\ge\; \|P_s-P^\circ_s\|_{1\to 1} - e_s.
\]
If $e_s\le \tfrac{1}{2}\alpha_G|\delta_s|$, the claim follows. 
Otherwise $e_s>\tfrac{1}{2}\alpha_G|\delta_s|$, which itself provides a valid lower bound at time $s$. 
\end{proof}

\subsection*{Proofs from Section~\ref{sec:formulation}}

We collect here the technical results underlying Theorem~\ref{thm:naiveKF} on residual variability 
under naive Kalman filtering.

\paragraph{Step 1. Persistent macro estimation error.}
\begin{lemma}\label{lem:naive}
Consider the linear--Gaussian state--space model with stabilisable $(A,Q^{1/2})$ 
and detectable $(A,H)$. 
Then the Kalman error covariance $\Sigma_t$ converges to the unique stabilising solution 
$\Sigma_\infty\succeq 0$ of the algebraic Riccati equation. 
If $Q\succeq 0$ and $R\succ 0$ are non--degenerate, then $\Sigma_\infty\neq 0$. 
In particular,
\[
\lim_{t\to\infty}\mathbb E\|M_t-\mu_t\|^2 \;=\; \mathrm{tr}\,\Sigma_\infty \;>\; 0.
\]
\end{lemma}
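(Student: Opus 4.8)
The plan is to reduce the statement to the classical convergence theory of the discrete-time Riccati recursion and then read off the two conclusions. First I would note that the covariance recursion~\eqref{eq:riccati}, together with the predictor step $\Sigma_{t+1|t}=A\Sigma_tA^\top+Q$ and the gain formula, defines a \emph{deterministic} map $\Sigma_t\mapsto\Sigma_{t+1}$ that does not depend on the realised observations; hence under the linear--Gaussian model $\Sigma_t$ is exactly the mean-square error matrix, $\Sigma_t=\mathbb{E}\big[(M_t-\mu_t)(M_t-\mu_t)^\top\big]$, so that $\mathbb{E}\|M_t-\mu_t\|^2=\mathrm{tr}\,\Sigma_t$. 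It therefore suffices to analyse the deterministic recursion for $\Sigma_t$.

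Second, I would invoke the standard result from Kalman filtering theory \cite{Kalman1960,Maybeck1979,Simon2006}: if $(A,H)$ is detectable and $(A,Q^{1/2})$ is stabilisable, then for every initialisation $\Sigma_0\succeq0$ the Riccati iteration converges to the unique positive semidefinite solution $\Sigma_\infty$ of the algebraic Riccati equation, and this solution is \emph{stabilising}, i.e.\ the closed-loop matrix $A(I-K_\infty H)$ is Schur stable. The proof of that classical fact rests on the usual monotonicity and sandwiching argument — the iteration started from $\Sigma_0=0$ is monotone nondecreasing and, by stabilisability, bounded above; the one started from a sufficiently large initial condition is monotone nonincreasing; detectability forces both limits to coincide with the stabilising root — and I would cite it rather than reproduce it. This establishes the first assertion of the lemma.

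Third, for strict positivity of the limit I would use the non-degeneracy hypothesis in the form $Q\succ0$. Writing $P_\infty:=\lim_{t\to\infty}\Sigma_{t|t-1}=A\Sigma_\infty A^\top+Q$, non-degeneracy gives $P_\infty\succeq Q\succ0$, so $P_\infty$ is invertible; applying the matrix inversion lemma to the update $\Sigma_\infty=(I-K_\infty H)P_\infty$ yields the information form
\[
\Sigma_\infty=\big(P_\infty^{-1}+H^\top R^{-1}H\big)^{-1}\succ0,
\]
since $P_\infty^{-1}\succ0$ and $H^\top R^{-1}H\succeq0$. Hence $\mathrm{tr}\,\Sigma_\infty>0$, and combining with the first step, $\mathbb{E}\|M_t-\mu_t\|^2=\mathrm{tr}\,\Sigma_t\to\mathrm{tr}\,\Sigma_\infty>0$. (If one prefers to assume only $Q\neq0$, the same conclusion holds but additionally requires the remark that process noise lying in the range of $Q$ cannot be annihilated by a single measurement update when $R\succ0$; taking non-degeneracy to mean $Q\succ0$ keeps this step to one line.)

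The step I expect to be the main obstacle — in the sense of carrying all the genuine content — is the Riccati convergence result in the second paragraph: its hypotheses are precisely detectability of $(A,H)$ and stabilisability of $(A,Q^{1/2})$, the assumptions already imposed, and its proof requires the monotonicity, boundedness, and uniqueness arguments indicated above. The remaining pieces — identifying $\Sigma_t$ with the error covariance, and the one-line information-form bound for positivity — are routine bookkeeping.
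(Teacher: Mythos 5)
Your proposal is correct and follows essentially the same route as the paper, which likewise reduces the lemma to classical Riccati convergence theory under the detectability/stabilisability hypotheses. Where the paper merely asserts that ``non--degenerate $Q,R$ imply that process or measurement noise persists, hence $\Sigma_\infty\neq 0$,'' your information-form argument $\Sigma_\infty=(P_\infty^{-1}+H^\top R^{-1}H)^{-1}\succ 0$ via $P_\infty\succeq Q\succ 0$ actually proves the stronger positive-definiteness that Theorem~\ref{thm:naiveKF} needs, and your flag that this requires reading ``non--degenerate'' as $Q\succ 0$ (or else a range-of-$Q$ argument) is a fair and useful clarification of a point the paper glosses over.
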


\begin{proof}
Classical Riccati theory implies $\Sigma_t\to\Sigma_\infty$ under the stated conditions. 
Non--degenerate $Q,R$ imply that process or measurement noise persists, hence $\Sigma_\infty\neq 0$. 
The mean--square error therefore converges to $\mathrm{tr}\,\Sigma_\infty>0$.
\end{proof}

\paragraph{Step 2. Transmission to PD sequence.}
\begin{proposition}\label{prop:naiveInstab}
Let $Y_t=\phi(\pi_t,\mu_t)$ with $\mu_t$ the naive Kalman estimate of the macro state $M_t$. 
Assume Lemma~\ref{lem:naive}, and suppose $\phi$ is locally Lipschitz and uniformly sensitive 
on a compact macro domain. 
Then there exists $c>0$ such that
\[
\limsup_{t\to\infty}\,\mathbb E\big|\,Y_t-\phi(\pi_t,M_t)\,\big| \;\ge\; c \;>\; 0,
\]
so $(Y_t)$ does not converge in probability.
\end{proposition}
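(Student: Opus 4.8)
The plan is to transmit the persistent macro estimation error of Lemma~\ref{lem:naive} through the uniform sensitivity of $\phi$ to a non-vanishing $L^1$ lower bound on the discrepancy $Y_t-\phi(\pi_t,M_t)$, and then to rule out convergence in probability along the lines of Proposition~\ref{prop:nonconv}. First I would record the steady-state law of the error $e_t:=\mu_t-M_t$. With Gaussian initialisation and Gaussian process and measurement noise the Kalman recursion keeps $e_t$ exactly Gaussian, mean zero and covariance $\Sigma_t$, and Lemma~\ref{lem:naive} gives $\Sigma_t\to\Sigma_\infty\succ 0$; hence $e_t$ converges in distribution to a nondegenerate Gaussian $Z\sim\mathcal N(0,\Sigma_\infty)$. (If one prefers to avoid Gaussianity, the only input actually needed below---a strictly positive $\liminf$ for $\mathbb{P}(r_1\le\|e_t\|\le r_2)$---also follows from $\mathbb{E}\|e_t\|^2\to\mathrm{tr}\,\Sigma_\infty>0$ together with a uniform fourth-moment bound, via Paley--Zygmund.)

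Next I would localise where sensitivity bites. By the uniform sensitivity hypothesis there are constants $\alpha>0$, $\rho_0>0$ with $|\phi(\pi,m+\delta)-\phi(\pi,m)|\ge\alpha\|\delta\|$ for all admissible rating distributions $\pi$, all $m$ in the compact macro set $\mathcal M$, and all $\|\delta\|\le\rho_0$. Fix radii $0<r_1<r_2\le\rho_0$ and put $A_t:=\{r_1\le\|e_t\|\le r_2\}\cap\{M_t\in\mathcal M,\ \mu_t\in\mathcal M\}$. On $A_t$ one has $|Y_t-\phi(\pi_t,M_t)|=|\phi(\pi_t,\mu_t)-\phi(\pi_t,M_t)|\ge\alpha\|e_t\|\ge\alpha r_1$, so
\[
\mathbb{E}\big|\,Y_t-\phi(\pi_t,M_t)\,\big|\ \ge\ \alpha r_1\,\mathbb{P}(A_t).
\]
By the convergence in distribution of $e_t$, $\mathbb{P}(r_1\le\|e_t\|\le r_2)\to\mathbb{P}(r_1\le\|Z\|\le r_2)=:q_0>0$, since a nondegenerate Gaussian charges every spherical shell with positive mass; and $\mathbb{P}(M_t\in\mathcal M,\ \mu_t\in\mathcal M)\ge 1-q_0/2$ for all large $t$ once $\mathcal M$ is taken large enough, using tightness of the mean-reverting macro path and of $e_t$. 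Hence $\liminf_t\mathbb{P}(A_t)\ge q_0/2$ and $\limsup_t\mathbb{E}|Y_t-\phi(\pi_t,M_t)|\ge\alpha r_1 q_0/2=:c>0$, which is the asserted bound.

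For the non-convergence conclusion I would reproduce the mechanism of Proposition~\ref{prop:nonconv}. Since $Y_t$ and the reference value $\phi(\pi_t,M_t)$ lie in $[0,1]$, the $L^1$-gap $c>0$ rules out $Y_t$ converging in probability to the (TTC) limit of $\phi(\pi_t,M_t)$; and because the Kalman innovation re-injects nondegenerate fresh noise into $\mu_t$ at every step---the exact analogue of the Borel--Cantelli event $\{|\delta_t|\ge\varepsilon\}$ in Proposition~\ref{prop:nonconv}---the filtered driver, and hence $Y_t=\phi(\pi_t,\mu_t)$, keeps fluctuating and cannot settle to any limit. Uniform integrability (the values live in $[0,1]$) then upgrades the absence of an $L^1$ limit to the absence of a limit in probability.

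I expect the main obstacle to be the uniformity of the sensitivity constant $\alpha$ in the portfolio state $\pi_t$. As $\pi_t$ drifts toward the absorbing vertex $e_K$, the map $m\mapsto\phi(\pi_t,m)$ flattens (the default probability is already close to one), so $\alpha$ is genuinely available only on the ``relevant domain'' where $\pi_t$ stays bounded away from $e_K$---in practice the finite forecast/reporting horizon over which the PD term structure is actually used, rather than the literal $t\to\infty$ tail. Making this domain precise, and simultaneously confining both $M_t$ and $\mu_t$ to the compact set $\mathcal M$ uniformly in $t$, is the delicate part; once these are in hand, upgrading the per-$t$ annulus estimate to the $\limsup$ statement is routine given Riccati convergence and the consequent convergence in distribution of $e_t$.
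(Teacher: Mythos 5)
Your proposal is correct and follows essentially the same route as the paper's proof: Lemma~\ref{lem:naive} gives a nondegenerate Gaussian limit for the estimation error $e_t=\mu_t-M_t$, uniform sensitivity transfers $\|e_t\|$ into a lower bound on $|Y_t-\phi(\pi_t,M_t)|$, and taking expectations yields the positive $\limsup$. Your annulus event $A_t$ and the explicit confinement of $M_t,\mu_t$ to the compact set (and your caveat about the sensitivity constant degenerating as $\pi_t$ approaches the absorbing vertex $e_K$) make rigorous two points the paper's shorter argument glosses over, but the underlying mechanism is identical.
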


\begin{proof}
By Lemma~\ref{lem:naive}, $e_t=\mu_t-M_t$ converges in distribution to a zero--mean Gaussian 
$Z\sim\mathcal N(0,\Sigma_\infty)$ with $\Sigma_\infty\succ0$. 
Uniform sensitivity implies $|Y_t-\phi(\pi_t,M_t)|\ge c\|e_t\|$ for some $c>0$ on a compact domain. 
Taking expectations and limits yields 
$\limsup_{t\to\infty}\mathbb E|Y_t-\phi(\pi_t,M_t)| \ge c\,\mathbb E\|Z\|>0$. 
Hence $Y_t$ does not converge in probability.
\end{proof}

\paragraph{Step 3. Verification for logit overlays.}
\begin{lemma}[Uniform sensitivity for logit overlays]\label{lem:logit_sensitivity}
Consider the exponential/logit specification
\[
p_{ij}(m)=\frac{p^{\mathrm{TTC}}_{ij}\exp(\beta_{ij} m)}
{\sum_{\ell}p^{\mathrm{TTC}}_{i\ell}\exp(\beta_{i\ell} m)},\qquad 
\phi(\pi,m)=\sum_i \pi_i p_{iK}(m).
\]
If (i) $p^{\mathrm{TTC}}_{iK}>0$ for all $i$, and 
(ii) $\beta_{iK}-\sum_j \beta_{ij}p_{ij}(m)\ge \delta_i>0$ on a compact macro interval, 
then there exists $c>0$ such that
\[
|\phi(\pi,m+d)-\phi(\pi,m)| \;\ge\; c\,|d|
\]
for all $\pi\in\Delta_K$, $m$ in the interval, and small $|d|$. 
\end{lemma}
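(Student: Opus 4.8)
The plan is to reduce the statement to a uniform positive lower bound on the partial derivative $\partial_m\phi(\pi,m)$ and then invoke the mean value theorem. First I would compute $\partial_m p_{iK}(m)$ directly from the softmax form. Writing $S_i(m)=\sum_\ell p^{\mathrm{TTC}}_{i\ell}\exp(\beta_{i\ell}m)$, the quotient rule gives the standard logit identity
\[
\partial_m p_{iK}(m)=p_{iK}(m)\Big(\beta_{iK}-\sum_j\beta_{ij}\,p_{ij}(m)\Big),
\]
so that, differentiating $\phi$ under the finite sum,
\[
\partial_m\phi(\pi,m)=\sum_i\pi_i\,p_{iK}(m)\Big(\beta_{iK}-\sum_j\beta_{ij}\,p_{ij}(m)\Big).
\]
Since the absorbing default row is $m$--independent, the $i=K$ summand vanishes and only transient states contribute.

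Next I would bound this from below uniformly. By hypothesis (ii) each bracket is $\ge\delta_i\ge\delta_{\min}:=\min_i\delta_i>0$ on the compact macro interval $I$. By hypothesis (i), $p^{\mathrm{TTC}}_{iK}>0$, and since $p_{iK}(m)=p^{\mathrm{TTC}}_{iK}\exp(\beta_{iK}m)/S_i(m)$ is a ratio of finitely many exponentials, it is continuous and strictly positive, hence bounded below by some $p_{\min}>0$ uniformly over $m\in I$ (numerator bounded below, denominator bounded above on a compact set; then take the minimum over the finitely many transient $i$). Therefore $\partial_m\phi(\pi,m)\ge p_{\min}\,\delta_{\min}\sum_{i\neq K}\pi_i=p_{\min}\,\delta_{\min}\,(1-\pi_K)$. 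On the relevant domain, where the portfolio retains transient mass $1-\pi_K\ge\kappa>0$, this gives $\partial_m\phi(\pi,m)\ge c_0:=\kappa\,p_{\min}\,\delta_{\min}>0$ uniformly in $(\pi,m)$.

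Finally I would conclude by the mean value theorem: for $m\in I$ and $|d|$ small enough that the segment $[m,m+d]$ lies in a slightly enlarged compact interval $I'$ (on which the same argument yields a lower bound $c_0'>0$), there is $\theta\in(0,1)$ with $\phi(\pi,m+d)-\phi(\pi,m)=\partial_m\phi(\pi,m+\theta d)\,d$, whence $|\phi(\pi,m+d)-\phi(\pi,m)|\ge c_0'\,|d|$, giving the claim with $c=c_0'$.

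I expect the only genuine subtlety to be the treatment of the absorbing row: as written the bound ``for all $\pi\in\Delta_K$'' degenerates at $\pi=e_K$, so the uniform constant $c$ must be understood relative to a domain on which transient mass stays bounded away from zero --- which holds automatically over any finite propagation horizon starting from a default--free $\pi_0$ with per--step default probabilities strictly below one, so that $1-\pi_{t,K}$ admits a horizon--dependent (and, on the simulation window, uniform) positive lower bound. Everything else --- the softmax derivative identity, the continuity/compactness estimate for $p_{\min}$, and the mean value step --- is routine.
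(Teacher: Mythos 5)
Your proof follows essentially the same route as the paper's: the softmax derivative identity $\partial_m p_{iK}(m)=p_{iK}(m)\big(\beta_{iK}-\sum_j\beta_{ij}p_{ij}(m)\big)$, a uniform positive lower bound on $\partial_m\phi$ obtained from (i)--(ii) and compactness, and the mean value theorem. Your caveat about the absorbing row is a genuine improvement rather than a flaw: at $\pi=e_K$ the claimed bound degenerates (and hypothesis (ii) cannot even hold for $i=K$ when default is absorbing), a point the paper's one-line proof silently ignores, so your restriction to portfolios with transient mass $1-\pi_K\ge\kappa>0$ is the correct reading of ``for all $\pi\in\Delta_K$''.
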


\begin{proof}
Differentiate $p_{iK}(m)$ with respect to $m$ to obtain 
$\partial_m p_{iK}(m)=p_{iK}(m)(\beta_{iK}-\sum_j\beta_{ij}p_{ij}(m))$. 
Assumptions (i)--(ii) imply $\partial_m \phi(\pi,m)\ge c_0>0$ uniformly. 
Apply the mean value theorem to obtain the finite difference bound.
\end{proof}

\paragraph{Step 4. Conclusion.}
Lemmas~\ref{lem:naive} and \ref{lem:logit_sensitivity} together imply Proposition~\ref{prop:naiveInstab}, which establishes Theorem~\ref{thm:naiveKF}.

\begin{remark}[Sensitivity in logit overlays]
In exponential or logit adjustments, $p_{ij}(m)\propto p^{\mathrm{TTC}}_{ij}\exp(\beta_{ij}m)$, 
the derivative $m\mapsto \partial_m\phi(\pi,m)$ is bounded away from zero on compact sets 
provided all $p^{\mathrm{TTC}}_{ij}>0$. 
This validates the uniform sensitivity assumption used in Proposition~\ref{prop:naiveInstab}.
\end{remark}

\begin{remark}[Compactness of the macro domain]
For autoregressive macro processes $M_{t+1}=A M_t+w_t$ with $|A|<1$ and bounded or light--tailed shocks $w_t$, 
the distribution of $M_t$ is stationary and mean--reverting. 
Hence the path remains with high probability in a compact interval $\mathcal I\subset\mathbb R$. 
This justifies the compactness assumption in Lemma~\ref{lem:logit_sensitivity}; 
see \cite{Hamilton1994,Tsay2010}.
\end{remark}

\subsection*{Proofs from Section~\ref{sec:proposal}}

\begin{proof}[Proof of Theorem~\ref{thm:MSS}]
The proof is structured into 3 clear steps: 1) boundedness, 2) exponential convergence, and 3) transfer to PDs convergence.

\emph{Step 1: Mean--square boundedness.}  
Since $H_{\mathrm{aug}}$ contains the identity block, the pair $(A,H_{\mathrm{aug}})$ is detectable. 
Together with stabilisability of $(A,Q^{1/2})$ and $R_{\mathrm{aug}}\succ0$, 
classical Kalman filter theory implies that the Riccati recursion converges to the unique stabilising solution $\Sigma_\infty\succeq0$. 
Hence $\sup_t \mathbb E\|e_t\|^2<\infty$ and $\mathbb E\|e_t\|^2\to \mathrm{tr}\,\Sigma_\infty$.

\emph{Step 2: Exponential convergence under hard anchoring.}  
If $Q=0$ and $\sigma_\star^2=0$ for all $t\ge T_F$, then beyond $T_F$ the error recursion reduces to
\[
e_{t+1} = (I-K_t H_{\mathrm{aug}})A e_t .
\]
Detectability ensures convergence of $K_t$ to a stabilising gain $K_\star$ such that $(I-K_\star H_{\mathrm{aug}})A$ is Schur. 
Uniform exponential stability of the time--varying system follows, implying $e_t\to 0$ exponentially.

\emph{Step 3: Transfer to PD convergence.}  
By local Lipschitz continuity of $\phi$ in $m$ near $M^\star$,
\[
\big|\,\phi(\pi_t,\hat M_{t|t})-\phi(\pi_t,M^\star)\,\big|
\;\le\; L\,\|e_t\|.
\]
Since $e_t\to0$ exponentially, the right--hand side vanishes in probability. 
Thus the lifetime PD sequence converges in probability to the neutral macro limit. 
\end{proof}

\begin{remark}[Continuous anchoring within the forecast window]
If $Q\succeq 0$ and $\sigma_\star^2>0$ remain constant, then $e_t$ converges in distribution to a Gaussian with covariance $\Sigma_\infty$. 
The induced PD deviations satisfy 
\[
\limsup_{t\to\infty}\mathbb E\big|\,\phi(\pi_t,\hat M_{t|t})-\phi(\pi_t,M^\star)\,\big| 
\;\le\; L\,\mathbb E\|Z\|, \qquad Z\sim\mathcal N(0,\Sigma_\infty).
\]
This yields mean--square boundedness and practical stability of the PD term structure. 
Exact convergence is obtained when $Q$ and $\sigma_\star^2$ are reduced after $T_F$.
\end{remark}

\medskip
\noindent
The preceding theorem establishes stability for the generalised anchored (stacked) model. 
For completeness, we also record the parallel result for the simpler anchored deviation 
formulation introduced in Section~\ref{sec:errordyn}. 

\begin{corollary}[Anchored deviation model: stability and convergence]\label{cor:anch_dev}
Let $\xi_t := M_t - M^\star$ with dynamics
\[
\xi_{t+1}=A\xi_t+w_t,\qquad w_t\sim\mathcal N(0,Q),
\]
and observation
\[
y_t=H\xi_t+\nu_t,\qquad \nu_t\sim\mathcal N(0,R).
\]
Assume $(A,Q^{1/2})$ stabilisable and $(A,H)$ detectable. 
Let $\hat\xi_{t|t}$ be the Kalman estimate, $e_t=\hat\xi_{t|t}-\xi_t$, and evaluate the PIT matrix at $M^\star+\hat\xi_{t|t}$. 
If $\phi(\pi,m)$ is locally Lipschitz in $m$ near $M^\star$ with constant $L>0$, then:
\begin{enumerate}
\item (\emph{Mean--square boundedness}) $\Sigma_t\to\Sigma_\infty\succeq0$ and $\lim_{t\to\infty}\mathbb E\|e_t\|^2=\mathrm{tr}\,\Sigma_\infty<\infty$.
\item (\emph{Exponential convergence under hard anchoring}) If $Q=0$ and $R=0$ for all $t\ge T_F$, then $e_t\to 0$ exponentially and
\[
\phi(\pi_t,M^\star+\hat\xi_{t|t})\;\xrightarrow{\;p\;}\;\phi(\pi_t,M^\star).
\]
\end{enumerate}
\end{corollary}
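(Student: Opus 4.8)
The plan is to mirror, step for step, the three--part argument used to prove Theorem~\ref{thm:MSS}, specialised to the single--observation deviation system $\xi_{t+1}=A\xi_t+w_t$, $y_t=H\xi_t+\nu_t$, in which the neutral anchor has been absorbed into the state through the change of variables $\xi_t=M_t-M^\star$. The error recursion is exactly \eqref{eq:err_deviation}, namely $e_t=(I-K_tH)Ae_{t-1}-(I-K_tH)w_{t-1}+K_t\nu_t$, and by the standard linear--Gaussian identity the filtered covariance $\Sigma_t$ produced by the Kalman recursion for the pair $(A,H)$ with noise covariances $(Q,R)$ equals $\mathbb E[e_te_t^\top]$, so that $\mathbb E\|e_t\|^2=\mathrm{tr}\,\Sigma_t$.

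For part~(1) I would invoke classical Riccati theory: stabilisability of $(A,Q^{1/2})$ together with detectability of $(A,H)$ and $R\succ0$ guarantee that the filtered--covariance recursion converges to the unique stabilising solution $\Sigma_\infty\succeq0$ of the associated discrete algebraic Riccati equation, independently of the symmetric nonnegative initialisation. Combined with $\mathbb E\|e_t\|^2=\mathrm{tr}\,\Sigma_t$ this immediately yields $\sup_t\mathbb E\|e_t\|^2<\infty$ and $\lim_{t\to\infty}\mathbb E\|e_t\|^2=\mathrm{tr}\,\Sigma_\infty<\infty$, which is the claimed mean--square boundedness; the convergence of the gain sequence is precisely what rules out the error recycling identified for the naive filter in Section~\ref{sec:naiveKF}.

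For part~(2), beyond $T_F$ we have $Q=0$ and $R=0$, so the state evolves deterministically as $\xi_{t+1}=A\xi_t$ and \eqref{eq:err_deviation} collapses to the homogeneous time--varying recursion $e_{t+1}=(I-K_{t+1}H)Ae_t$. Detectability of $(A,H)$ forces the gains $K_t$ to converge to a stabilising limit $K_\star$ for which $\Phi_\star:=(I-K_\star H)A$ is Schur; a perturbation argument for asymptotically time--invariant linear recursions (the transition maps $\Phi_t:=(I-K_tH)A$ converge to the Schur matrix $\Phi_\star$, so the products $\Phi_t\cdots\Phi_{T_F+1}$ decay geometrically once $t$ is large) then gives $e_t\to0$ exponentially fast. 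Finally, local Lipschitz continuity of $\phi$ in $m$ near $M^\star$ gives $|\phi(\pi_t,M^\star+\hat\xi_{t|t})-\phi(\pi_t,M^\star)|\le L\|\hat\xi_{t|t}\|\le L(\|e_t\|+\|\xi_t\|)$; the first term vanishes exponentially, and under the mean--reverting specification adopted in Section~\ref{sec:instable} ($\rho(A)<1$) the second obeys $\xi_t=A^{\,t-T_F}\xi_{T_F}\to0$, so the right--hand side converges to zero in probability and the lifetime PD term structure converges in probability to $\phi(\pi_t,M^\star)$.

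The step I expect to be the main obstacle is the rigorous passage in Step~2 from ``$K_t\to K_\star$ and $\Phi_\star$ Schur'' to genuine uniform exponential stability of $e_{t+1}=\Phi_t e_t$: this needs either a Lyapunov certificate argument (use the Lyapunov function of $\Phi_\star$ and verify it remains a strict decrease certificate for $\Phi_t$ for $t$ large enough) or a direct appeal to standard results on exponential stability of the Kalman error dynamics under detectability. A secondary delicate point is the reconciliation required for the PD conclusion itself, since $\hat\xi_{t|t}\to0$ (rather than merely $e_t\to0$) relies on $A$ being genuinely mean--reverting beyond the forecast window; without $\rho(A)<1$ one would only recover convergence of the PD sequence to $\phi(\pi_t,M_t)$ rather than to the neutral TTC limit $\phi(\pi_t,M^\star)$.
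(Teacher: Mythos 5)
Your proposal follows essentially the same three--step route as the paper's own proof: Riccati convergence of $\Sigma_t$ under stabilisability and detectability for part~(1), collapse of \eqref{eq:err_deviation} to the homogeneous recursion $e_{t+1}=(I-K_{t+1}H)Ae_t$ with $K_t\to K_\star$ and $(I-K_\star H)A$ Schur for the exponential decay in part~(2), and Lipschitz transfer to the PD map. The one place where you go beyond the paper is the final step, and you are right to flag it: the paper bounds $|\phi(\pi_t,M^\star+\hat\xi_{t|t})-\phi(\pi_t,M^\star)|$ by $L\|e_t\|$, but Lipschitz continuity only gives $L\|\hat\xi_{t|t}\|\le L(\|e_t\|+\|\xi_t\|)$, and $e_t\to0$ alone does not force $\hat\xi_{t|t}\to0$ unless the deviation state itself decays, i.e.\ unless $\rho(A)<1$ so that $\xi_t=A^{t-T_F}\xi_{T_F}\to0$ once the process noise is switched off. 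Your explicit invocation of the mean--reverting specification closes a gap that the paper's proof (and the corresponding step of Theorem~\ref{thm:MSS}) leaves implicit; without it one only obtains convergence of the PD sequence to $\phi(\pi_t,M_t)$ rather than to the neutral limit. Your proposed Lyapunov or asymptotic--time--invariance argument for uniform exponential stability of the time--varying product is also the standard way to make the paper's one--line "so $e_t\to0$ exponentially" rigorous.
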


\begin{proof}
Standard Riccati theory gives mean--square boundedness. 
If $Q=0$ and $R=0$ beyond $T_F$, then $e_{t+1}=(I-K_t H)A e_t$ with $K_t\to K_\star$ and $(I-K_\star H)A$ Schur, 
so $e_t\to0$ exponentially. 
Lipschitz continuity of $\phi$ yields the PD convergence. 
\end{proof}

\bibliographystyle{IEEEtran}
\phantomsection  
\bibliography{referencesUpdated}

\begin{thebibliography}{10}
\providecommand{\url}[1]{#1}
\csname url@samestyle\endcsname
\providecommand{\newblock}{\relax}
\providecommand{\bibinfo}[2]{#2}
\providecommand{\BIBentrySTDinterwordspacing}{\spaceskip=0pt\relax}
\providecommand{\BIBentryALTinterwordstretchfactor}{4}
\providecommand{\BIBentryALTinterwordspacing}{\spaceskip=\fontdimen2\font plus
\BIBentryALTinterwordstretchfactor\fontdimen3\font minus \fontdimen4\font\relax}
\providecommand{\BIBforeignlanguage}[2]{{%
\expandafter\ifx\csname l@#1\endcsname\relax
\typeout{** WARNING: IEEEtran.bst: No hyphenation pattern has been}%
\typeout{** loaded for the language `#1'. Using the pattern for}%
\typeout{** the default language instead.}%
\else
\language=\csname l@#1\endcsname
\fi
#2}}
\providecommand{\BIBdecl}{\relax}
\BIBdecl

\bibitem{IASB2014}
\BIBentryALTinterwordspacing
I.~A.~S. Board, ``{IFRS 9} financial instruments,'' 2014, london. [Online]. Available: \url{https://www.ifrs.org/issued-standards/list-of-standards/ifrs-9-financial-instruments/}
\BIBentrySTDinterwordspacing

\bibitem{FASB2016}
\BIBentryALTinterwordspacing
F.~A.~S. Board, ``Financial instruments—credit losses (topic 326)—purchased financial assets,'' 2016, norwalk, CT. [Online]. Available: \url{https://www.fasb.org/projects/current-projects/financial-instruments%E2%80%94credit-losses-(topic-326)%E2%80%94purchased-financial-assets-401651}
\BIBentrySTDinterwordspacing

\bibitem{Basel2000}
\BIBentryALTinterwordspacing
B.~C. on~Banking~Supervision, ``Principles for the management of credit risk,'' 2000, basel, Switzerland. [Online]. Available: \url{https://www.bis.org/publ/bcbs75.pdf}
\BIBentrySTDinterwordspacing

\bibitem{Tasche2019}
\BIBentryALTinterwordspacing
D.~Tasche, ``The art of probability-of-default curve calibration,'' \emph{arXiv preprint arXiv:1212.3716}, 2012. [Online]. Available: \url{https://arxiv.org/pdf/1212.3716}
\BIBentrySTDinterwordspacing

\bibitem{ECB2017}
\BIBentryALTinterwordspacing
E.~C. Bank, ``Guidance to banks on non-performing loans – annex: Expected credit losses,'' 2017, frankfurt. [Online]. Available: \url{https://www.bankingsupervision.europa.eu/ecb/pub/pdf/guidance_on_npl.en.pdf}
\BIBentrySTDinterwordspacing

\bibitem{EBA2017}
\BIBentryALTinterwordspacing
E.~B. Authority, ``Eba reports on results of the second impact assessment of ifrs 9,'' 2017, london. [Online]. Available: \url{https://www.iasplus.com/en/news/2017/07/eba-ifrs-9}
\BIBentrySTDinterwordspacing

\bibitem{BCBS2015}
\BIBentryALTinterwordspacing
B.~C. on~Banking~Supervision, ``Guidance on credit risk and accounting for expected credit losses,'' 2015, basel, Switzerland. [Online]. Available: \url{https://www.bis.org/bcbs/publ/d350.pdf}
\BIBentrySTDinterwordspacing

\bibitem{CroushoreStark2001}
\BIBentryALTinterwordspacing
D.~Croushore and T.~Stark, ``A real-time data set for macroeconomists,'' \emph{Journal of Econometrics}, vol. 105, no.~1, pp. 111--130, 2001. [Online]. Available: \url{https://doi.org/10.1016/S0304-4076(01)00072-0}
\BIBentrySTDinterwordspacing

\bibitem{ReifschneiderTulip2017}
\BIBentryALTinterwordspacing
D.~Reifschneider and P.~Tulip, ``Gauging the uncertainty of the economic outlook,'' \emph{International Journal of Forecasting}, vol.~35, no.~4, pp. 1564--1582, 2019. [Online]. Available: \url{https://doi.org/10.1016/j.ijforecast.2018.07.016}
\BIBentrySTDinterwordspacing

\bibitem{OECD2020}
\BIBentryALTinterwordspacing
O.~for Economic Co-operation and Development, ``Oecd economic outlook, june 2020,'' 2020, paris. [Online]. Available: \url{https://doi.org/10.1787/0d1d1e2e-en}
\BIBentrySTDinterwordspacing

\bibitem{IMF2021}
\BIBentryALTinterwordspacing
I.~M. Fund, ``World economic outlook: Recovery during a pandemic,'' 2021, washington, DC. [Online]. Available: \url{https://www.imf.org/en/Publications/WEO/Issues/2021/10/12/world-economic-outlook-october-2021}
\BIBentrySTDinterwordspacing

\bibitem{EBA2020}
\BIBentryALTinterwordspacing
E.~B. Authority, ``Eba report on the implementation of ifrs~9 by eu institutions,'' European Banking Authority, Tech. Rep., 2023. [Online]. Available: \url{https://www.eba.europa.eu/sites/default/files/2023-11/25b12d35-9c28-4335-a589-166c77198920/Final%20Report%20on%20IFRS9%20implementation%20by%20EU%20institutions.pdf}
\BIBentrySTDinterwordspacing

\bibitem{BIS2021}
\BIBentryALTinterwordspacing
B.~for International~Settlements, ``Credit risk modelling current practices and applications,'' BIS, Tech. Rep., 1999. [Online]. Available: \url{https://www.bis.org/publ/bcbs49.pdf}
\BIBentrySTDinterwordspacing

\bibitem{Borio2014}
\BIBentryALTinterwordspacing
C.~Borio, M.~Drehmann, and K.~Tsatsaronis, ``Stress-testing macro stress testing: Does it live up to expectations?'' \emph{Journal of Financial Stability}, vol.~12, pp. 3--15, 2014. [Online]. Available: \url{https://doi.org/10.1016/j.jfs.2013.06.001}
\BIBentrySTDinterwordspacing

\bibitem{Orphanides2002}
\BIBentryALTinterwordspacing
A.~Orphanides and S.~van Norden, ``The unreliability of output-gap estimates in real time,'' \emph{Review of Economics and Statistics}, vol.~84, no.~4, pp. 569--583, 2002. [Online]. Available: \url{https://doi.org/10.1162/003465302760556422}
\BIBentrySTDinterwordspacing

\bibitem{Hamilton1994}
\BIBentryALTinterwordspacing
J.~D. Hamilton, \emph{Time Series Analysis}.\hskip 1em plus 0.5em minus 0.4em\relax Princeton, NJ: Princeton University Press, 1994. [Online]. Available: \url{https://doi.org/10.1515/9780691218632}
\BIBentrySTDinterwordspacing

\bibitem{Kreinin2001}
A.~Kreinin and M.~Sidelnikova, ``Regularization algorithms for transition matrices,'' \emph{Algo Research Quarterly}, vol.~4, no. 1/2, pp. 23--40, 2001.

\bibitem{JarrowLandoTurnbull1997}
\BIBentryALTinterwordspacing
R.~A. Jarrow, D.~Lando, and S.~M. Turnbull, ``A markov model for the term structure of credit risk spreads,'' \emph{Review of Financial Studies}, vol.~10, no.~2, pp. 481--523, 1997. [Online]. Available: \url{https://doi.org/10.1093/rfs/10.2.481}
\BIBentrySTDinterwordspacing

\bibitem{DuffieSingleton1999}
\BIBentryALTinterwordspacing
D.~Duffie and K.~J. Singleton, ``Modeling term structures of defaultable bonds,'' \emph{Review of Financial Studies}, vol.~12, no.~4, pp. 687--720, 1999. [Online]. Available: \url{https://doi.org/10.1093/rfs/12.4.687}
\BIBentrySTDinterwordspacing

\bibitem{Lando2004}
\BIBentryALTinterwordspacing
D.~Lando, \emph{Credit Risk Modeling: Theory and Applications}.\hskip 1em plus 0.5em minus 0.4em\relax Princeton, NJ: Princeton University Press, 2004. [Online]. Available: \url{https://doi.org/10.1515/9781400829194}
\BIBentrySTDinterwordspacing

\bibitem{Kavvathas2001}
\BIBentryALTinterwordspacing
D.~Kavvathas, ``Estimating credit rating transition probabilities for corporate bonds,'' \emph{AFA}, 2001. [Online]. Available: \url{https://papers.ssrn.com/sol3/papers.cfm?abstract_id=252517}
\BIBentrySTDinterwordspacing

\bibitem{Christensen2004}
\BIBentryALTinterwordspacing
J.~H.~E. Christensen, E.~B. Hansen, and D.~Lando, ``Confidence sets for continuous-time rating transition matrices,'' \emph{Journal of Banking \& Finance}, vol.~28, no.~11, pp. 2575--2602, 2004. [Online]. Available: \url{https://doi.org/10.1016/j.jbankfin.2004.06.003}
\BIBentrySTDinterwordspacing

\bibitem{Kalman1960}
\BIBentryALTinterwordspacing
R.~E. Kalman, ``A new approach to linear filtering and prediction problems,'' \emph{Transactions of the ASME--Journal of Basic Engineering}, vol.~82, no.~1, pp. 35--45, 1960. [Online]. Available: \url{https://doi.org/10.1115/1.3662552}
\BIBentrySTDinterwordspacing

\bibitem{Maybeck1979}
\BIBentryALTinterwordspacing
P.~S. Maybeck, \emph{Stochastic Models, Estimation and Control, Vol. 1}.\hskip 1em plus 0.5em minus 0.4em\relax Academic Press, 1979. [Online]. Available: \url{https://www.cs.unc.edu/~welch/kalman/media/pdf/maybeck_ch1.pdf}
\BIBentrySTDinterwordspacing

\bibitem{Simon2006}
\BIBentryALTinterwordspacing
D.~Simon, \emph{Optimal State Estimation: Kalman, H Infinity, and Nonlinear Approaches}.\hskip 1em plus 0.5em minus 0.4em\relax New York: Wiley, 2006. [Online]. Available: \url{https://doi.org/10.1002/0470045345}
\BIBentrySTDinterwordspacing

\bibitem{Nickell2000}
\BIBentryALTinterwordspacing
P.~Nickell, W.~Perraudin, and S.~Varotto, ``Stability of rating transitions,'' \emph{Journal of Banking \& Finance}, vol.~24, no. 1-2, pp. 203--227, 2000. [Online]. Available: \url{https://doi.org/10.1016/S0378-4266(99)00057-6}
\BIBentrySTDinterwordspacing

\bibitem{LandoSkodeberg2002}
\BIBentryALTinterwordspacing
D.~Lando and T.~M. Sk{\o}deberg, ``Analyzing rating transitions and rating drift with continuous observations,'' \emph{Journal of Banking \& Finance}, vol.~26, no. 2-3, pp. 423--444, 2002. [Online]. Available: \url{https://doi.org/10.1016/S0378-4266(01)00228-X}
\BIBentrySTDinterwordspacing

\bibitem{Israel2001}
\BIBentryALTinterwordspacing
R.~B. Israel, J.~S. Rosenthal, and J.~Z. Wei, ``Finding generators for markov chains via empirical transition matrices, with applications to credit ratings,'' \emph{Mathematical Finance}, vol.~11, no.~2, pp. 245--265, 2001. [Online]. Available: \url{https://doi.org/10.1111/1467-9965.00114}
\BIBentrySTDinterwordspacing

\bibitem{Bluhm2016}
\BIBentryALTinterwordspacing
C.~Bluhm, L.~Overbeck, and C.~Wagner, \emph{Introduction to Credit Risk Modeling}, 2nd~ed.\hskip 1em plus 0.5em minus 0.4em\relax Chapman \& Hall/CRC, 2016. [Online]. Available: \url{https://doi.org/10.1201/9781584889939}
\BIBentrySTDinterwordspacing

\bibitem{Bangia2002}
\BIBentryALTinterwordspacing
A.~Bangia, F.~X. Diebold, A.~Kronimus, C.~Schagen, and T.~Schuermann, ``Ratings migration and the business cycle, with applications to credit portfolio stress testing,'' \emph{Journal of Banking \& Finance}, vol.~26, no. 2-3, pp. 445--474, 2002. [Online]. Available: \url{https://doi.org/10.1016/S0378-4266(01)00229-1}
\BIBentrySTDinterwordspacing

\bibitem{Pesaran2006}
\BIBentryALTinterwordspacing
M.~H. Pesaran, T.~Schuermann, B.-J. Treutler, and S.~M. Weiner, ``Macroeconomic dynamics and credit risk: a global perspective,'' \emph{Journal of Money, Credit and Banking}, pp. 1211--1261, 2006. [Online]. Available: \url{: https://www.jstor.org/stable/3839005}
\BIBentrySTDinterwordspacing

\bibitem{Duffie2007}
\BIBentryALTinterwordspacing
D.~Duffie, L.~Saita, and K.~Wang, ``Multi-period corporate default prediction with stochastic covariates,'' \emph{Journal of Financial Economics}, vol.~83, no.~3, pp. 635--665, 2007. [Online]. Available: \url{https://doi.org/10.1016/j.jfineco.2005.10.011}
\BIBentrySTDinterwordspacing

\bibitem{BCBS2005}
\BIBentryALTinterwordspacing
B.~C. on~Banking~Supervision, ``Studies on the validation of internal rating systems,'' Basel Committee on Banking Supervision, Tech. Rep.~14, 2005. [Online]. Available: \url{https://www.bis.org/publ/bcbs_wp14.pdf}
\BIBentrySTDinterwordspacing

\bibitem{NelsonPlosser1982}
\BIBentryALTinterwordspacing
C.~R. Nelson and C.~I. Plosser, ``Trends and random walks in macroeconomic time series: Some evidence and implications,'' \emph{Journal of Monetary Economics}, vol.~10, no.~2, pp. 139--162, 1982. [Online]. Available: \url{https://doi.org/10.1016/0304-3932(82)90012-5}
\BIBentrySTDinterwordspacing

\bibitem{Vasicek1977}
\BIBentryALTinterwordspacing
O.~Vasicek, ``An equilibrium characterization of the term structure,'' \emph{Journal of Financial Economics}, vol.~5, no.~2, pp. 177--188, 1977. [Online]. Available: \url{https://doi.org/10.1016/0304-405X(77)90016-2}
\BIBentrySTDinterwordspacing

\bibitem{Seneta2006}
\BIBentryALTinterwordspacing
E.~Seneta, \emph{Non-Negative Matrices and Markov Chains}, ser. Springer Series in Statistics.\hskip 1em plus 0.5em minus 0.4em\relax Springer, 2006, revised edition. [Online]. Available: \url{https://doi.org/10.1007/0-387-32792-4}
\BIBentrySTDinterwordspacing

\bibitem{Mitrophanov2005}
\BIBentryALTinterwordspacing
A.~Y. Mitrophanov, ``Sensitivity and convergence of uniformly ergodic markov chains,'' \emph{Journal of Applied Probability}, vol.~42, no.~4, pp. 1003--1014, 2005. [Online]. Available: \url{https://doi.org/10.1239/jap/1134587812}
\BIBentrySTDinterwordspacing

\bibitem{KushnerYin2003}
\BIBentryALTinterwordspacing
H.~J. Kushner and G.~G. Yin, \emph{Stochastic approximation and recursive algorithms and applications}.\hskip 1em plus 0.5em minus 0.4em\relax Springer, 2003. [Online]. Available: \url{https://nzdr.ru/data/media/biblio/kolxoz/M/MN/MNs/Kushner%20H.,%20Yin%20G.%20Stochastic%20Approximation%20and%20Recursive%20Algorithms%20and%20Applications%20(Springer,%202003)(497s)_MNs_.pdf}
\BIBentrySTDinterwordspacing

\bibitem{Doucet2001}
\BIBentryALTinterwordspacing
A.~Doucet, N.~de~Freitas, and N.~Gordon, Eds., \emph{Sequential Monte Carlo Methods in Practice}.\hskip 1em plus 0.5em minus 0.4em\relax New York: Springer, 2001. [Online]. Available: \url{https://doi.org/10.1007/978-1-4757-3437-9}
\BIBentrySTDinterwordspacing

\bibitem{Durbin2012}
\BIBentryALTinterwordspacing
J.~Durbin and S.~J. Koopman, \emph{Time Series Analysis by State Space Methods}, 2nd~ed.\hskip 1em plus 0.5em minus 0.4em\relax Oxford University Press, 2012. [Online]. Available: \url{https://doi.org/10.1093/acprof:oso/9780199641178.001.0001}
\BIBentrySTDinterwordspacing

\bibitem{Basel2018Stress}
\BIBentryALTinterwordspacing
B.~C. on~Banking~Supervision, ``Stress testing principles,'' Basel Committee on Banking Supervision, Tech. Rep., October 2018. [Online]. Available: \url{https://www.bis.org/bcbs/publ/d450.pdf}
\BIBentrySTDinterwordspacing

\bibitem{NGFS2020}
\BIBentryALTinterwordspacing
N.~C.~S. for Central~Banks and Supervisors, ``Ngfs climate scenarios for central banks and supervisors,'' Network for Greening the Financial System (NGFS), Tech. Rep., June 2020. [Online]. Available: \url{https://www.ngfs.net/en/publications-and-statistics/publications/ngfs-climate-scenarios-central-banks-and-supervisors}
\BIBentrySTDinterwordspacing

\bibitem{Tsay2010}
\BIBentryALTinterwordspacing
R.~S. Tsay, \emph{Analysis of Financial Time Series}, 3rd~ed.\hskip 1em plus 0.5em minus 0.4em\relax Hoboken, NJ: Wiley, 2010. [Online]. Available: \url{https://doi.org/10.1002/9780470644560}
\BIBentrySTDinterwordspacing

\end{thebibliography}

\end{document}